\newcommand{\field}[1]{\mathbb{#1}}
\newcommand{\ve}[1]{\mathbf{#1}}
\newcommand{\vs}[1]{\boldsymbol{#1}}
\newcommand{\vvs}[1]{\underline{\boldsymbol{#1}}}
\newcommand{\cket}[1]{\vert #1 \rangle}
\newcommand{\bra}[1]{\langle #1 \vert}
\providecommand{\abs}[1]{{\lvert#1\rvert}}
\newcommand{\cmpl}[1]{\overline{#1}}
\newcommand{\DscGrp}[1]{\mathrm{#1}}
\newcommand{\pconj}{\dagger}
\DeclareMathOperator{\Tr}{Tr}
\DeclareMathOperator{\Conv}{Conv}
\DeclareMathOperator{\Span}{Span}
\DeclareMathOperator{\downset}{\downarrow}
\DeclareMathOperator{\upset}{\uparrow}
\newcommand{\finereq}{\preceq}
\newcommand{\coarsereq}{\succeq}
\newcommand{\equals}{{\;\;=\;\;}}
\newcommand{\lequals}{{\;\;\leq\;\;}}
\newcommand{\equalsref}[1]{\overset{\eqref{#1}}{\equals}}
\newcommand{\lequalsref}[1]{\overset{\eqref{#1}}{\lequals}}
\newcommand{\inliff}{\,\Leftrightarrow\,}
\newcommand{\inlthen}{\,\Rightarrow\,}
\newcommand{\dspiff}{\;\;\Longleftrightarrow\;\;}
\newcommand{\dspthen}{\;\;\Longrightarrow\;\;}
\newcommand{\dspdef}{\;\;\overset{\text{def.}}{\Longleftrightarrow}\;\;}
\newcommand{\dispt}[1]{\;\;\text{#1}\;\;}
\newcommand{\cnvroof}[1]{{#1}^{\cup}}
\newcommand{\isom}{\cong}
\providecommand{\abs}[1]{\lvert#1\rvert}
\newcommand{\set}[1]{\{ #1 \}}
\newcommand{\bigset}[1]{\bigl\{ #1 \bigr\}}
\newcommand{\Bigset}[1]{\Bigl\{ #1 \Bigr\}}
\newcommand{\sset}[2]{\set{ #1 \;\vert\; #2 }}
\newcommand{\bigsset}[2]{\bigset{ #1 \;\big\vert\; #2 }}
\newcommand{\Bigsset}[2]{\Bigset{ #1 \;\Big\vert\; #2 }}
\newcommand{\mset}[1]{\set{#1}}
\newcommand{\bigmset}[1]{\bigset{#1}}
\newcommand{\smset}[2]{\mset{ #1 \;\vert\; #2 }}
\newcommand{\bigsmset}[2]{\bigmset{ #1 \;\big\vert\; #2 }}
\newcommand{\pinv}[1]{\hat{#1}}
\newtheorem{thm}{Theorem}
\newtheorem{lem}[thm]{Lemma}
\newtheorem{cor}[thm]{Corollary}
\begin{document}
\title{k-stretchability of entanglement,
and the duality of k-separability and k-producibility}
\author{{Sz}il{\'a}rd {Sz}alay}
\email{szalay.szilard@wigner.mta.hu}
\affiliation{
Strongly Correlated Systems ``Lend{\"u}let'' Research Group,
Wigner Research Centre for Physics, 
29-33, Konkoly-Thege Mikl{\'o}s str., H-1121 Budapest, Hungary}
\orcid{0000-0002-1214-6811}

\begin{abstract}
The notions of 
$k$-separability and $k$-producibility
are useful and expressive tools for the characterization of
entanglement in multipartite quantum systems,
when a more detailed analysis would be infeasible or simply needless.
In this work we reveal a partial duality between them,
which is valid also for their correlation counterparts.
This duality can be seen from a much wider perspective,
when we consider the entanglement and correlation properties which are invariant under the permutations of the subsystems.
These properties are labeled by Young diagrams, which we endow with a refinement-like partial order,
to build up their classification scheme.
This general treatment reveals a new property, which we call $k$-stretchability,
being sensitive in a balanced way to 
both the maximal size of correlated (or entangled) subsystems
 and the minimal number of subsystems uncorrelated with (or separable from) one another.
\end{abstract}


\maketitle{}

\tableofcontents{}

\section{Introduction}
\label{sec:intro}


The investigation of the \emph{correlations} among the parts of a composite physical system is
an essential tool in statistical physics.
If the system is described by quantum mechanics, 
then nonclassical forms of correlations arise, 
the most notable is \emph{entanglement} \cite{Schrodinger-1935a,Schrodinger-1935b,Horodecki-2009,Modi-2010}.
It is the main resource of \emph{quantum information theory} \cite{Nielsen-2000,Petz-2008,Wilde-2013},
and its nonclassical properties, playing important role also in many-body physics \cite{Coffman-2000,Koashi-2004,Eisert-2010},
make it influential in the behavior and characterization of \emph{strongly correlated systems}
\cite{Amico-2008,Legeza-2004,Szalay-2015a,Szalay-2017}.

The correlation and entanglement between \emph{two parts} of a system is relatively well-understood \cite{Werner-1989,Bennett-1996a,Bennett-1996b}.
At least for pure states, the \emph{convertibility} and the \emph{classification} with respect to (S)LOCC
((stochastic) local operations and classical communication \cite{Bennett-1996b,Werner-1989,Chitambar-2014,Dur-2000b})
shows a simple structure \cite{Nielsen-1999,Vidal-1999},
and there are basically unique correlation and entanglement \emph{measures} \cite{Vidal-2000,Horodecki-2001,Plenio-2007}.
The \emph{multipartite} case is much more complicated \cite{Dur-2000b,Verstraete-2002,Luque-2003,Eltschka-2014,Yamasaki-2018}.
Even for pure states,
the nonexistence of a maximally entangled reference state \cite{Hebenstreit-2016,Spee-2016} 
and the involved nature of state-transformations in general \cite{Gour-2017,Sauerwein-2018}
seem to make the (S)LOCC-based \emph{classification} practically unaccomplishable,
and the standard (S)LOCC paradigm less enlightening so less expressive.
Taking into account partial entanglement (partial separability)
\cite{Dur-1999,Dur-2000a,Acin-2001,Nagata-2002,Seevinck-2008,
Szalay-2011,Szalay-2012,Szalay-2015b},
or partial correlations \cite{Szalay-2017,Szalay-2018,Brandejs-2018} only,
leads to a combinatoric \cite{Davey-2002,Roman-2008,Stanley-2012}, discrete classification
(based on the \emph{lattice of set partitions} \cite{Roman-2008}),
endowed naturally with a well-behaving set of correlation and entanglement measures, 
characterizing the finite number of properties \cite{Szalay-2015b,Szalay-2017}.

Even the partial correlation and partial entanglement properties 
are getting too involved rapidly, with the increasing of the number of subsystems.
Singling out particularly expressive properties,
we consider the \emph{$k$-partitionability} and \emph{$k$-producibility} of correlation and of entanglement.
For $n$ subsystems, $k$ ranges from $1$ to $n$,
so the number of these properties scales linearly with the number of subsystems,
moreover, their structures are the simplest possible ones, chains.
(In the case of entanglement,
$k$-partitionability is called \emph{$k$-separability} \cite{Acin-2001,Seevinck-2008},
while $k$-producibility \cite{Seevinck-2001,Guhne-2005,Guhne-2006,Toth-2010} 
is also called \emph{entanglement depth} \cite{Sorensen-2001,Lucke-2014,Chen-2016}.
Here we use both concepts for correlation and also for entanglement,
this is why we use the naming
``$k$-partitionability of correlation''  and ``$k$-producibility of correlation'',
``$k$-partitionability of entanglement'' and ``$k$-producibility of entanglement'' \cite{Szalay-2015b,Szalay-2017,Szalay-2018}.)
These characterize the strength of two different (one-parameter-) aspects of multipartite correlation and entanglement:
those which cannot be restricted inside at least $k$ parts,
and
those which cannot be restricted inside parts of size at most $k$, respectively.
The concepts of $k$-partitionability and $k$-producibility of entanglement 
found application in spin chains \cite{Guhne-2005,Guhne-2006},
appeared in quantum nonlocality \cite{Curchod-2015}, 
leading to device independent certification of them \cite{Liang-2015,Lin-2019},
and were also demonstrated in quantum optical experiment \cite{Lu-2018}.
$k$-producibility also plays particularly important role in quantum metrology \cite{Toth-2014}.
$k$-producibly entangled states for larger $k$ lead to higher sensitivity,
so better precision in phase estimation,
which has been illustrated in experiments \cite{Hyllus-2012,Gessner-2018,Qin-2019},
and which also leads to $k$-producibility entanglement criteria \cite{Toth-2012}.

$k$-partitionability and $k$-producibility
are special cases of properties \emph{invariant under the permutation of the subsystems}.
The permutation invariant correlation properties are
based on the \emph{integer partitions} \cite{Andrews-1984,Stanley-2012}, also known (represented) as Young diagrams.
For the description of the structure of these,
we introduce a new order over the integer partitions, called refinement,
induced by the refinement order over set partitions,
used for the description of the structure of the partial correlation or entanglement properties \cite{Szalay-2015b}.
The structure of the permutation invariant properties contains the chains of $k$-partitionability and $k$-producibility,
and it is simpler than the structure describing all the properties. 
The number of these scales still rapidly
with the number of subsystems,
but slower than the number of all the partial correlation or entanglement properties.

The general treatment of the permutation invariant properties
reveals a partial duality between $k$-partitionability and $k$-producibility,
which is the manifestation of a duality on a deeper level, relating important properties of Young diagrams,
by which $k$-partitionability and $k$-producibility are formulated.

The general treatment of the permutation invariant properties
reveals also a particularly expressive new property, which we call \emph{$k$-stretchability},
leading to the definitions of ``$k$-stretchability of correlation'' and ``$k$-stretchability of entanglement''.
It combines the advantages 
of $k$-partitionability and $k$-producibility.
Namely, $k$-partitionability is about the number of subsystems uncorrelated with (or separable from) one another, and 
  not sensitive to the size   of correlated   (or entangled) subsystems \cite{Guhne-2005};
and     $k$-producibility    is about the size   of the largest correlated (or entangled) subsystem, and
  not sensitive to the number of subsystems uncorrelated with (or separable from) one another;
while $k$-stretchability is sensitive to both of these, in a balanced way.
The price to pay for this is that we have roughly twice as many $k$-stretchability properties, $k$ goes from $-(n-1)$ to $n-1$.
However, $k$-stretchability is linearly ordered with $k$, 
while the relations between $k$-partitionability and $k$-producibility are far more complicated.

The organization of this work is as follows.
In Section~\ref{sec:general}, we recall the structure of multipartite correlation and entanglement.
In Section~\ref{sec:perminv}, we construct the parallel structure for the permutation invariant case.
In Section~\ref{sec:altperminv}, we show another introduction of the permutation invariant classification, 
which can although be considered simpler, but less transparent.
In Section~\ref{sec:kpps}, we recall $k$-partitionability and $k$-producibility,
introduce $k$-stretchability, and show how these properties are related to each other.
In Section~\ref{sec:summ}, summary, remarks and open questions are listed.
In Appendix~\ref{app:structPI}, we work out the ``coarsening'' step and some other tools, used in the main text.
In Appendix~\ref{app:misc}, we present the proofs of some further propositions given in the main text.

In the course of the presentation,
the abstract mathematical structure of the correlation and entanglement properties 
(given in terms of lattice theoretic constructions)
is well-separated
from the concrete hierarchies of the state sets and measures.
This leads to a transparent construction,
which is easy to restrict to the permutation invariant case later.
For the convenience of the reader, we provide a short summary on the notations.
The abstract correlation and entanglement properties are labeled 
by Greek letters $\xi,\upsilon$.
On the three levels of the construction, these are typesetted 
as $\xi,\vs{\xi},\vvs{\xi}$ for the general case,
and as $\pinv{\xi},\pinv{\vs{\xi}},\pinv{\vvs{\xi}}$ for the permutation invariant case.
On the first two levels of the construction,
there are sets of quantum states of given correlation and entanglement properties,
typesetted as 
$\mathcal{D}_{\xi\text{-unc}}$,
$\mathcal{D}_{\xi\text{-sep}}$,
$\mathcal{D}_{\vs{\xi}\text{-unc}}$,
$\mathcal{D}_{\vs{\xi}\text{-sep}}$ for the general case, and
$\mathcal{D}_{\pinv{\xi}\text{-unc}}$,
$\mathcal{D}_{\pinv{\xi}\text{-sep}}$,
$\mathcal{D}_{\pinv{\vs{\xi}}\text{-unc}}$,
$\mathcal{D}_{\pinv{\vs{\xi}}\text{-sep}}$ for the permutation invariant case.
On the first two levels of the construction,
there are LO(CC)-monotonic measures of given correlation and entanglement properties,
typesetted as
$C_{\xi}$,
$E_{\xi}$,
$C_{\vs{\xi}}$,
$E_{\vs{\xi}}$ for the general case, and
$C_{\pinv{\xi}}$,
$E_{\pinv{\xi}}$,
$C_{\pinv{\vs{\xi}}}$,
$E_{\pinv{\vs{\xi}}}$ for the permutation invariant case.
On the third level of the construction,
there are classes (disjoint sets) of quantum states of given correlation and entanglement properties,
typesetted as 
$\mathcal{C}_{\vvs{\xi}\text{-unc}}$,
$\mathcal{C}_{\vvs{\xi}\text{-sep}}$ for the general case, and
$\mathcal{C}_{\pinv{\vvs{\xi}}\text{-unc}}$,
$\mathcal{C}_{\pinv{\vvs{\xi}}\text{-sep}}$ for the permutation invariant case.
The abstract
correlation and entanglement properties are ordered, which is denoted by $\preceq$ on the three levels of the construction
in both the general and the permutation invariant cases.
On the first two levels of the construction,
thanks to the monotonicity properties,
these manifest themselves as $\subseteq$ and $\geq$ for the state sets and for the measures, respectively.
On the third level of the construction, 
these manifest themselves as LO(CC) conversion results, for which we do not introduce notation.

\section{Multipartite correlation and entanglement}
\label{sec:general}

Here we recall the structure of the classification and quantification 
of multipartite correlation and entanglement \cite{Szalay-2015b,Szalay-2017,Szalay-2018}.
Our goal is to do this in the way sufficient to see how the
permutation invariant properties can be formulated parallel to this in the next section.

\subsection{Level 0: subsystems}
\label{sec:general.L0}

The classification scheme we present here is rather general.
The elementary and composite subsystems can be any discrete finite systems
possessing probabilistic description,
supposed that the joint systems can be represented by the use of \emph{tensor products,}
which is the basic tool in the constructions.
Such systems can be 
distinguishable quantum systems,
second quantized bosonic systems,
second quantized fermionic systems with fermion number parity superselection rule imposed,
or even classical systems, with significant simplification in the structure in the latter case \cite{Szalay-2018}.

Let $L$ be the set of the labels of $\abs{L}=n$ \emph{elementary subsystems.}
All the (possibly composite) \emph{subsystems} are then labeled by the subsets $X\subseteq L$,
the set of which, $P_0:=2^L$, naturally possesses a Boolean lattice structure 
with respect to the inclusion $\subseteq$, which is now denoted with $\finereq$.
For every elementary subsystem $i\in L$,
let the Hilbert space $\mathcal{H}_i$ be associated with it,
where $1<\dim\mathcal{H}_i<\infty$.
From these, 
for every subsystem $X\in P_0$,
the Hilbert space associated with it is $\mathcal{H}_X = \bigotimes_{i\in X}\mathcal{H}_i$. 
(For the trivial subsystem $X=\emptyset$, 
we have the one-dimensional Hilbert space $\mathcal{H}_\emptyset=\Span\set{\cket{\psi}}\isom\field{C}$
\cite{Szalay-2015b}.)
The \emph{states} of the subsystems $X\in P_0$
are given by density operators (positive semidefinite operators of trace $1$)
acting on $\mathcal{H}_X$,
the sets of those are denoted with $\mathcal{D}_X$.

The \emph{mixedness} of a state $\varrho_X\in\mathcal{D}_X$ of a subsystem $X$
can be characterized by the \emph{von Neumann entropy} \cite{Neumann-1927,Ohya-1993,Bengtsson-2006,Wilde-2013}
\begin{subequations}
\begin{equation}
\label{eq:S}
S(\varrho_X)=-\Tr\varrho_X\ln\varrho_X;
\end{equation}
and the \emph{distinguishability} of two states $\varrho_X,\sigma_X\in\mathcal{D}_X$ of subsystem $X$
can be characterized by the \emph{Umegaki relative entropy} \cite{Umegaki-1962,Lindblad-1973,Hiai-1991,Ohya-1993,Bengtsson-2006,Petz-2008,Wilde-2013} 
\begin{equation}
\label{eq:D}
D(\varrho_X\Vert\sigma_X)=\Tr\varrho_X(\ln\varrho_X-\ln\sigma_X).
\end{equation}
\end{subequations}
The von Neumann entropy is monotone increasing in bistochastic quantum channels,
and the relative entropy is monotone decreasing in quantum channels \cite{Lindblad-1973,Bengtsson-2006,Petz-2008}
and $S(\varrho_X)=0 \inliff \varrho=\cket{\psi}\bra{\psi}$,
$D(\varrho_X\Vert\sigma_X)=0 \inliff \varrho_X=\sigma_X$.

\subsection{Level~I: set partitions}
\label{sec:general.LI}

For handling the different possible splits of a composite system into subsystems,
we need to use the mathematical notion of (set) partition \cite{Roman-2008} of the system $L$.
The \emph{partitions} of $L$ are sets of subsystems,
\begin{subequations}
\begin{equation}
\xi=\set{X_1,X_2,\dots,X_{\abs{\xi}}}\in \Pi(L), 
\end{equation}
where the \emph{parts}, $X\in\xi$, are nonempty disjoint subsystems, for which $\bigcup_{X\in\xi}X=L$.
The set of the partitions of $L$ is denoted with
\begin{equation}
\label{eq:PI}
P_\text{I}:=\Pi(L).
\end{equation}
Its size is given by the \emph{Bell numbers} \cite{oeisA000110}, which are rapidly growing with $n$.
There is a natural partial order over the partitions,
which is called \emph{refinement} $\finereq$,
given as
\begin{equation}
\label{eq:poI}
\upsilon\finereq\xi \dspdef \forall Y\in\upsilon, \exists X\in\xi \dispt{s.t.} Y\subseteq X.
\end{equation}
\end{subequations}
(For illustrations, see Figure~\ref{fig:Ps23}.)
This grabs our natural intuition of comparing splits of systems.
Note that the refinement is a partial order only,
there are pairs of partitions which cannot be ordered,
for example, 
$12|3 \npreceq 13|2$ and
$12|3 \nsucceq 13|2$, see Figure~\ref{fig:Ps23}.
(We use a simplified notation for the partitions, for example, $12|3\equiv\set{\set{1,2},\set{3}}$,
where this does not cause confusion.)
Note that $P_\text{I}$ with the refinement is a lattice,
with minimal and maximal elements $\bot=\sset{\set{i}}{i\in L}$ and $\top=\set{L}$,
and the least upper and greatest lower bounds, $\vee$ and $\wedge$, can be constructed \cite{Roman-2008,Stanley-2012}.

With respect to the partitions $\xi\in P_\text{I}$,
we can define the partial correlation and entanglement properties,
as well as the measures quantifying them.

The \emph{$\xi$-uncorrelated states} are those which are products
with respect to the partition $\xi$,
\begin{subequations}
\label{eq:DI}
\begin{equation}
\label{eq:DuncI}
\begin{split}
&\mathcal{D}_{\xi\text{-unc}}:=\\
&\Bigsset{\varrho_L\in\mathcal{D}_L}{\forall X\in\xi, \exists \varrho_X\in\mathcal{D}_X:\varrho_L=\bigotimes_{X\in\xi}\varrho_X};
\end{split}
\end{equation}
the others are \emph{$\xi$-correlated states}.
With respect to the finest partition $\xi=\bot$, we call
$\bot$-uncorrelated states $\bigotimes_{i\in L}\varrho_i$ simply uncorrelated states.
The expectation values of all $\xi$-local observables factorize if and only if the state is $\xi$-uncorrelated.
The $\xi$-uncorrelated states are exactly those 
 which can be prepared from uncorrelated states by $\xi$-local operations.
(This is abbreviated as $\xi$-LO. With respect to the finest partition $\xi=\bot$, we write simply LO.)
The \emph{$\xi$-separable states} are 
convex combinations (or statistical mixtures) of $\xi$-uncorrelated states,
\begin{equation}
\label{eq:DsepI}
\mathcal{D}_{\xi\text{-sep}}:=\Conv\mathcal{D}_{\xi\text{-unc}}
\equiv\Bigsset{\sum_j p_j\varrho_j}{\varrho_j\in \mathcal{D}_{\xi\text{-unc}}};
\end{equation}
\end{subequations}
the others are \emph{$\xi$-entangled states}.
(Here, and in the entire text, 
$\sset{p_j}{j=1,2,\dots,m}$ is a probability distribution, 
$p_j\geq0$ and $\sum_j p_j=1$.)
The $\xi$-separable states are exactly those
 which can be prepared from uncorrelated states 
by $\xi$-local operations and classical communication among the parts $X\in\xi$.
(This is abbreviated as $\xi$-LOCC. With respect to the finest partition $\bot$, we write simply LOCC.)
Another point of view is that 
$\xi$-separable states are exactly those 
which can be prepared from uncorrelated states 
by mixtures of $\xi$-LOs.
We also have that
$\mathcal{D}_{\xi\text{-unc}}$ is closed under $\xi$-LO,
$\mathcal{D}_{\xi\text{-sep}}$ is closed under $\xi$-LOCC \cite{Szalay-2015b,Szalay-2018}.
Clearly, if a state is product with respect to a partition,
then it is also product with respect to any coarser partition,
it is always free to forget about some tensor product signs.
This means that these properties 
show the same lattice structure as the partitions \cite{Szalay-2015b,Szalay-2018}, $P_\text{I}$,
that is,
\begin{subequations}
\label{eq:oisomDI}
\begin{align}
\label{eq:oisomDuncI}
\upsilon\finereq\xi &\dspiff \mathcal{D}_{\upsilon\text{-unc}}\subseteq\mathcal{D}_{\xi\text{-unc}},\\
\label{eq:oisomDsepI}
\upsilon\finereq\xi &\dspiff \mathcal{D}_{\upsilon\text{-sep}}\subseteq\mathcal{D}_{\xi\text{-sep}}.
\end{align}
\end{subequations}

\begin{figure}\centering
\includegraphics{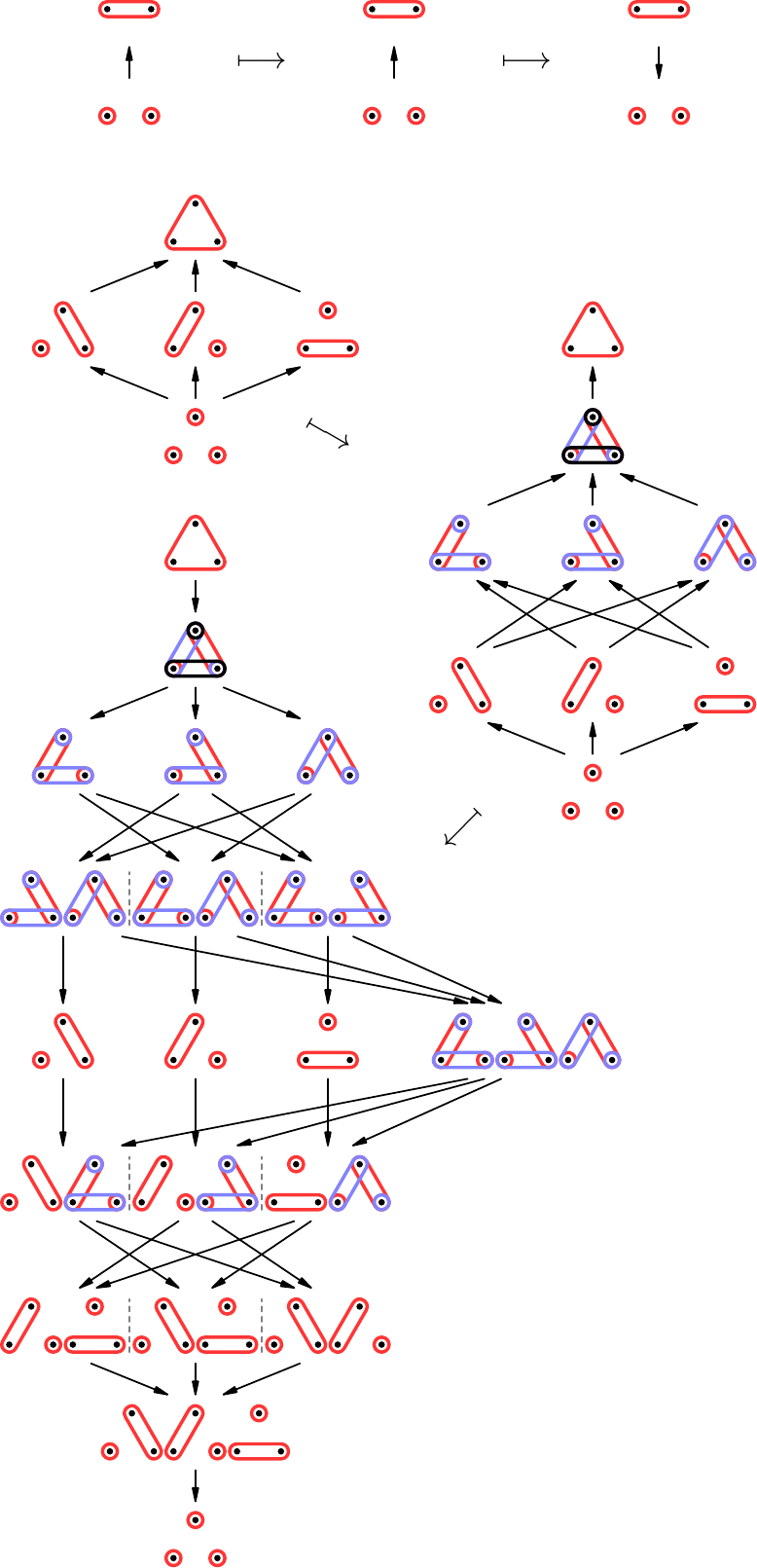}
\caption{The lattices of the three-level structure of multipartite correlation and entanglement for $n=2$ and $3$.
Only the maximal elements of the down-sets of $P_\text{I}$ are shown (with different colors) in $P_\text{II}$, while
only the minimal elements of the up-sets of $P_\text{II}$ are shown (side by side) in $P_\text{III}$.
The partial orders \eqref{eq:poI}, \eqref{eq:poII} and \eqref{eq:poIII} are represented by consecutive arrows.}\label{fig:Ps23}
\end{figure}

One can define the corresponding (information-geometry based) 
correlation and entanglement measures \cite{Szalay-2015b,Szalay-2017} 
for all $\xi$-correlation and $\xi$-entanglement.
These are the most natural generalizations of
the \emph{mutual information} \cite{Petz-2008,Wilde-2013}, 
the \emph{entanglement entropy} \cite{Bennett-1996a},
and the \emph{entanglement of formation} \cite{Bennett-1996b}
for Level~I of the multipartite case.

The \emph{$\xi$-correlation} of a state $\varrho$ is
its distinguishability by the relative entropy \eqref{eq:D} 
from the $\xi$-uncorrelated states 
\cite{Herbut-2004,Modi-2010,Szalay-2015b,Szalay-2017},
\begin{subequations}
\label{eq:measI}
\begin{equation}
\label{eq:CI}
C_\xi(\varrho) := 
\min_{\sigma\in \mathcal{D}_{\xi\text{-unc}}}D(\varrho||\sigma)
 = \sum_{X\in\xi}S(\varrho_X) - S(\varrho),
\end{equation}
given in terms of the von Neumann entropies \eqref{eq:S} of the
reduced, or marginal states $\varrho_X=\Tr_{L\setminus X}\varrho$ of subsystems $X\in\xi$.
Pure states of classical systems are always uncorrelated,
the correlation in pure states is of quantum origin, this is what we call entanglement \cite{Schrodinger-1935a,Schrodinger-1935b,Werner-1989,Horodecki-2009,Szalay-2015b,Szalay-2017,Szalay-2018}.
So, for pure states, the measure of entanglement should be that of correlation.
The \emph{$\xi$-entanglement} of a pure state $\pi=\cket{\psi}\bra{\psi}\in\mathcal{D}_L$ is
its $\xi$-correlation,
\begin{equation}
\label{eq:EpI}
E_\xi(\pi) := C_\xi \big\vert_\text{pure}(\pi) = {\sum}_{X\in\xi}S(\pi_X),
\end{equation}
with $\pi_X=\Tr_{L\setminus X}\pi$,
and for mixed states, one can use, for example, the convex roof extension \cite{Bennett-1996b,Uhlmann-1998,Uhlmann-2010}
to define the \emph{$\xi$-entanglement of formation}
\begin{equation}
\label{eq:EI}
\begin{split}
&E_\xi(\varrho) := \cnvroof{\bigl(C_\xi \big\vert_\text{pure}\bigr)}(\varrho)\\
&\equiv \min \Bigsset{{\sum}_j p_jC_\xi \big\vert_\text{pure}(\pi_j)}{{\sum}_j p_j\pi_j=\varrho},
\end{split}
\end{equation}
\end{subequations}
where ${\sum}_j p_j\pi_j$ is a pure decomposition of the state \cite{Szalay-2015b}.
We have that
$C_\xi$ is a correlation monotone (not increasing with respect to $\xi$-LO, for the proof see Appendix~\ref{app:misc.xiLOmon}),
$E_\xi$ is a strong entanglement monotone (convex and not increasing on average with respect to selective $\xi$-LOCC \cite{Vidal-2000,Horodecki-2001,Szalay-2015b}),
and both of these are faithful,
$C_\xi(\varrho)=0 \inliff \varrho\in\mathcal{D}_{\xi\text{-unc}}$,
$E_\xi(\varrho)=0 \inliff \varrho\in\mathcal{D}_{\xi\text{-sep}}$ \cite{Szalay-2015b},
moreover, they show the same lattice structure as the partitions, $P_\text{I}$,
that is,
\begin{subequations}
\label{eq:mmI}
\begin{align}
\label{eq:CmmI}
\upsilon\finereq\xi &\dspiff C_\upsilon\geq C_\xi,\\
\label{eq:EmmI}
\upsilon\finereq\xi &\dspiff E_\upsilon\geq E_\xi,
\end{align}
\end{subequations}
which is called \emph{multipartite monotonicity} \cite{Szalay-2015b,Szalay-2017}.

Note that, for two subsystems,
for the only nontrivial partition $1|2$, we have that $E_{1|2}(\pi) = 2S(\pi_1) = 2S(\pi_2)$ is just two times the usual \emph{entanglement entropy}.
However, we have a way of derivation completely different than the usual,
based on asymptotic LOCC convertibility from Bell-pairs \cite{Bennett-1996a}.
The usual way cannot be generalized to the multipartite scenario
(there is no ``reference state'', which was the Bell-pair in the bipartite scenario),
while our correlational, or statistical physical approach above could straightforwardly be generalized to the multipartite scenario
not only here, but also for the Level~II properties in the next subsystem.

\subsection{Level~II: multiple set partitions}
\label{sec:general.LII}

In multipartite entanglement theory,
it is necessary to handle mixtures of states uncorrelated with respect to different partitions~\cite{Acin-2001,Seevinck-2008,Szalay-2015b}.
For example, there are tripartite states which 
cannot be written as a mixture of a given kind of, e.g., $12|3$-uncorrelated states,
that is, not $12|3$-separable,
while can be written as a mixture of $12|3$-uncorrelated and $13|2$-uncorrelated states.
Such states should not be considered fully tripartite-entangled,
since there is no need for genuine tripartite entangled states in the mixture~\cite{Acin-2001,Seevinck-2008,Szalay-2012,Szalay-2015b}.
Also, if a state is $12|3$-separable and also $13|2$-separable,
it is not necessarily $1|2|3$-separable.
On the other hand, 
the order isomorphisms~\eqref{eq:oisomDuncI}-\eqref{eq:oisomDsepI} tell us that
if we consider states uncorrelated (or separable) with respect to a partition,
then we automatically consider states uncorrelated (or separable) with respect to all finer partitions.

To embed these requirements in the labeling of the multipartite correlation and entanglement properties,
we use the \emph{nonempty down-sets (nonempty ideals) of partitions} \cite{Szalay-2015b},
which are sets of partitions 
\begin{subequations}
\begin{equation}
\vs{\xi}=\set{\xi_1,\xi_2,\dots,\xi_{\abs{\vs{\xi}}}}\subseteq P_\text{I}
\end{equation}
which are closed downwards with respect to $\preceq$.
That is, if $\xi\in\vs{\xi}$, then for all partitions $\upsilon\finereq\xi$ we have $\upsilon\in\vs{\xi}$.
The set of the nonempty partition ideals 
is denoted with 
\begin{equation}
\label{eq:PII}
P_\text{II} :=\mathcal{O}_\downarrow(P_\text{I})\setminus\set{\emptyset}.
\end{equation}
This possesses a lattice structure with respect to the standard inclusion
as partial order, which we call \emph{refinement} again, and denote with $\preceq$ again,
\begin{equation}
\label{eq:poII}
\vs{\upsilon}\finereq\vs{\xi} \dspdef \vs{\upsilon}\subseteq\vs{\xi}.
\end{equation}
\end{subequations}
(For illustrations, see Figure~\ref{fig:Ps23}.)
The least upper and greatest lower bounds, $\vee$ and $\wedge$,
are then the union and the intersection, $\cup$ and $\cap$.

With respect to the partition ideals $\xi\in P_\text{II}$,
we can define the partial correlation and entanglement properties,
as well as the measures quantifying these.

The \emph{$\vs{\xi}$-uncorrelated states} are those which are $\xi$-uncorrelated \eqref{eq:DuncI} 
with respect to a $\xi\in\vs{\xi}$,
\begin{subequations}
\label{eq:DII}
\begin{equation}
\label{eq:DuncII}
\mathcal{D}_{\vs{\xi}\text{-unc}}
:= \bigcup_{\xi\in\vs{\xi}}\mathcal{D}_{\xi\text{-unc}};
\end{equation}
the others are \emph{$\vs{\xi}$-correlated states}.
The $\vs{\xi}$-uncorrelated states are exactly those
 which can be prepared from uncorrelated states by $\xi$-LO for a partition $\xi\in\vs{\xi}$.
The \emph{$\vs{\xi}$-separable states} are convex combinations of $\vs{\xi}$-uncorrelated states,
\begin{equation}
\label{eq:DsepII}
\mathcal{D}_{\vs{\xi}\text{-sep}}
:= \Conv\mathcal{D}_{\vs{\xi}\text{-unc}};
\end{equation}
\end{subequations}
the others are \emph{$\vs{\xi}$-entangled states}.
The $\vs{\xi}$-separable states are exactly those
which can be prepared from uncorrelated states 
by mixtures of $\xi$-LOs for different partitions $\xi\in\vs{\xi}$.
(Note that such transformations do not form a semigroup.) 
We also have that
$\mathcal{D}_{\vs{\xi}\text{-unc}}$ is closed under LO,
$\mathcal{D}_{\vs{\xi}\text{-sep}}$ is closed under LOCC \cite{Szalay-2015b,Szalay-2018}.
It follows from \eqref{eq:oisomDI} that
these properties show the same lattice structure as the partition ideals \cite{Szalay-2015b}, $P_\text{II}$,
that is,
\begin{subequations}
\label{eq:oisomDII}
\begin{align}
\label{eq:oisomDuncII}
\vs{\upsilon}\finereq\vs{\xi} &\dspiff \mathcal{D}_{\vs{\upsilon}\text{-unc}}\subseteq\mathcal{D}_{\vs{\xi}\text{-unc}},\\
\label{eq:oisomDsepII}
\vs{\upsilon}\finereq\vs{\xi} &\dspiff \mathcal{D}_{\vs{\upsilon}\text{-sep}}\subseteq\mathcal{D}_{\vs{\xi}\text{-sep}}.
\end{align}
\end{subequations}

One can define the corresponding (information-geometry based) 
correlation and entanglement measures \cite{Szalay-2015b,Szalay-2017} 
for all $\vs{\xi}$-correlation and $\vs{\xi}$-entanglement.
These are the most natural generalizations of
the \emph{mutual information} \cite{Petz-2008,Wilde-2013}, 
the \emph{entanglement entropy} \cite{Bennett-1996a},
and the \emph{entanglement of formation} \cite{Bennett-1996b}
for Level~II of the multipartite case.

The \emph{$\vs{\xi}$-correlation} of a state $\varrho$ is
its distinguishability by the relative entropy \eqref{eq:D}
from the $\vs{\xi}$-uncorrelated states \cite{Szalay-2015b,Szalay-2017},
\begin{subequations}
\label{eq:measII}
\begin{equation}
\label{eq:CII}
C_{\vs{\xi}}(\varrho) := 
\min_{\sigma\in \mathcal{D}_{\vs{\xi}\text{-unc}}}D(\varrho||\sigma) = \min_{\xi\in\vs{\xi}}C_{\xi}(\varrho).
\end{equation}
With the same reasoning as in Section~\ref{sec:general.LI},
the \emph{$\vs{\xi}$-entanglement} of a pure state is
\begin{equation}
\label{eq:EpII}
E_{\vs{\xi}}(\pi) := C_{\vs{\xi}} \big\vert_\text{pure}(\pi), 
\end{equation}
and for mixed states, one can use the convex roof extension
to define the \emph{$\vs{\xi}$-entanglement of formation}
\begin{equation}
\label{eq:EII}
E_{\vs{\xi}}(\varrho) :=
\cnvroof{\bigl(C_{\vs{\xi}} \big\vert_\text{pure}\bigr)}(\varrho).
\end{equation}
\end{subequations}
We have that
$C_{\vs{\xi}}$ is a correlation monotone (for the proof, see Appendix~\ref{app:misc.xiLOmon}),
$E_{\vs{\xi}}$ is a strong entanglement monotone \cite{Szalay-2015b},
and both of these are faithful,
$C_{\vs{\xi}}(\varrho)=0 \inliff \varrho\in\mathcal{D}_{\vs{\xi}\text{-unc}}$,
$E_{\vs{\xi}}(\varrho)=0 \inliff \varrho\in\mathcal{D}_{\vs{\xi}\text{-sep}}$ \cite{Szalay-2015b},
moreover, they show the same lattice structure as the partition ideals, $P_\text{II}$,
that is,
\begin{subequations}
\label{eq:mmII}
\begin{align}
\label{eq:CmmII}
\vs{\upsilon}\finereq\vs{\xi} &\dspiff C_{\vs{\upsilon}}\geq C_{\vs{\xi}},\\
\label{eq:EmmII}
\vs{\upsilon}\finereq\vs{\xi} &\dspiff E_{\vs{\upsilon}}\geq E_{\vs{\xi}},
\end{align}
\end{subequations}
which is called \emph{multipartite monotonicity} for Level~II \cite{Szalay-2015b,Szalay-2017}.

\subsection{Level~III: classes}
\label{sec:general.LIII}

The partial correlation and entanglement properties form an inclusion hierarchy \eqref{eq:oisomDuncII}-\eqref{eq:oisomDsepII},
for example, if a state is $1|2|3$-separable, then it is also $12|3$-separable.
We are interested in the labeling of the strict, or exclusive properties,
for example, those states which are $12|3$-separable and not $1|2|3$-separable.
In general, we would like to determine all the possible nonempty intersections of the state sets
$\mathcal{D}_{\vs{\xi}\text{-unc}}$ and $\mathcal{D}_{\vs{\xi}\text{-sep}}$.
We call these intersections \emph{partial correlation} and \emph{partial entanglement classes,}
containing states of well-defined partial correlation and partial entanglement properties.

To embed these requirements in the labeling of the strict properties of multipartite correlation and entanglement,
we use the \emph{nonempty up-sets (nonempty filters) of nonempty down-sets (nonempty ideals) of partitions} \cite{Szalay-2015b}, which are
sets of down-sets of partitions
\begin{subequations}
\begin{equation}
\vvs{\xi}=\set{\vs{\xi}_1,\vs{\xi}_2,\dots,\vs{\xi}_{\abs{\vvs{\xi}}}}\subseteq P_\text{II} 
\end{equation}
which are closed upwards with respect to $\preceq$.
That is, if $\vs{\xi}\in\vvs{\xi}$, then for all partition ideals $\vs{\upsilon}\coarsereq\vs{\xi}$ we have $\vs{\upsilon}\in\vvs{\xi}$.
The set of the nonempty up-sets of nonempty down-sets of partitions of $L$ is denoted with 
\begin{equation}
\label{eq:PIII}
P_\text{III} :=\mathcal{O}_\uparrow(P_\text{II})\setminus\set{\emptyset}.
\end{equation}
This possesses a lattice structure with respect to the standard inclusion
as partial order, which we call \emph{refinement} again, and denote with $\preceq$ again,
\begin{equation}
\label{eq:poIII}
\vvs{\upsilon}\finereq\vvs{\xi} \dspdef \vvs{\upsilon}\subseteq\vvs{\xi}.
\end{equation}
\end{subequations}
(For illustrations, see Figure~\ref{fig:Ps23}.)
Note that here, contrary to Level~I and II, we call $\vvs{\xi}$ finer and $\vvs{\upsilon}$ coarser.
In the generic case,
if the inclusion of sets is described by a poset $P$,
then the possible intersections can be described by $\mathcal{O}_\uparrow(P)$
(for the proof, see appendix A.2 in \cite{Szalay-2018}).
One may make the classification coarser
by selecting a subposet $P_\text{II*}\subseteq P_\text{II}$,
with respect to which the classification is done,
$P_\text{III*} :=\mathcal{O}_\uparrow(P_\text{II*})\setminus\set{\emptyset}$ \cite{Szalay-2015b,Szalay-2018}.

With respect to the partition ideal filters $\vvs{\xi}\in P_\text{III*}$,
we can define the strict partial correlation and entanglement properties.

The \emph{strictly $\vvs{\xi}$-uncorrelated states} are those which are
uncorrelated with respect to all $\vs{\xi} \in\vvs{\xi}$,
and correlated with respect to all $\vs{\xi}\in\cmpl{\vvs{\xi}}=P_\text{II*}\setminus\vvs{\xi}$,
so the \emph{class} of these 
(\emph{partial correlation class}) is
\begin{subequations}
\label{eq:CIII}
\begin{equation}
\label{eq:CuncIII}
\mathcal{C}_{\vvs{\xi}\text{-unc}} :=
 \bigcap_{\vs{\xi}\in\cmpl{\vvs{\xi}}} \cmpl{\mathcal{D}_{\vs{\xi}\text{-unc}}} \cap 
 \bigcap_{\vs{\xi}\in      \vvs{\xi} }       \mathcal{D}_{\vs{\xi}\text{-unc}}.
\end{equation}
The \emph{strictly $\vvs{\xi}$-separable states} are those which are
separable with respect to all $\vs{\xi} \in\vvs{\xi}$,
and entangled with respect to all $\vs{\xi}\in\cmpl{\vvs{\xi}}$,
so the \emph{class} of these 
(\emph{partial separability class}, or \emph{partial entanglement class}) is
\begin{equation}
\label{eq:CsepIII}
\mathcal{C}_{\vvs{\xi}\text{-sep}} :=
 \bigcap_{\vs{\xi}\in\cmpl{\vvs{\xi}}} \cmpl{\mathcal{D}_{\vs{\xi}\text{-sep}}} \cap 
 \bigcap_{\vs{\xi}\in      \vvs{\xi} }       \mathcal{D}_{\vs{\xi}\text{-sep}}.
\end{equation}
\end{subequations}
The meaning of the Level~III hierarchy could also be clarified.
If there exists a $\varrho \in \mathcal{C}_{\vvs{\upsilon}\text{-unc}}$ and an LO
mapping it into $\mathcal{C}_{\vvs{\xi}\text{-unc}}$, then $\vvs{\upsilon}\finereq\vvs{\xi}$ 
\cite{Szalay-2018,Szalay-2015b};
and
if there exists a $\varrho \in \mathcal{C}_{\vvs{\upsilon}\text{-sep}}$ and an LOCC
mapping it into $\mathcal{C}_{\vvs{\xi}\text{-sep}}$, then $\vvs{\upsilon}\finereq\vvs{\xi}$ 
\cite{Szalay-2015b}.
In this sense, the Level~III hierarchy compares the strength of
correlation and entanglement among the classes.

The filters $\vvs{\xi}$ are \emph{sufficient} for the description of the 
nontrivial classes (nonempty intersections)
of the correlation and entanglement properties,
but they are \emph{not necessary} in general.
For the strictly $\vvs{\xi}$-uncorrelated states,
the structure $P_\text{III*}$ simplifies significantly \cite{Szalay-2018}.
(For example, for the finest classification $P_\text{II*}= P_\text{II}$,
we have that the structure of the partial correlation classes is
$P_\text{III*}\cong P_\text{I}^\partial$, 
as the unique labeling of the nontrivial partial correlation classes is given as
$\vvs{\xi}=\upset\set{\downset\set{\xi}}$ for the partitions $\xi\in P_\text{I}$ \cite{Szalay-2018}.)
On the other hand, it is still a conjecture that
$\vvs{\xi}$-separability is nontrivial for all $\vvs{\xi}\in P_\text{III}$ \cite{Szalay-2015b}.
The conjecture holds for three subsystems, for which explicit examples were constructed \cite{Szalay-2012,Han-2018}.

\section{Multipartite correlation and entanglement: permutation invariant properties}
\label{sec:perminv}

Here we build up the structure of the classification and quantification 
of the permutation invariant multipartite correlation and entanglement properties.
We do this from the point of view of the general properties in the previous section.  
To give a quick guidance on the construction, 
we show the diagram
\begin{equation}
\label{eq:cd2}
\xymatrix@M+=8bp{
(P_\text{III},\preceq) \ar@{|->}[r]^s                                                               &
  (\pinv{P}_\text{III},\preceq)              \\
(P_\text{II}, \preceq) \ar@{|->}[r]^s \ar@{|->}[u]_{\mathcal{O}_\uparrow  \setminus\set{\emptyset}} &
  (\pinv{P}_\text{II}, \preceq) \ar@{|->}[u]_{\mathcal{O}_\uparrow  \setminus\set{\emptyset}} \\ 
(P_\text{I},  \preceq) \ar@{|->}[r]^s \ar@{|->}[u]_{\mathcal{O}_\downarrow\setminus\set{\emptyset}} &
  (\pinv{P}_\text{I},  \preceq) \ar@{|->}[u]_{\mathcal{O}_\downarrow\setminus\set{\emptyset}} 
}
\end{equation}
in advance.
This might not seem to be obvious, for the proof, see Corollary \ref{cor:commut} in Appendix~\ref{app:structPI.part}.

In Section~\ref{sec:general}, for the description of the structure of multipartite correlations \cite{Szalay-2015b,Szalay-2017,Szalay-2018},
the natural language was that of \emph{set partitions} \cite{Roman-2008}.
For the permutation-invariant case,
particularly $k$-partitionability and $k$-producibility,
it is natural to use the simpler language of \emph{integer partitions} \cite{Andrews-1984,Stanley-2012}.

\subsection{Level 0: subsystem sizes}
\label{sec:perminv.L0p}

Let us have the map, simply measuring the sizes of the subsystems $X\in P_0$,
\begin{equation}
\label{eq:s0}
s(X) := \abs{X},
\end{equation}
mapping $P_\text{0}\to \pinv{P}_\text{0}$,
which is simply
\begin{equation}
\label{eq:P0p}
\pinv{P}_\text{0} := s(P_\text{0}) = \set{0,1,2,\dots,n}.
\end{equation}
This possesses the natural order $\leq$, which is now denoted with $\finereq$,
with respect to which $s$ is monotone,
$Y\preceq X \inlthen s(Y)\preceq s(X)$.  
We note for the sake of the analogy with the construction in the subsequent levels
that one could define $\preceq$ on $\pinv{P}_\text{0}$ as
$y\preceq x$ if there exist $Y\in s^{-1}(y)$ and $X\in s^{-1}(x)$
such that $Y\preceq X$, from which the monotonicity of $s$ follows automatically
(see \eqref{eq:mon} in Appendix~\ref{app:structPI.coars}).

\subsection{Level~I: integer partitions}
\label{sec:perminv.LIp}

Let the map $s$ in \eqref{eq:s0} act elementwisely on the partitions $\xi\in P_\text{I}$,
\begin{equation}
\label{eq:sI}
s(\xi) := \bigsmset{s(X)}{\forall X\in \xi},
\end{equation}
where on the right-hand side a \emph{multiset} stays,
that is,
a set, which allows multiple instances of its elements.
(For example, for $\xi=\mset{\set{1},\set{2,4},\set{3}}$,
$s(\xi)=\mset{1,2,1}\equiv\mset{2,1,1}$.)
We call the integer partition $s(\xi)$ the \emph{type} of the set partition $\xi$.
This is then an \emph{integer partition} of $n$ \cite{Andrews-1984,Stanley-2012}, which is denoted as
\begin{subequations}
\begin{equation}
\pinv{\xi}=\mset{x_1,x_2,\dots,x_{\abs{\pinv{\xi}}}},
\end{equation}
where the parts $x\in\pinv{\xi}$ are nonzero subsystem sizes,
for which $\sum_{x\in\pinv{\xi}} x = n$.
The set of the integer partitions of $n$ is denoted with
\begin{equation}
\label{eq:PIp}
\pinv{P}_\text{I} := s(P_\text{I}).
\end{equation}
Its size is given by the \emph{partition numbers} \cite{oeisA000041}.
Based on the refinement of the set partitions \eqref{eq:poI}, 
we define a partial order over the integer partitions,
which we call \emph{refinement} again, and denote with $\preceq$ agai again,
given as
\begin{equation}
\label{eq:poIp}
\pinv{\upsilon}\finereq\pinv{\xi} \dspdef \exists \upsilon\in s^{-1}(\pinv{\upsilon}),\xi\in s^{-1}(\pinv{\xi}) \dispt{s.t.} \upsilon\finereq\xi.
\end{equation}
\end{subequations}
(For illustrations, see Figures~\ref{fig:PPpI234enc} and \ref{fig:Pp56}.
The integer partitions are represented by their Young diagrams, where the rows of boxes are ordered decreasingly.)
This is a proper partial order (for the proof, see Appendix~\ref{app:structPI.part}).
Note that the refinement is a partial order only,
there are pairs of partitions which cannot be ordered,
for example,
$\mset{2,2} \npreceq \mset{3,1}$ and
$\mset{2,2} \nsucceq \mset{3,1}$, see Figure~\ref{fig:PPpI234enc}.
Note that $\pinv{P}_\text{I}$ with this partial order is \emph{not} a lattice,
since there are no unique least upper and greatest lower bounds with respect to that.
(The first example is for $n=5$, where
$\mset{2,2,1},\mset{3,1,1}\preceq\mset{3,2},\mset{4,1}$,
and there is no integer partition between them, see Figure~\ref{fig:Pp56}.)
The refinement also admits a minimal and a maximal element $\bot=\mset{1,1,\dots,1}$ and $\top=\mset{n}$.
By construction,
$s$ is monotone with respect to these partial orders,
$\upsilon\preceq\xi \inlthen s(\upsilon)\preceq s(\xi)$
(see \eqref{eq:mon} in Appendix~\ref{app:structPI.coars}).

\begin{figure*}\centering
\includegraphics{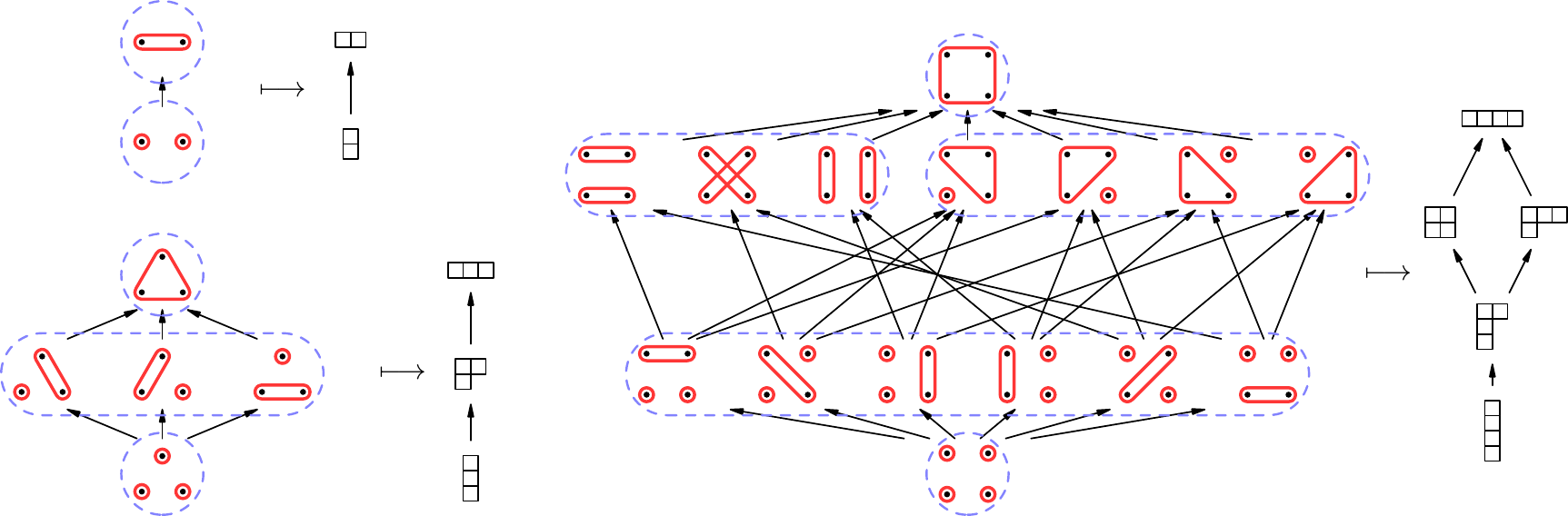}
\caption{
The construction of the posets $\pinv{P}_\text{I}$ of permutation invariant multipartite correlation and entanglement properties
from the lattices $P_\text{I}$ for $n=2,3$ and $4$.
The integer partitions \eqref{eq:sI} are represented by their Young diagrams,
the partial order \eqref{eq:poIp} is represented by consecutive arrows.}\label{fig:PPpI234enc}
\end{figure*}

\begin{figure}\centering
\includegraphics{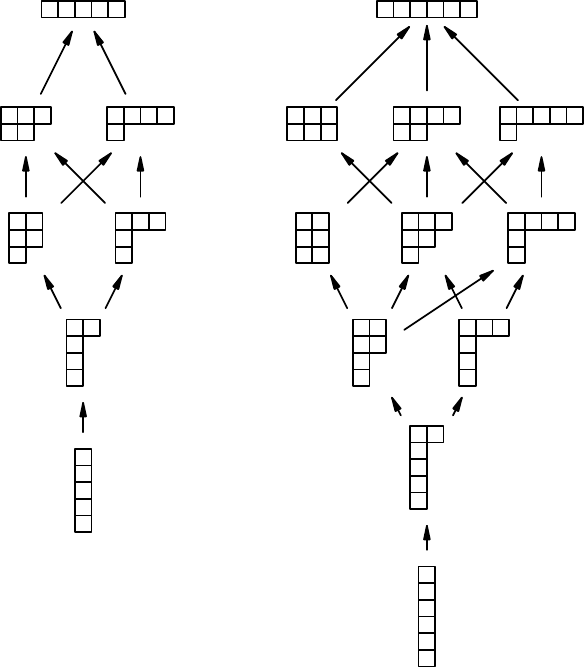}
\caption{
The posets $\pinv{P}_\text{I}$ of permutation invariant multipartite correlation and entanglement properties for $n=5$ and $6$.
The notation is the same as in Figure~\ref{fig:PPpI234enc}.
}\label{fig:Pp56}
\end{figure}

With respect to the integer partitions $\pinv{\xi}\in \pinv{P}_\text{I}$,
we can define the partial correlation and entanglement properties,
as well as the measures quantifying them.

The \emph{$\pinv{\xi}$-uncorrelated states} are those which are $\xi$-uncorrelated \eqref{eq:DuncI}
with respect to a partition $\xi$ of type $\pinv{\xi}$,
which is actually a Level~II state set \eqref{eq:DuncII} labeled by the ideal $\vs{\xi}=\downset s^{-1}(\pinv{\xi})$,
\begin{subequations}
\label{eq:DIp}
\begin{equation}
\label{eq:DuncIp}
\mathcal{D}_{\pinv{\xi}\text{-unc}}
:= \bigcup_{\xi: s(\xi) = \pinv{\xi}}\mathcal{D}_{\xi\text{-unc}}
 = \mathcal{D}_{\downset s^{-1}(\pinv{\xi})\text{-unc}};
\end{equation}
the others are \emph{$\pinv{\xi}$-correlated states}.
The $\pinv{\xi}$-uncorrelated states are exactly those
 which can be prepared from uncorrelated states by $\xi$-LO for a partition $\xi$ of type $\pinv{\xi}$.
(Note that we consider arbitrary states here, not only permutation invariant ones.
The correlation property defined in this way is what permutation invariant is.)
The \emph{$\pinv{\xi}$-separable states} are 
convex combinations of $\pinv{\xi}$-uncorrelated states,
which is actually a Level~II state set \eqref{eq:DsepII} labeled by the ideal $\vs{\xi}=\downset s^{-1}(\pinv{\xi})$,
\begin{equation}
\label{eq:DsepIp}
\mathcal{D}_{\pinv{\xi}\text{-sep}}
:= \Conv\mathcal{D}_{\pinv{\xi}\text{-unc}}
 = \mathcal{D}_{\downset s^{-1}(\pinv{\xi})\text{-sep}};
\end{equation}
\end{subequations}
the others are \emph{$\pinv{\xi}$-entangled states}.
The $\pinv{\xi}$-separable states are exactly those
which can be prepared from uncorrelated states 
by mixtures of $\xi$-LOs for different partitions $\xi$ of type $\pinv{\xi}$. 
We also have that
$\mathcal{D}_{\pinv{\xi}\text{-unc}}$ is closed under LO,
$\mathcal{D}_{\pinv{\xi}\text{-sep}}$ is closed under LOCC,
because these hold for the general Level~II state sets 
$\mathcal{D}_{\downset s^{-1}(\pinv{\xi})\text{-unc}}$ and 
$\mathcal{D}_{\downset s^{-1}(\pinv{\xi})\text{-sep}}$, see in Section~\ref{sec:general.LII}.
It also follows that
these properties show the same partially ordered structure as the integer partitions, $\pinv{P}_\text{I}$,
that is,
\begin{subequations}
\label{eq:oisomDIp}
\begin{align}
\label{eq:oisomDuncIp}
\pinv{\upsilon}\finereq\pinv{\xi} &\dspiff \mathcal{D}_{\pinv{\upsilon}\text{-unc}}\subseteq\mathcal{D}_{\pinv{\xi}\text{-unc}},\\
\label{eq:oisomDsepIp}
\pinv{\upsilon}\finereq\pinv{\xi} &\dspiff \mathcal{D}_{\pinv{\upsilon}\text{-sep}}\subseteq\mathcal{D}_{\pinv{\xi}\text{-sep}}.
\end{align}
\end{subequations}
(These come by using \eqref{eq:embedPpIPII} and \eqref{eq:labelsPpIpr} in Appendix~\ref{app:structPI.part}, 
then \eqref{eq:oisomDII} and \eqref{eq:DIp}.)

In accordance with the general case in Section~\ref{sec:general.LI},
one can define the corresponding (information-geometry based)
correlation and entanglement measures
for all $\pinv{\xi}$-correlation and $\pinv{\xi}$-entanglement.
These are the most natural generalizations of
the \emph{mutual information} \cite{Petz-2008,Wilde-2013}, 
the \emph{entanglement entropy} \cite{Bennett-1996a},
and the \emph{entanglement of formation} \cite{Bennett-1996b}
for Level~I of the permutation invariant multipartite case.

The \emph{$\pinv{\xi}$-correlation} of a state $\varrho$ is
its distinguishability by the relative entropy \eqref{eq:D} 
from the $\pinv{\xi}$-uncorrelated states,
which is actually a Level~II measure \eqref{eq:CII} labeled by the ideal $\vs{\xi}=\downset s^{-1}(\pinv{\xi})$,
\begin{subequations}
\label{eq:measIp}
\begin{equation}
\label{eq:CIp}
C_{\pinv{\xi}}(\varrho) 
:= \min_{\sigma\in \mathcal{D}_{\pinv{\xi}\text{-unc}}}D(\varrho||\sigma) 
= C_{\downset s^{-1}(\pinv{\xi})}(\varrho).
\end{equation}
With the same reasoning as in Section~\ref{sec:general.LI},
the \emph{$\pinv{\xi}$-entanglement} of a pure state is
\begin{equation}
\label{eq:EpIp}
E_{\pinv{\xi}}(\pi) := C_{\pinv{\xi}} \big\vert_\text{pure}(\pi)
=E_{\downset s^{-1}(\pinv{\xi})}(\pi),
\end{equation}
and for mixed states, one can use the convex roof extension,
which is actually a Level~II measure \eqref{eq:EII} labeled by the ideal $\vs{\xi}=\downset s^{-1}(\pinv{\xi})$,
\begin{equation}
\label{eq:EIp}
E_{\pinv{\xi}}(\varrho) 
:= \cnvroof{\bigl(C_{\pinv{\xi}} \big\vert_\text{pure}\bigr)}(\varrho)
= E_{\downset s^{-1}(\pinv{\xi})}(\varrho).
\end{equation}
\end{subequations}
These inherit the properties of the general case in Section~\ref{sec:general.LII}.
So we have that
$C_{\pinv{\xi}}$ is a correlation monotone, 
$E_{\pinv{\xi}}$ is a strong entanglement monotone,
and both of these are faithful,
$C_{\pinv{\xi}}(\varrho)=0 \inliff \varrho\in\mathcal{D}_{\pinv{\xi}\text{-unc}}$,
$E_{\pinv{\xi}}(\varrho)=0 \inliff \varrho\in\mathcal{D}_{\pinv{\xi}\text{-sep}}$,
moreover, they show the same partially ordered structure as the integer partitions, $\pinv{P}_\text{I}$,
that is,
\begin{subequations}
\begin{align}
\label{eq:CmmIp}
\pinv{\upsilon}\finereq\pinv{\xi} &\dspiff C_{\pinv{\upsilon}}\geq C_{\pinv{\xi}},\\
\label{eq:EmmIp}
\pinv{\upsilon}\finereq\pinv{\xi} &\dspiff E_{\pinv{\upsilon}}\geq E_{\pinv{\xi}},
\end{align}
\end{subequations}
which we call \emph{multipartite monotonicity} for Level~I of the permutation invariant case.
(These come by using \eqref{eq:embedPpIPII} and \eqref{eq:labelsPpIpr} in Appendix~\ref{app:structPI.part}, 
then \eqref{eq:mmII} and \eqref{eq:measIp}.)

\subsection{Level~II: multiple integer partitions}
\label{sec:perminv.LIIp}

Let the map $s$ in \eqref{eq:sI} act elementwisely on the partition ideals $\vs{\xi}\in P_\text{II}$, 
\begin{equation}
\label{eq:sII}
s(\vs{\xi}) := \bigsset{s(\xi)}{\forall \xi\in \vs{\xi}},
\end{equation}
where on the right-hand side a \emph{set} stays.
We call $s(\vs{\xi})$ the \emph{type} of the set partition ideal $\vs{\xi}$.
This is a set of integer partitions, which is denoted as
\begin{subequations}
\begin{equation}
\pinv{\vs{\xi}} = \bigset{\pinv{\xi}_1,\pinv{\xi}_2,\dots,\pinv{\xi}_{\abs{\pinv{\vs{\xi}}}}} \subseteq \pinv{P}_\text{I},
\end{equation}
which is a nonempty ideal of integer partitions, the set of which is denoted as
\begin{equation}
\label{eq:PIIp}
\pinv{P}_\text{II} := s(P_\text{II})
=\mathcal{O}_\downarrow(\pinv{P}_\text{I})\setminus\set{\emptyset}.
\end{equation}
This might not seem to be obvious, this is given in \eqref{eq:cd2}, proven in Corollary \ref{cor:commut} in Appendix~\ref{app:structPI.part}.
In particular, all $\xi\in\vs{\xi}$ is of type $\pinv{\xi}$ for a $\pinv{\xi}\in s(\vs{\xi})$,
on the other hand, all $\pinv{\xi}\in s(\vs{\xi})$ has representative $\xi\in\vs{\xi}$ of type $\pinv{\xi}$.
Based on the refinement of the set partition ideals \eqref{eq:poII}, 
we define a partial order over the integer partition ideals,
which we call \emph{refinement} again, and denote with $\preceq$ again, 
given as
\begin{equation}
\label{eq:poIIp}
\begin{split}
\pinv{\vs{\upsilon}}\finereq\pinv{\vs{\xi}} &\dspdef
\exists \vs{\upsilon}\in s^{-1}(\pinv{\vs{\upsilon}}),\vs{\xi}\in s^{-1}(\pinv{\vs{\xi}}) \;\text{s.t.}\; \vs{\upsilon}\finereq\vs{\xi}\\
&\dspiff \pinv{\vs{\upsilon}}\subseteq \pinv{\vs{\xi}},
\end{split}
\end{equation}
\end{subequations}
which turns out to be the inclusion, being the natural partial order for ideals.
This might not seem to be obvious, this is given in \eqref{eq:cd2}, proven in Corollary \ref{cor:commut} in Appendix~\ref{app:structPI.part}.
(For illustrations, see Figure~\ref{fig:Pps2345}.)
Because of these, $\pinv{P}_\text{II}$ with this partial order \emph{is} a lattice.
By construction,
$s$ is monotone with respect to these partial orders,
$\vs{\upsilon}\preceq\vs{\xi} \inlthen s(\vs{\upsilon})\preceq s(\vs{\xi})$
(see \eqref{eq:mon} in Appendix~\ref{app:structPI.coars}).

With respect to the integer partition ideals $\pinv{\vs{\xi}}\in \pinv{P}_\text{II}$,
we can define the partial correlation and entanglement properties,
as well as the measures quantifying these.

The \emph{$\pinv{\vs{\xi}}$-uncorrelated states} are those 
which are $\xi$-uncorrelated \eqref{eq:DuncI}
with respect to a partition $\xi$ of type $\pinv{\xi}$ for a $\pinv{\xi}\in\pinv{\vs{\xi}}$,
which is actually a Level~II state set \eqref{eq:DuncII} labeled by the ideal 
$\vs{\xi}= \vee s^{-1}(\pinv{\vs{\xi}})= \bigvee_{\pinv{\xi}\in\pinv{\vs{\xi}}}\downset s^{-1}(\pinv{\xi})$, 
\begin{subequations}
\label{eq:DIIp}
\begin{equation}
\label{eq:DuncIIp}
\begin{split}
\mathcal{D}_{\pinv{\vs{\xi}}\text{-unc}}
:&= \bigcup_{\vs{\xi}: s(\vs{\xi}) = \pinv{\vs{\xi}}}\mathcal{D}_{\vs{\xi}\text{-unc}}
 = \mathcal{D}_{\vee s^{-1}(\pinv{\vs{\xi}})\text{-unc}}\\
 &= \bigcup_{\xi: s(\xi) \in \pinv{\vs{\xi}}}\mathcal{D}_{\xi\text{-unc}}
 = \mathcal{D}_{\bigvee_{\pinv{\xi}\in\pinv{\vs{\xi}}} \downset s^{-1}(\pinv{\xi})\text{-unc}},
\end{split}
\end{equation}
(for the proof, see \eqref{eq:labelsPpII} in Appendix~\ref{app:structPI.part});
the others are \emph{$\pinv{\vs{\xi}}$-correlated states}.
(We use the notation $\vee A := \bigvee_{a\in A} a$ for any subset $A$ of a lattice.)
The $\pinv{\vs{\xi}}$-uncorrelated states are exactly those
 which can be prepared from uncorrelated states by $\xi$-LO for a partition $\xi$ of a type $\pinv{\xi}\in\pinv{\vs{\xi}}$. 
The \emph{$\pinv{\vs{\xi}}$-separable states} are 
convex combinations of $\pinv{\vs{\xi}}$-uncorrelated states,
which is actually a Level~II state set \eqref{eq:DsepII} labeled by the ideal
$\vs{\xi}= \vee s^{-1}(\pinv{\vs{\xi}})= \bigvee_{\pinv{\xi}\in\pinv{\vs{\xi}}}\downset s^{-1}(\pinv{\xi})$,
\begin{equation}
\label{eq:DsepIIp}
\mathcal{D}_{\pinv{\vs{\xi}}\text{-sep}}
:= \Conv\mathcal{D}_{\pinv{\vs{\xi}}\text{-unc}}
 = \mathcal{D}_{\vee s^{-1}(\pinv{\vs{\xi}})\text{-sep}},
\end{equation}
\end{subequations}
the others are \emph{$\pinv{\vs{\xi}}$-entangled states}.
The $\pinv{\vs{\xi}}$-separable states are exactly those
which can be prepared from uncorrelated states 
by mixtures of $\xi$-LOs for different partitions $\xi$ of different types $\pinv{\xi}\in\pinv{\vs{\xi}}$.
We also have that
$\mathcal{D}_{\pinv{\vs{\xi}}\text{-unc}}$ is closed under LO,
$\mathcal{D}_{\pinv{\vs{\xi}}\text{-sep}}$ is closed under LOCC,
because these hold for the general Level~II state sets 
$\mathcal{D}_{\vee s^{-1}(\pinv{\vs{\xi}})}$ and
$\mathcal{D}_{\vee s^{-1}(\pinv{\vs{\xi}})}$, see in Section~\ref{sec:general.LII}.
It also follows that
these properties show the same lattice structure as the integer partition ideals, $\pinv{P}_\text{II}$,
that is,
\begin{subequations}
\label{eq:oisomDIIp}
\begin{align}
\label{eq:oisomDuncIIp}
\pinv{\vs{\upsilon}}\finereq\pinv{\vs{\xi}} 
&\dspiff \mathcal{D}_{\pinv{\vs{\upsilon}}\text{-unc}}\subseteq\mathcal{D}_{\pinv{\vs{\xi}}\text{-unc}},\\
\label{eq:oisomDsepIIp}
\pinv{\vs{\upsilon}}\finereq\pinv{\vs{\xi}}
&\dspiff \mathcal{D}_{\pinv{\vs{\upsilon}}\text{-sep}}\subseteq\mathcal{D}_{\pinv{\vs{\xi}}\text{-sep}}.
\end{align}
\end{subequations}
(These come by using \eqref{eq:embedPpIIPII} and \eqref{eq:labelsPpII} in Appendix~\ref{app:structPI.part}, 
then \eqref{eq:oisomDII} and \eqref{eq:DIIp}.)

In accordance with the general case in Section~\ref{sec:general.LII},
one can define the corresponding (information-geometry based)
correlation and entanglement measures
for all $\pinv{\vs{\xi}}$-correlation and $\pinv{\vs{\xi}}$-entanglement.
These are the most natural generalizations of
the \emph{mutual information} \cite{Petz-2008,Wilde-2013}, 
the \emph{entanglement entropy} \cite{Bennett-1996a},
and the \emph{entanglement of formation} \cite{Bennett-1996b}
for Level~II of the permutation invariant multipartite case.

The \emph{$\pinv{\vs{\xi}}$-correlation} of a state $\varrho$ is
its distinguishability by the relative entropy \eqref{eq:D}
from the $\pinv{\vs{\xi}}$-uncorrelated states, 
which is actually a Level~II measure \eqref{eq:CII} labeled by the ideal 
$\vs{\xi}=\vee s^{-1}(\pinv{\vs{\xi}})$, 
\begin{subequations}
\label{eq:measIIp}
\begin{equation}
\label{eq:CIIp}
C_{\pinv{\vs{\xi}}}(\varrho) 
:= \min_{\sigma\in \mathcal{D}_{\pinv{\vs{\xi}}\text{-unc}}}D(\varrho||\sigma) 
= C_{\vee s^{-1}(\pinv{\vs{\xi}})}(\varrho).
\end{equation}
With the same reasoning as in Section~\ref{sec:general.LII},
the \emph{$\pinv{\vs{\xi}}$-entanglement} of a pure state is
\begin{equation}
\label{eq:EpIIp}
E_{\pinv{\vs{\xi}}}(\pi) := C_{\pinv{\vs{\xi}}} \big\vert_\text{pure}(\pi)
=E_{\vee s^{-1}(\pinv{\vs{\xi}})}(\pi),
\end{equation}
and for mixed states, one can use the convex roof extension,
which is actually a Level~II measure \eqref{eq:EII} labeled by the ideal 
$\vs{\xi}=\vee s^{-1}(\pinv{\vs{\xi}})$,
\begin{equation}
\label{eq:EIIp}
E_{\pinv{\vs{\xi}}}(\varrho) 
:= \cnvroof{\bigl(C_{\pinv{\vs{\xi}}} \big\vert_\text{pure}\bigr)}(\varrho)
= E_{\vee s^{-1}(\pinv{\vs{\xi}})}(\varrho).
\end{equation}
\end{subequations}
These inherit the properties of of the general case in Section~\ref{sec:general.LII}.
So we have that
$C_{\pinv{\vs{\xi}}}$ is a correlation monotone, 
$E_{\pinv{\vs{\xi}}}$ is a strong entanglement monotone,
and both of these are faithful,
$C_{\pinv{\vs{\xi}}}(\varrho)=0 \inliff \varrho\in\mathcal{D}_{\pinv{\vs{\xi}}\text{-unc}}$,
$E_{\pinv{\vs{\xi}}}(\varrho)=0 \inliff \varrho\in\mathcal{D}_{\pinv{\vs{\xi}}\text{-sep}}$,
moreover, they show the same lattice structure as the integer partition ideals, $\pinv{P}_\text{II}$,
that is,
\begin{subequations}
\label{eq:mmIIp}
\begin{align}
\label{eq:CmmIIp}
\pinv{\vs{\upsilon}}\finereq\pinv{\vs{\xi}} &\dspiff C_{\pinv{\vs{\upsilon}}}\geq C_{\pinv{\vs{\xi}}},\\
\label{eq:EmmIIp}
\pinv{\vs{\upsilon}}\finereq\pinv{\vs{\xi}} &\dspiff E_{\pinv{\vs{\upsilon}}}\geq E_{\pinv{\vs{\xi}}},
\end{align}
\end{subequations}
which we call \emph{multipartite monotonicity} for Level~II of the permutation invariant case.
(These come by using \eqref{eq:embedPpIIPII} and \eqref{eq:labelsPpII} in Appendix~\ref{app:structPI.part}, 
then \eqref{eq:mmII} and \eqref{eq:measIIp}.)

\subsection{Level~III: classes}
\label{sec:perminv.LIIIp}

Let the map $s$ in \eqref{eq:sII} act elementwisely on the partition ideal filters $\vvs{\xi}\in P_\text{III}$, 
\begin{equation}
\label{eq:sIII}
s(\vvs{\xi}) := \bigsset{s(\vs{\xi})}{\forall \vs{\xi}\in \vvs{\xi}},
\end{equation}
where on the right-hand side a \emph{set} stays.
We call $s(\vvs{\xi})$ the \emph{type} of the set partition ideal filter $\vvs{\xi}$. 
This is a set of integer partition filters, which is denoted as
\begin{subequations}
\begin{equation}
\pinv{\vvs{\xi}} = \bigset{\pinv{\vs{\xi}}_1,\pinv{\vs{\xi}}_2,\dots,\pinv{\vs{\xi}}_{\abs{\pinv{\vvs{\xi}}}}} \subseteq \pinv{P}_\text{II},
\end{equation}
which is a nonempty filter of ideals of integer partitions, the set of which is denoted as
\begin{equation}
\label{eq:PIIIp}
\pinv{P}_\text{III} := s(P_\text{III}) 
=\mathcal{O}_\uparrow(\pinv{P}_\text{II})\setminus\set{\emptyset}.
\end{equation}
This might not seem to be obvious, this is given in \eqref{eq:cd2}, proven in Corollary \ref{cor:commut} in Appendix~\ref{app:structPI.part}.
Based on the refinement of the set partition ideal filters \eqref{eq:poIII}, 
we define a partial order over the integer partition ideal filters,
which we call \emph{refinement} again, and denote with $\preceq$ again, 
given as
\begin{equation}
\label{eq:poIIIp}
\begin{split}
\pinv{\vvs{\upsilon}}\finereq\pinv{\vvs{\xi}} &\dspdef 
\exists \vvs{\upsilon}\in s^{-1}(\pinv{\vvs{\upsilon}}),\vvs{\xi}\in s^{-1}(\pinv{\vvs{\xi}}) \;\text{s.t.}\; \vvs{\upsilon}\finereq\vvs{\xi}\\
&\dspiff \pinv{\vvs{\upsilon}}\subseteq \pinv{\vvs{\xi}},
\end{split}
\end{equation}
\end{subequations}
which turns out to be the inclusion, being the natural partial order for filters.
This might not seem to be obvious, this is given in \eqref{eq:cd2}, proven in Corollary \ref{cor:commut} in Appendix~\ref{app:structPI.part}.
(For illustrations, see Figure~\ref{fig:Pps2345}.)
Note that here, contrary to Level~I and II, we call $\pinv{\vvs{\xi}}$ finer and $\pinv{\vvs{\upsilon}}$ coarser.
Because of these, $\pinv{P}_\text{III}$ with this partial order \emph{is} a lattice.
By construction,
$s$ is monotone with respect to these partial orders,
$\vvs{\upsilon}\preceq\vvs{\xi} \inlthen s(\vvs{\upsilon})\preceq s(\vvs{\xi})$
(see \eqref{eq:mon} in Appendix~\ref{app:structPI.coars}).

With respect to the integer partition ideal filters $\pinv{\vvs{\xi}}\in \pinv{P}_\text{III}$,
we can define the strict partial correlation and entanglement properties.

The \emph{strictly $\pinv{\vvs{\xi}}$-uncorrelated states} are those which are
uncorrelated with respect to all $\pinv{\vs{\xi}} \in\pinv{\vvs{\xi}}$,
and correlated with respect to all $\pinv{\vs{\xi}}\in\cmpl{\pinv{\vvs{\xi}}}=\pinv{P}_\text{II}\setminus\pinv{\vvs{\xi}}$,
so the \emph{class} of these 
(\emph{permutation invariant partial correlation class}) is
\begin{subequations}
\begin{equation}
\label{eq:CuncIIIp}
\mathcal{C}_{\pinv{\vvs{\xi}}\text{-unc}} :=
 \bigcap_{\pinv{\vs{\xi}}\in\cmpl{\pinv{\vvs{\xi}}}} \cmpl{\mathcal{D}_{\pinv{\vs{\xi}}\text{-unc}}} \cap 
 \bigcap_{\pinv{\vs{\xi}}\in      \pinv{\vvs{\xi}} }       \mathcal{D}_{\pinv{\vs{\xi}}\text{-unc}}.
\end{equation}
The \emph{strictly $\pinv{\vvs{\xi}}$-separable states} are those which are
separable with respect to all $\pinv{\vs{\xi}} \in\pinv{\vvs{\xi}}$,
and entangled with respect to all $\pinv{\vs{\xi}}\in\cmpl{\pinv{\vvs{\xi}}}$,
so the \emph{class} of these 
(\emph{permutation invariant partial separability class}, or \emph{permutation invariant partial entanglement class}) is
\begin{equation}
\label{eq:CsepIIIp}
\mathcal{C}_{\pinv{\vvs{\xi}}\text{-sep}} :=
 \bigcap_{\pinv{\vs{\xi}}\in\cmpl{\pinv{\vvs{\xi}}}} \cmpl{\mathcal{D}_{\pinv{\vs{\xi}}\text{-sep}}} \cap 
 \bigcap_{\pinv{\vs{\xi}}\in      \pinv{\vvs{\xi}} }       \mathcal{D}_{\pinv{\vs{\xi}}\text{-sep}}.
\end{equation}
\end{subequations}

The meaning of the permutation invariant Level~III hierarchy can also be clarified.
If there exists a $\varrho \in \mathcal{C}_{\pinv{\vvs{\upsilon}}\text{-unc}}$ and an LO
mapping it into $\mathcal{C}_{\pinv{\vvs{\xi}}\text{-unc}}$, then $\pinv{\vvs{\upsilon}}\finereq\pinv{\vvs{\xi}}$;
and
if there exists a $\varrho \in \mathcal{C}_{\pinv{\vvs{\upsilon}}\text{-sep}}$ and an LOCC
mapping it into $\mathcal{C}_{\pinv{\vvs{\xi}}\text{-sep}}$, then $\pinv{\vvs{\upsilon}}\finereq\pinv{\vvs{\xi}}$.
In this sense, the permutation invariant Level~III hierarchy compares the strength of
correlation and entanglement among the permutation invariant classes.
(These come by the analogue result for the general case in Section~\ref{sec:general.LIII},
for the coarsened classification based on the permutation invariant properties
$P_\text{II*}=\sset{\vee s^{-1}(\pinv{\vs{\xi}})}{\pinv{\vs{\xi}}\in\pinv{P}_\text{II}}$,
see in the next section.)

\begin{figure}\centering
\includegraphics{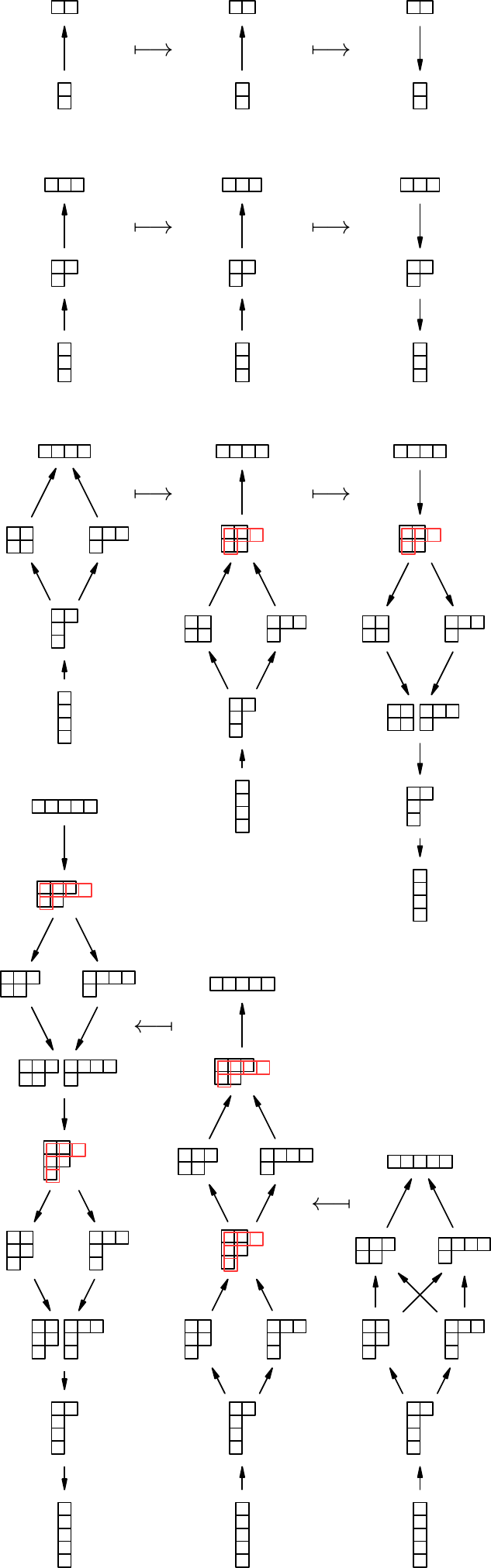}
\caption{The posets of the three-level structure of permutation invariant multipartite correlation and entanglement for $n=2,3,4$ and $5$.
Only the maximal elements of the down-sets of $\pinv{P}_\text{I}$ are shown (with different colors) in $\pinv{P}_\text{II}$, while
only the minimal elements of the up-sets of $\pinv{P}_\text{II}$ are shown (side by side) in $\pinv{P}_\text{III}$.
The partial orders \eqref{eq:poIp}, \eqref{eq:poIIp} and \eqref{eq:poIIIp} are represented by consecutive arrows.}\label{fig:Pps2345}
\end{figure}

\section{An alternative way of introducing permutation invariance}
\label{sec:altperminv}

The Level~II structure $P_\text{II}$ 
encodes the different partial correlation and entanglement properties.
In Sections~\ref{sec:general} and \ref{sec:perminv},
we have built up the structure of the classification,
parallel for the general, and for the permutation invariant cases.
Here we show a more compact, but less transparent treatment of the same structure,
by embedding both constructions into $P_\text{II}$.

The three-level building of the correlation and entanglement classification
was given by the construction
\begin{equation*}
P_\text{I} \;\;\mapsto\;\;
P_\text{II} =\mathcal{O}_\downarrow(P_\text{I})\setminus\set{\emptyset} \;\;\mapsto\;\;
P_\text{III} =\mathcal{O}_\uparrow(P_\text{II})\setminus\set{\emptyset}
\end{equation*}
in Section~\ref{sec:general}.
Because of \eqref{eq:oisomDI},
one may embed $P_\text{I}$ into $P_\text{II}$,
using the \emph{principal ideals} in $P_\text{II}$,
\begin{subequations}
\begin{equation}
P_\text{I}\cong P_\text{II pr.} := \bigsset{\downset\set{\xi}}{\xi\in P_\text{I}}\subseteq P_\text{II},
\end{equation}
by noting that
\begin{equation}
\upsilon \preceq \xi \dspiff \downset\set{\upsilon} \preceq \downset\set{\xi}.
\end{equation}
\end{subequations}
In this way, $P_\text{I}$ is a sublattice of $P_\text{II}$.

The same can be done for the permutation invariant correlation and entanglement classification
given by the construction
\begin{equation*}
\pinv{P}_\text{I} \;\;\mapsto\;\;
\pinv{P}_\text{II} =\mathcal{O}_\downarrow(\pinv{P}_\text{I})\setminus\set{\emptyset} \;\;\mapsto\;\;
\pinv{P}_\text{III} =\mathcal{O}_\uparrow(\pinv{P}_\text{II})\setminus\set{\emptyset}
\end{equation*}
in Section~\ref{sec:perminv}.
In the same way, because of \eqref{eq:oisomDIp},
one may embed $\pinv{P}_\text{I}$ into $\pinv{P}_\text{II}$,
using the \emph{principal ideals} in $\pinv{P}_\text{II}$,
\begin{subequations}
\begin{equation}
\pinv{P}_\text{I}\cong \pinv{P}_\text{II pr.} := \bigsset{\downset\set{\pinv{\xi}}}{\pinv{\xi}\in \pinv{P}_\text{I}}\subseteq \pinv{P}_\text{II},
\end{equation}
by noting that
\begin{equation}
\pinv{\upsilon} \preceq \pinv{\xi} \dspiff \downset\set{\pinv{\upsilon}} \preceq \downset\set{\pinv{\xi}}.
\end{equation}
\end{subequations}
In this way, $\pinv{P}_\text{I}$ is a sublattice of $\pinv{P}_\text{II}$.

The third point here is that,
noting that $\downset s^{-1}(\pinv{\xi}) = \vee s^{-1}(\downset\set{\pinv{\xi}})$
(for the proof, see \eqref{eq:labelsPpIpr} in Appendix~\ref{app:structPI.part}),
one may also embed $\pinv{P}_\text{II}$ (thus also $\pinv{P}_\text{I}$) into $P_\text{II}$,
using
\begin{subequations}
\begin{align}
\pinv{P}_\text{I}  &\cong                     
\bigsset{\vee s^{-1}(\downset\set{\pinv{\xi}})}{\pinv{\xi}\in \pinv{P}_\text{I}}
\subset P_\text{II},\\
\pinv{P}_\text{II} &\cong 
\bigsset{\vee s^{-1}(\pinv{\vs{\xi}})}{\pinv{\vs{\xi}}\in \pinv{P}_\text{II}}
=:P_\text{II pinv.} \subset P_\text{II},
\end{align}
by noting that
\begin{align}
\pinv{\upsilon} \preceq \pinv{\xi} &\dspiff 
  \vee s^{-1}(\downset\set{\pinv{\upsilon}}) \preceq  \vee s^{-1}(\downset\set{\pinv{\xi}}),\\
\pinv{\vs{\upsilon}} \preceq \pinv{\vs{\xi}} &\dspiff 
  \vee s^{-1}(\pinv{\vs{\upsilon}}) \preceq \vee s^{-1}(\pinv{\vs{\xi}}).
\end{align}
\end{subequations}
(For the proof, see \eqref{eq:embedPpIPII}
and \eqref{eq:embedPpIIPII} in Appendix~\ref{app:structPI.part}.)
In this way, $\pinv{P}_\text{I}$ and $\pinv{P}_\text{II}$ are sublattices of $P_\text{II}$.

A $\vs{\xi}\in P_\text{II}$ is permutation invariant, if and only if $\vee s^{-1}(s(\vs{\xi}))=\vs{\xi}$.
(Then $\vs{\xi}\in P_\text{II}$ describes the same property as $\pinv{\vs{\xi}}:=s(\vs{\xi})\in\pinv{P}_\text{II}$.)
So $P_\text{II pinv.}$ can be given directly as
\begin{equation}
P_\text{II pinv.} = \bigsset{\vs{\xi}\in\pinv{P}_\text{II}}{\vee s^{-1}(s(\vs{\xi}))=\vs{\xi}},
\end{equation}
and the permutation invariant classification can be described as a coarsened classification
$\pinv{P}_\text{III}\cong P_\text{III*}=\mathcal{O}_\uparrow(P_\text{II*})\setminus\set{\emptyset}$
with respect to the permutation invariant properties $P_\text{II*}=P_\text{II pinv.}$.

We note that $P_\text{II}$ could also be embedded into $P_\text{III}$ by principal filters
(and similarly $\pinv{P}_\text{II}$ into $\pinv{P}_\text{III}$),
however, the construction has led to a different direction.

\section{\texorpdfstring{$k$}{k}-partitionability, \texorpdfstring{$k$}{k}-producibility and \texorpdfstring{$k$}{k}-stretchability}
\label{sec:kpps}

In this section,
we consider $k$-partitionability and $k$-producibility
as particular cases of permutation invariant properties of partial correlation and entanglement,
and elaborate a duality between them.
Our point of view makes possible to reveal a new property, which we call $k$-stretchability,
combining some advantages of $k$-partitionability and $k$-producibility.
We also investigate the relations among these three properties.

\subsection{\texorpdfstring{$k$}{k}-partitionability and \texorpdfstring{$k$}{k}-producibility of correlation and entanglement}
\label{sec:kpps.pprecall}

\begin{figure*}\centering
\includegraphics{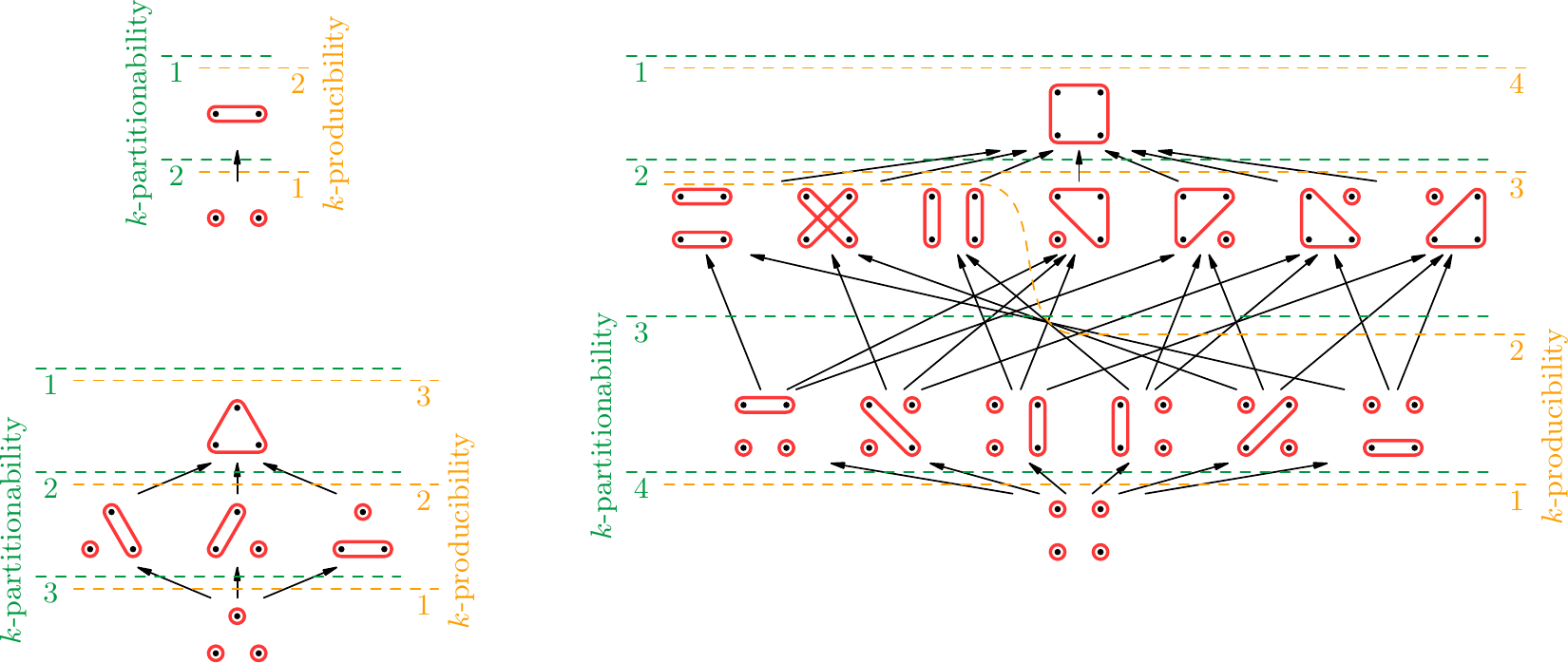}
\caption{The down-sets corresponding to $k$-partitionability and $k$-producibility, 
illustrated on the lattices $P_\text{I}$ for $n=2,3$ and $4$.
(The two kinds of down-sets contain the elements below the specific green and yellow dashed lines.)
}\label{fig:PIpp234}
\end{figure*}

$k$-partitionability and  $k$-producibility are permutation invariant Level~II properties.
A partition is $k$-partitionable, if the number of its parts is at least $k$,
while it is $k$-producible, if the sizes of its parts are at most $k$.
(For illustration, see Figure~\ref{fig:PIpp234}.)
These are encoded by the ideals of \emph{$k$-partitionable} and \emph{$k$-producible partitions},
\begin{subequations}
\label{eq:xipartprod}
\begin{align}
\label{eq:xipart}
\vs{\xi}_{k\text{-part}} &:= \bigsset{\xi\in P_\text{I}}{\abs{\xi}\geq k} \in P_\text{II},\\
\label{eq:xiprod}
\vs{\xi}_{k\text{-prod}} &:= \bigsset{\xi\in P_\text{I}}{\forall X\in\xi: \abs{X}\leq k} \in P_\text{II},
\end{align}
\end{subequations}
for $k = 1,2,\dots,n$,
forming chains in the lattice $P_\text{II}$,
\begin{subequations}
\label{eq:xipartprodch}
\begin{align}
\label{eq:xipartch}
\vs{\xi}_{l\text{-part}} \preceq \vs{\xi}_{k\text{-part}}  &\dspiff l \geq k,  \\
\label{eq:xiprodch}
\vs{\xi}_{l\text{-prod}} \preceq \vs{\xi}_{k\text{-prod}}  &\dspiff l\leq k.
\end{align}
\end{subequations}
Since $k$-partitionability and  $k$-producibility are permutation invariant properties,
it is enough to consider their types \eqref{eq:sII}
\begin{subequations}
\label{eq:xipartprodp}
\begin{align}
\label{eq:xipartp}
\begin{split}
\pinv{\vs{\xi}}_{k\text{-part}} &:= s(\vs{\xi}_{k\text{-part}}) \\
&\phantom{:}= \bigsset{\pinv{\xi}\in \pinv{P}_\text{I}}{\abs{\pinv{\xi}}\geq k} \in \pinv{P}_\text{II},
\end{split}\\
\label{eq:xiprodp}
\begin{split}
\pinv{\vs{\xi}}_{k\text{-prod}} &:= s(\vs{\xi}_{k\text{-prod}}) \\
&\phantom{:}= \bigsset{\pinv{\xi}\in \pinv{P}_\text{I}}{\forall x\in \pinv{\xi}: x\leq k} \in \pinv{P}_\text{II},
\end{split}
\end{align}
\end{subequations}
forming chains in the lattice $\pinv{P}_\text{II}$,
\begin{subequations}
\label{eq:xipartprodpch}
\begin{align}
\label{eq:xipartpch}
\pinv{\vs{\xi}}_{l\text{-part}} \preceq \pinv{\vs{\xi}}_{k\text{-part}}  &\dspiff l\geq k,  \\
\label{eq:xiprodpch}
\pinv{\vs{\xi}}_{l\text{-part}} \preceq \pinv{\vs{\xi}}_{k\text{-prod}}  &\dspiff l\leq k.
\end{align}
\end{subequations}

The corresponding
\emph{$k$-partitionably uncorrelated} and
\emph{$k$-producibly uncorrelated states} \eqref{eq:DuncIIp} are
\begin{subequations}
\label{eq:Dkpartproduncsep}
\begin{align}
\label{eq:Dkpartunc}
\begin{split}
&\mathcal{D}_{k\text{-part unc}}
 := \mathcal{D}_{\vs{\xi}_{k\text{-part}}\text{-unc}}
  = \mathcal{D}_{\pinv{\vs{\xi}}_{k\text{-part}}\text{-unc}}\\
 &= \Bigsset{\bigotimes_{X\in\xi}\varrho_X}{\forall \xi\in P_\text{I} \dispt{s.t.} \abs{\xi}\geq k },
\end{split}\\
\label{eq:Dkprodunc}
\begin{split}
&\mathcal{D}_{k\text{-prod unc}}
:= \mathcal{D}_{\vs{\xi}_{k\text{-prod}}\text{-unc}}
  = \mathcal{D}_{\pinv{\vs{\xi}}_{k\text{-prod}}\text{-unc}}\\
 &= \Bigsset{\bigotimes_{X\in\xi}\varrho_X}{\forall \xi\in P_\text{I} \dispt{s.t.} \forall X\in \xi: \abs{X}\leq k},
\end{split}
\end{align}
(with $\varrho_X\in\mathcal{D}_X$),
which are products
of density operators of at least $k$ subsystems,
and of density operators of subsystems containing at most $k$ elementary subsystems, respectively.
The corresponding \emph{$k$-partitionably separable} (also called \emph{$k$-separable}~\cite{Acin-2001,Guhne-2005,Seevinck-2008}) and
\emph{$k$-producibly separable} (also called \emph{$k$-producible}~\cite{Seevinck-2001,Guhne-2005,Toth-2010}) states \eqref{eq:DsepIIp} are
\begin{align}
\label{eq:Dkpartsep}
\begin{split}
&\mathcal{D}_{k\text{-part sep}}
:= \mathcal{D}_{\vs{\xi}_{k\text{-part}}\text{-sep}}
  = \mathcal{D}_{\pinv{\vs{\xi}}_{k\text{-part}}\text{-sep}}\\
 &= \Conv \mathcal{D}_{k\text{-part unc}}\\
 &= \Bigsset{\sum_j p_j \bigotimes_{X\in\xi_j}\varrho_{X,j}}{\forall j, \forall \xi_j\in P_\text{I} \dispt{s.t.} \abs{\xi_j}\geq k },
\end{split}\\
\label{eq:Dkprodsep}
\begin{split}
&\mathcal{D}_{k\text{-prod sep}}
:=\mathcal{D}_{\vs{\xi}_{k\text{-prod}}\text{-sep}}
  = \mathcal{D}_{\pinv{\vs{\xi}}_{k\text{-prod}}\text{-sep}}\\
 &= \Conv \mathcal{D}_{k\text{-prod unc}}\\
 &= \Bigsset{\sum_j p_j \bigotimes_{X\in\xi_j}\varrho_{X,j}}{\forall j, \\ & \qquad\qquad \forall \xi_j\in P_\text{I} \dispt{s.t.}\forall X\in \xi_j: \abs{X}\leq k},
\end{split}
\end{align}
\end{subequations}
which can be decomposed into $k$-partitionably, or $k$-producibly uncorrelated states, respectively.
Because of \eqref{eq:oisomDuncIIp},
these properties show the same chain structure 
as the chains of the corresponding partition ideals \eqref{eq:xipartprodpch},
that is,
\begin{subequations}
\label{eq:oisomDkpartproduncsep}
\begin{align}
\label{eq:oisomDkpartunc}
l  \geq k   &\dspiff  \mathcal{D}_{l\text{-part unc}} \subseteq \mathcal{D}_{k\text{-part unc}},\\
\label{eq:oisomDkprodunc}
l \leq k  &\dspiff  \mathcal{D}_{l\text{-prod unc}} \subseteq \mathcal{D}_{k\text{-prod unc}},\\
\label{eq:oisomDkpartsep}
l  \geq k   &\dspiff  \mathcal{D}_{l\text{-part sep}} \subseteq \mathcal{D}_{k\text{-part sep}},\\
\label{eq:oisomDkprodsep}
l \leq k  &\dspiff  \mathcal{D}_{l\text{-prod sep}} \subseteq \mathcal{D}_{k\text{-prod sep}},
\end{align}
\end{subequations}
that is,
if a state is $l$-partitionably uncorrelated (separable)
then it is also $k$-partitionably uncorrelated (separable) for all $l\geq k$,
and 
if a state is $l$-producibly uncorrelated (separable)
then it is also $k$-producibly uncorrelated (separable) for all $l\leq k$.

The corresponding
\emph{$k$-partitionability correlation} and
\emph{$k$-producibility correlation} \eqref{eq:CIIp} are \cite{Szalay-2017}
\begin{subequations}
\label{eq:CEkpartprod}
\begin{align}
\label{eq:Ckpart}
C_{k\text{-part}} &:= C_{\vs{\xi}_{k\text{-part}}} = C_{\pinv{\vs{\xi}}_{k\text{-part}}} = \min_{\xi:\abs{\xi}\geq k} C_\xi,\\
\label{eq:Ckprod}
C_{k\text{-prod}} &:= C_{\vs{\xi}_{k\text{-prod}}} = C_{\pinv{\vs{\xi}}_{k\text{-prod}}} = \min_{\xi:\forall X\in\xi:\abs{X}\leq k} C_\xi.
\end{align}
The corresponding
\emph{$k$-partitionability entanglement} and
\emph{$k$-producibility entanglement} \eqref{eq:EIIp} are \cite{Szalay-2015b}
\begin{align}
\label{eq:Ekpart}
E_{k\text{-part}} &:= E_{\vs{\xi}_{k\text{-part}}}= E_{\pinv{\vs{\xi}}_{k\text{-part}}},\\
\label{eq:Ekprod}
E_{k\text{-prod}} &:= E_{\vs{\xi}_{k\text{-prod}}}= E_{\pinv{\vs{\xi}}_{k\text{-prod}}}.
\end{align}
\end{subequations}
Because of \eqref{eq:mmIIp},
these measures show the same chain structure 
as the chains of the corresponding partition ideals \eqref{eq:xipartprodpch},
that is,
\begin{subequations}
\begin{align}
\label{eq:Cmmkpart}
l \geq k  &\dspiff  C_{l\text{-part}} \geq C_{k\text{-part}},\\
\label{eq:Cmmkprod}
l \leq k  &\dspiff  C_{l\text{-prod}} \geq C_{k\text{-prod}},\\
\label{eq:Emmkpart}
l \geq k  &\dspiff  E_{l\text{-part}} \geq E_{k\text{-part}},\\
\label{eq:Emmkprod}
l \leq k  &\dspiff  E_{l\text{-prod}} \geq E_{k\text{-prod}},
\end{align}
\end{subequations}
that is,
$l$-partitionability correlation (entanglement) is always stronger than
$k$-partitionability correlation (entanglement) for all $l\geq k$, 
and
$l$-producibility    correlation (entanglement) is always stronger than
$k$-producibility    correlation (entanglement) for all $l\leq k$, 
which is the multipartite monotonicity of these measures.

For the labeling of the strict partitionability and producibility properties,
we have the ideal filters
\begin{subequations}
\begin{align}
\pinv{\vvs{\xi}}_{k\text{-part}} &:= \upset\bigset{\pinv{\vs{\xi}}_{k\text{-part}}} = \bigsset{\pinv{\vs{\xi}}_{l\text{-part}}}{l\leq k},\\
\pinv{\vvs{\xi}}_{k\text{-prod}} &:= \upset\bigset{\pinv{\vs{\xi}}_{k\text{-prod}}} = \bigsset{\pinv{\vs{\xi}}_{l\text{-prod}}}{l\geq k},
\end{align}
\end{subequations}
by \eqref{eq:xipartprod} and \eqref{eq:xipartprodch}.
The corresponding 
\emph{strictly $k$-partitionably uncorrelated} and
\emph{strictly $k$-producibly uncorrelated states} \eqref{eq:CuncIIIp} are
\begin{subequations}
\begin{align}
\begin{split}
\mathcal{C}_{k\text{-part unc}}
&:= \mathcal{C}_{\pinv{\vvs{\xi}}_{k\text{-part}}\text{-unc}}\\
&\phantom{:}= \mathcal{D}_{k\text{-part unc}} \setminus \mathcal{D}_{(k+1)\text{-part unc}},
\end{split}\\
\begin{split}
\mathcal{C}_{k\text{-prod unc}}
&:= \mathcal{C}_{\pinv{\vvs{\xi}}_{k\text{-prod}}\text{-unc}}\\
&\phantom{:}= \mathcal{D}_{k\text{-prod unc}} \setminus \mathcal{D}_{(k-1)\text{-prod unc}},
\end{split}
\end{align}
which are products
of nonproduct density operators of exactly $k$ subsystems,
and of nonproduct density operators of subsystems containing exactly $k$ elementary subsystems, respectively.
The corresponding 
\emph{strictly $k$-partitionably separable} and
\emph{strictly $k$-producibly separable states} \eqref{eq:CsepIIIp} are
\begin{align}
\begin{split}
\mathcal{C}_{k\text{-part sep}}
&:= \mathcal{C}_{\pinv{\vvs{\xi}}_{k\text{-part}}\text{-sep}}\\
&\phantom{:}= \mathcal{D}_{k\text{-part sep}} \setminus \mathcal{D}_{(k+1)\text{-part sep}},
\end{split}\\
\begin{split}
\mathcal{C}_{k\text{-prod sep}}
&:= \mathcal{C}_{\pinv{\vvs{\xi}}_{k\text{-part}}\text{-sep}}\\
&\phantom{:}= \mathcal{D}_{k\text{-prod sep}} \setminus \mathcal{D}_{(k-1)\text{-prod sep}},
\end{split}
\end{align}
\end{subequations}
which can be decomposed into $k$-partitionably but not $(k+1)$-partitionably,
and $k$-producibly but not $(k-1)$-producibly uncorrelated states, respectively.
($\mathcal{D}_{(n+1)\text{-part unc/sep}}=\emptyset$ and $\mathcal{D}_{{0}\text{-prod unc/sep}}=\emptyset$ are understood.)

\subsection{Duality by conjugation}
\label{sec:kpps.dual}

Looking at the $k$-partitionability and $k$-producibility properties in Figure~\ref{fig:PIpp234},
and also at their definitions \eqref{eq:xipart} and \eqref{eq:xiprod},
it is not clear how these are related to each other.

In Section~\ref{sec:perminv}, the permutation invariant correlation properties were described by the use of integer partitions,
represented by Young diagrams in the figures.
Important properties of Young diagrams are their \emph{height}, \emph{width} and \emph{rank},
given for an integer partition $\pinv{\xi}\in\pinv{P}_\text{I}$ as
\begin{subequations}
\label{eq:hwr}
\begin{align}
\label{eq:h}
h(\pinv{\xi}) &:= \abs{\pinv{\xi}},\\
\label{eq:w}
w(\pinv{\xi}) &:= \max\pinv{\xi},\\
\label{eq:r}
r(\pinv{\xi}) &:= w(\pinv{\xi}) - h(\pinv{\xi}),
\end{align}
\end{subequations}
and also for set partition $\xi\in P_\text{I}$ as
$h(\xi):=h(s(\xi))=\abs{\xi}$,
$w(\xi):=w(s(\xi))=\max_{X\in\xi}\abs{X}$,
$r(\xi):=r(s(\xi))=w(\xi)-h(\xi)$.
It is enlightening now to arrange the poset $\pinv{P}_\text{I}$ 
according to the height and width of the partitions,
in the way as can be seen in Figure~\ref{fig:PpIpp23456},
i.e., 
the height is increasing downwards,
the width is increasing to the right, then
the rank is increasing up-right.
It is easy to see that 
the height is strictly decreasing, 
the width is increasing,
and the rank is strictly increasing monotone 
with respect to the refinement \eqref{eq:poIp},
\begin{equation}
\label{eq:hwmon}
\begin{split}
&\pinv{\upsilon}\prec\pinv{\xi} \dspthen \\
&\quad h(\pinv{\upsilon}) > h(\pinv{\xi}),\quad
w(\pinv{\upsilon}) \leq w(\pinv{\xi}),\quad
r(\pinv{\upsilon}) < r(\pinv{\xi}).
\end{split}
\end{equation}
(For illustration, see Figure~\ref{fig:PpIpp23456}:
the arrows point always upwards, and possibly to the up-right,
never to the up-left, horizontally or downwards.)

Now the partitionability and producibility properties \eqref{eq:xipart} and \eqref{eq:xiprod}
can be formulated with the height and width as
\begin{subequations}
\label{eq:xipartprodhw}
\begin{align}
\label{eq:xiparth}
\pinv{\vs{\xi}}_{k\text{-part}} &= \bigsset{\pinv{\xi}\in \pinv{P}_\text{I}}{h(\pinv{\xi})\geq k},\\
\label{eq:xiprodw}
\pinv{\vs{\xi}}_{k\text{-prod}} &= \bigsset{\pinv{\xi}\in \pinv{P}_\text{I}}{w(\pinv{\xi})\leq k},
\end{align}
so $k$-partitionability is the minimal height,
$k$-producibility is the maximal width of the partition.
In the height-width based arrangement of $\pinv{P}_\text{I}$ in Figure~\ref{fig:PpIpp23456},
the $k$-partitionable partitions \eqref{eq:xiparth} are below the $k$-th rows (counting from the top),
and $k$-producible    partitions \eqref{eq:xiprodw} are to the left from the $k$-th column (counting from the left).
We can also introduce another property, called \emph{$k$-stretchability}, as
\begin{equation}
\label{eq:xistrr}
\pinv{\vs{\xi}}_{k\text{-str}} = \bigsset{\pinv{\xi}\in \pinv{P}_\text{I}}{r(\pinv{\xi})\leq k},
\end{equation}
for $k = -(n-1),-(n-2),\dots,n-2,n-1$,
so $k$-stretchability is the maximal rank of the partition.
It is easy to see that $\pinv{\vs{\xi}}_{k\text{-str}}\in \pinv{P}_\text{II}$, that is, it is a down-set for all $k$,
by the third inequality in \eqref{eq:hwmon}.
We elaborate on the arising correlation and entanglement properties later, in Section~\ref{sec:kpps.stretch}.
\end{subequations}

\begin{figure*}\centering
\includegraphics{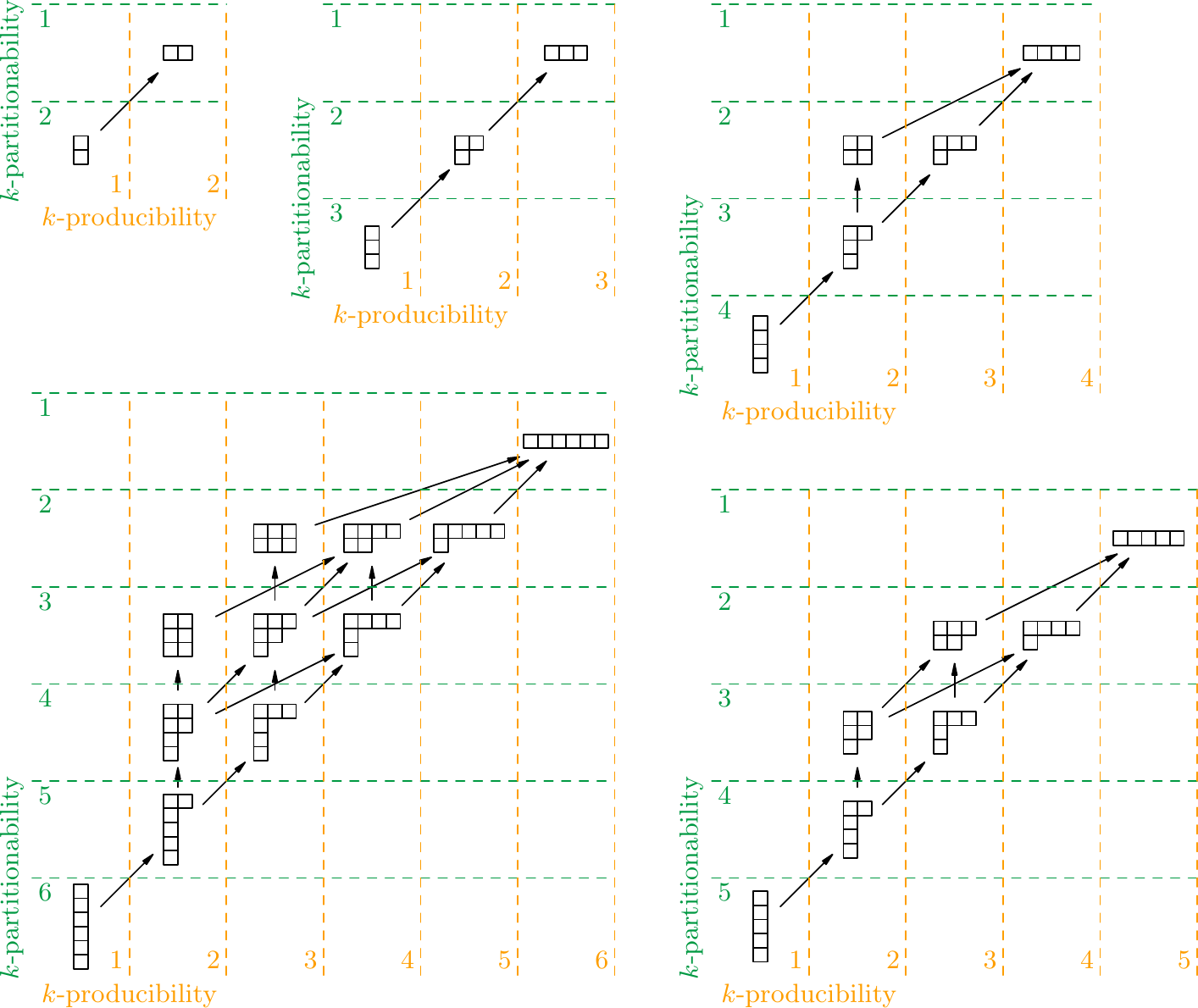}
\caption{The down-sets corresponding to $k$-partitionability and $k$-producibility, 
illustrated on the lattices $\pinv{P}_\text{I}$ for $n=2,3,4,5$ and $6$.
(The two kinds of down-sets contain the elements below the green and to the left from the yellow dashed lines.)
For the first three cases, compare with $P_\text{I}$ in Figure~\ref{fig:PIpp234}.
}\label{fig:PpIpp23456}
\end{figure*}

On the set of Young diagrams,
an involution arises naturally, called \emph{conjugation},
being the reflection of the diagram with respect to its ``diagonal'' \cite{Stanley-2012}.
Considering the integer partitions themselves, the conjugation is given as follows.
For decreasingly ordered values $(x_1,x_2,\dots,x_{\abs{\pinv{\xi}}})$ of the elements of $\pinv{\xi}$,
the number of elements $x'$ that equal to $i$ is $x_i-x_{i+1}$ in the conjugated partition (setting $x_{i+1}=0$ for $i=h(\pinv{\xi})$).
With multisets, the conjugation is given as
the map
$\pinv{P}_\text{I} \to\pinv{P}_\text{I}$,
\begin{equation}
\label{eq:conj}
\pinv{\xi}\mapsto\pinv{\xi}^\pconj
=\bigsmset{  \abs{\smset{x\in\pinv{\xi}}{x\geq i} } }{i=1,2,\dots,w(\pinv{\xi})},
\end{equation}
where both sets on the right-hand side are multisets.

The partial order \eqref{eq:poIp} does not show nice properties with respect to the conjugation.
(For illustration, see the height-width based arrangement in Figure~\ref{fig:PpIpp23456}.)
The conjugation is not an anti-automorphism,
if $\pinv{\upsilon}\preceq\pinv{\xi}$ then it does not follow that 
$\pinv{\upsilon}^\pconj\succeq\pinv{\xi}^\pconj$
(the first example is for $n=4$, where we have $\mset{2,1,1}\preceq\mset{2,2}$
and $\mset{2,1,1}^\pconj=\mset{3,1}\not\succeq\mset{2,2}=\mset{2,2}^\pconj$).
The conjugation is neither an automorphism of course,
if $\pinv{\upsilon}\preceq\pinv{\xi}$ then it does not follow that 
$\pinv{\upsilon}^\pconj\preceq\pinv{\xi}^\pconj$
(the first example is for $n=2$, where we have $\mset{1,1}\preceq\mset{2}$
and $\mset{1,1}^\pconj=\mset{2}\not\preceq\mset{1,1}=\mset{2}^\pconj$).
An integer partition $\pinv{\xi}\in\pinv{P}_\text{I}$
and its conjugate cannot be ordered by refinement
(the first example is for $n=6$, where we have $\mset{2,2,2}$ and $\mset{3,3}$, 
which are conjugates of each other, and cannot be ordered).

The conjugation interchanges the height and width, and multiplies the rank with $-1$,
\begin{equation}
\label{eq:hwrconj}
h(\pinv{\xi}^\pconj) = w(\pinv{\xi}),\quad
w(\pinv{\xi}^\pconj) = h(\pinv{\xi}),\quad
r(\pinv{\xi}^\pconj) = -r(\pinv{\xi}).
\end{equation}
(For illustration, see the height-width based arrangement in Figure~\ref{fig:PpIpp23456}:
the conjugation brings to the position mirrored with respect to the diagonal $w(\pinv{\xi}) = h(\pinv{\xi})$.)
Note that,
although the conjugation interchanges the height and width,
it does not interchange 
the partitionability \eqref{eq:xiparth} and producibility \eqref{eq:xiprodw} properties,
since these are given in different ways by height and width,
namely, by lower and upper bounds.

\subsection{\texorpdfstring{$k$}{k}-stretchability of correlation and entanglement}
\label{sec:kpps.stretch}

In Section~\ref{sec:kpps.dual}, 
we introduced the permutation invariant property $k$-stretchability \eqref{eq:xistrr}
for $k = -(n-1),-(n-2),\dots,n-2,n-1$.
Because of the third inequality in \eqref{eq:hwmon},
these form a chain in the lattice $\pinv{P}_\text{II}$,
\begin{equation}
\label{eq:xistrkch}
\pinv{\vs{\xi}}_{l\text{-str}} \preceq \pinv{\vs{\xi}}_{k\text{-str}}  \dspiff l\leq k.
\end{equation}
(For illustration, see the height-width based arrangement in Figure~\ref{fig:PpIpp23456}.)

The corresponding
\emph{$k$-stretchably uncorrelated states} \eqref{eq:DuncIIp} are
\begin{subequations}
\begin{equation}
\begin{split}
&\mathcal{D}_{k\text{-str unc}}
 := \mathcal{D}_{\pinv{\vs{\xi}}_{k\text{-str}}\text{-unc}}\\
&= \Bigsset{\bigotimes_{X\in\xi}\varrho_X}{\forall \xi\in P_\text{I} \dispt{s.t.} \max_{X\in \xi} \abs{X} - \abs{\xi}\leq k },
\end{split}
\end{equation}
which are products
of density operators 
of as few subsystems as possible, of size as large as possible, of difference upper-bounded by $k$.
In this sense, this is a combination of $k$-partitionability and $k$-producibility.
The corresponding \emph{$k$-stretchably separable states} \eqref{eq:DsepIIp} are
\begin{equation}
\begin{split}
&\mathcal{D}_{k\text{-str sep}}
:= \mathcal{D}_{\pinv{\vs{\xi}}_{k\text{-str}}\text{-sep}}
 = \Conv \mathcal{D}_{k\text{-str unc}}\\
&= \Bigsset{\sum_j p_j \bigotimes_{X\in\xi_j}\varrho_{X,j}}{\forall j, \\ &\qquad\qquad \forall\xi_j\in P_\text{I} \dispt{s.t.} \max_{X\in \xi_j} \abs{X} - \abs{\xi_j}\leq k },
\end{split}
\end{equation}
\end{subequations}
which can be decomposed into $k$-stretchably uncorrelated states.
Because of \eqref{eq:oisomDuncIIp},
these properties show the same chain structure
as the chains of the corresponding partition ideals \eqref{eq:xistrkch},
that is,
\begin{subequations}
\label{eq:oisomDkstruncsep}
\begin{align}
\label{eq:oisomDkstrunc}
l \leq k  &\dspiff  \mathcal{D}_{l\text{-str unc}} \subseteq \mathcal{D}_{k\text{-str unc}},\\
\label{eq:oisomDkstrsep}
l \leq k  &\dspiff  \mathcal{D}_{l\text{-str sep}} \subseteq \mathcal{D}_{k\text{-str sep}},
\end{align}
\end{subequations}
that is,
if a state is $l$-stretchably uncorrelated (separable)
then it is also $k$-stretchably uncorrelated (separable) for all $l\leq k$.

The corresponding
\emph{$k$-stretchability correlation} \eqref{eq:CIIp} is
\begin{subequations}
\label{eq:CEkstr}
\begin{align}
\label{eq:Ckstr}
C_{k\text{-str}} := C_{{\vs{\xi}}_{k\text{-str}}} = C_{\pinv{\vs{\xi}}_{k\text{-str}}} = \min_{\xi: \max\limits_{X\in\xi} \abs{X} - \abs{\xi}\leq k} C_\xi.
\end{align}
The corresponding 
\emph{$k$-stretchability entanglement} \eqref{eq:EIIp} is
\begin{align}
\label{eq:Ekstr}
E_{k\text{-str}} := E_{\vs{\xi}_{k\text{-str}}}= E_{\pinv{\vs{\xi}}_{k\text{-str}}}.
\end{align}
\end{subequations}
Because of \eqref{eq:mmIIp},
these measures show the same chain structure
as the chain of the corresponding partition ideals \eqref{eq:xistrkch},
that is,
\begin{subequations}
\begin{align}
\label{eq:Cmmkstr}
l \leq k  &\dspiff  C_{l\text{-str}} \geq C_{k\text{-str}},\\
\label{eq:Emmkstr}
l \leq k  &\dspiff  E_{l\text{-str}} \geq E_{k\text{-str}},
\end{align}
\end{subequations}
that is,
$l$-stretchability correlation (entanglement) is always stronger than
$k$-stretchability correlation (entanglement) for all $l\leq k$,
which is the multipartite monotonicity of these measures.

For the labeling of the strict stretchability properties,
we have the ideal filters
\begin{equation}
\pinv{\vvs{\xi}}_{k\text{-str}} := \upset\bigset{\pinv{\vs{\xi}}_{k\text{-str}}} = \bigsset{\pinv{\vs{\xi}}_{l\text{-str}}}{l\geq k},
\end{equation}
by \eqref{eq:xistrr} and \eqref{eq:xistrkch}.
The corresponding 
\emph{strictly $k$-stretchably uncorrelated states} \eqref{eq:CuncIIIp} are
\begin{subequations}
\begin{equation}
\mathcal{C}_{k\text{-str unc}}
 := \mathcal{C}_{\pinv{\vvs{\xi}}_{k\text{-str}}\text{-unc}}
 = \mathcal{D}_{k\text{-str unc}} \setminus \mathcal{D}_{(k-1)\text{-str unc}},
\end{equation}
which are products
of nonproduct density operators of as few subsystems as possible, of size as large as possible, of difference exactly $k$.
The corresponding
\emph{strictly $k$-stretchably separable states} \eqref{eq:CsepIIIp} are
\begin{equation}
\mathcal{C}_{k\text{-str sep}}
 := \mathcal{C}_{\pinv{\vvs{\xi}}_{k\text{-str}}\text{-sep}}
 = \mathcal{D}_{k\text{-str sep}} \setminus \mathcal{D}_{(k-1)\text{-str sep}},
\end{equation}
\end{subequations}
which can be decomposed into $k$-stretchably but not $(k-1)$-stretchably uncorrelated states.
($\mathcal{D}_{(-n)\text{-str unc/sep}}=\emptyset$ is understood.)

$k$-partitionability \eqref{eq:xiparth}, $k$-producibility \eqref{eq:xiprodw} and $k$-stretchability \eqref{eq:xistrr}
give, of course, a rather coarsened description of permutation invariant correlation or entanglement properties.
Because of this, they must show some disadvantages.
First, $k$-partitionability \eqref{eq:xiparth} 
takes into account only the number of subsystems uncorrelated with (or separable from) one another,
and does not distinguish between the cases 
when subsystems of roughly equal sizes are uncorrelated (or separable),
and when some elementary subsystems are uncorrelated with (or separable from) the rest of a large system,
although these two cases represent highly different situations form a resource-theoretical point of view.
(For illustration, see the rows in Figure~\ref{fig:PpIpp23456}.)
Second, $k$-producibility \eqref{eq:xiprodw} 
takes into account only the size of the largest subsystem uncorrelated with (or separable from) the other part of the system,
and does not distinguish between the cases
when a large subsystem is uncorrelated with (or separable from) a slightly smaller subsystem, 
or many elementary subsystems,
although these two cases represent again highly different situations form a resource-theoretical point of view.
(For illustration, see the columns in Figure~\ref{fig:PpIpp23456}.)
Third, $k$-stretchability \eqref{eq:xistrr} combines the advantages of the previous two properties in a balanced way,
it takes into account a kind of ``difference'' of them,
and does not distinguish among cases with rather different types.
(For illustration, see the lines parallel to the diagonal in Figure~\ref{fig:PpIpp23456}.)

\subsection{Relations among \texorpdfstring{$k$}{k}-partitionability, \texorpdfstring{$k$}{k}-producibility and \texorpdfstring{$k$}{k}-stretchability}
\label{sec:kpps.relations}

The height $h(\pinv{\xi})$,
width $w(\pinv{\xi})$ and
rank $r(\pinv{\xi})$ \eqref{eq:hwr}
of the integer partitions $\pinv{\xi}\in\pinv{P}_\text{I}$ of $n\in\field{N}$
are bounded by one another as
\begin{subequations}
\label{eq:hwrbound}
\begin{align}
\label{eq:hboundw}
n/w                                 &\leq  h  \leq  n+1-w, \\
\label{eq:wboundh}
n/h                                 &\leq  w  \leq  n+1-h,\\
\label{eq:rboundh}
n/h-h                               &\leq  r  \leq  n+1-2h,\\
\label{eq:rboundw}
-(n+1)+2w                           &\leq  r  \leq  w-n/w,\\
\label{eq:hboundr}
\frac12\bigl(\sqrt{r^2+4n}-r\bigr)  &\leq  h  \leq  \frac12(n+1-r),\\
\label{eq:wboundr}
\frac12\bigl(\sqrt{r^2+4n}+r\bigr)  &\leq  w  \leq  \frac12(n+1+r).
\end{align}
\end{subequations}
The first inequality in \eqref{eq:hboundw} can be seen
by noting that $n\leq hw$, because the right-hand side is 
the area of the smallest rectangle into which the Young diagram of $n$ boxes fits.
The second inequality in \eqref{eq:hboundw} can be seen
by noting that $h+w\leq n+1$, because the left-hand side is 
the half circumference of the smallest rectangle into which the Young diagram of $n$ boxes fits.
The inequalities in \eqref{eq:wboundh} come by the same reasoning, or by conjugation \eqref{eq:hwrconj} in \eqref{eq:hboundw}.
The inequalities in \eqref{eq:rboundh} come by subtracting $h$ from \eqref{eq:wboundh}.
The inequalities in \eqref{eq:rboundw} come by an analogous reasoning, or by conjugation \eqref{eq:hwrconj} in \eqref{eq:rboundh}.
The inequalities in \eqref{eq:hboundr} come by solving the respective inequalities in \eqref{eq:rboundh} for $h$.
The inequalities in \eqref{eq:wboundr} come by an analogous reasoning, or by conjugation \eqref{eq:hwrconj} in \eqref{eq:hboundr}.

The first  inequality in \eqref{eq:rboundw} and the second inequalities in the others in \eqref{eq:hwrbound} are saturated
for the partition $\pinv{\xi}=\mset{m,1,1,\dots,1}$, where the integer $1$ occurs $n-m$ times.
The second inequality in \eqref{eq:rboundw} and the first  inequalities in the others in \eqref{eq:hwrbound} are saturated
for the partition $\pinv{\xi}=\mset{m,m,\dots,m,n-(\lceil n/m\rceil-1) m}$, where the integer $m$ occurs $\lceil n/m\rceil-1$ times.
In the cases where noninteger value stays on one side of an inequality,
saturation is understood for the inequality strengthened by the
ceiling or floor functions
$\lceil  x \rceil  = \min\sset{m\in\mathbb{Z}}{m\geq x}$ or
$\lfloor x \rfloor = \max\sset{m\in\mathbb{Z}}{m\leq x}$;
that is, 
if $q\leq m\in\field{N}$, then also $\lceil  q\rceil \leq m$, and
if $q\geq m\in\field{N}$, then also $\lfloor q\rfloor\geq m$.
(For illustration, see Figure~\ref{fig:PpIpp23456}. 
The second inequalities in \eqref{eq:hboundw} and \eqref{eq:wboundh} express that
the arrangement of Young diagrams in Figure~\ref{fig:PpIpp23456} is ``skew upper triangular'',
while the first inequalities in \eqref{eq:hboundw} and \eqref{eq:wboundh} describe 
the ``hyperbolic'' shape of the upper boundary.
The other inequalities in \eqref{eq:hwrbound} give the boundaries in the rank-height or width-rank plane.)

Since the partitionability is related to lower-bounding the height \eqref{eq:xiparth},
      the producibility    is related to upper-bounding the width  \eqref{eq:xiprodw},
and   the stretchability   is related to their difference \eqref{eq:xistrr},
the bounds in \eqref{eq:hwrbound} lead to the following relations between these properties
\begin{subequations}
\label{eq:xipartprodstrbound}
\begin{align}
\label{eq:partboundprod}
\pinv{\vs{\xi}}_{k\text{-part}} &\finereq \pinv{\vs{\xi}}_{(n+1-k)\text{-prod}}, \\
\label{eq:partboundstr}
\pinv{\vs{\xi}}_{k\text{-part}} &\finereq \pinv{\vs{\xi}}_{(n+1-2k)\text{-str}}, \\
\label{eq:prodboundpart}
\pinv{\vs{\xi}}_{k\text{-prod}} &\finereq \pinv{\vs{\xi}}_{(\lceil n/k\rceil)\text{-part}}, \\
\label{eq:prodboundstr}
\pinv{\vs{\xi}}_{k\text{-prod}} &\finereq \pinv{\vs{\xi}}_{(k-\lceil n/k\rceil)\text{-str}}, \\
\label{eq:strboundpart}
\pinv{\vs{\xi}}_{k\text{-str}} &\finereq \pinv{\vs{\xi}}_{\frac12(\lceil\sqrt{k^2+4n}\rceil-k)\text{-part}}, \\
\label{eq:strboundprod}
\pinv{\vs{\xi}}_{k\text{-str}} &\finereq \pinv{\vs{\xi}}_{\frac12(n+1+k)\text{-prod}}. 
\end{align}
\end{subequations}
The relations in \eqref{eq:partboundprod} and \eqref{eq:partboundstr} can be seen
by using the second inequality in \eqref{eq:wboundh} and the second inequality in \eqref{eq:rboundh}, respectively, with \eqref{eq:xiparth}.
The relations in \eqref{eq:prodboundpart} and \eqref{eq:prodboundstr} can be seen
by using the first inequality in \eqref{eq:hboundw} and the second inequality in \eqref{eq:rboundw}, respectively, with \eqref{eq:xiprodw}.
(For noninteger values, strengthening by the ceiling function was also exploited, as before.)
The relations in \eqref{eq:strboundpart} and \eqref{eq:strboundprod} can be seen
by using the first inequality in \eqref{eq:hboundr} and the second inequality in \eqref{eq:wboundr}, respectively, with \eqref{eq:xistrr}.

From the relations \eqref{eq:xipartprodstrbound},
for the inclusion of the state spaces \eqref{eq:Dkpartproduncsep}, we have
\begin{subequations}
\label{eq:ppsincl}
\begin{align}
\mathcal{D}_{k\text{-part unc}} &\subseteq \mathcal{D}_{(n+1-k)\text{-prod unc}},\\
\mathcal{D}_{k\text{-part unc}} &\subseteq \mathcal{D}_{(n+1-2k)\text{-str unc}},\\
\mathcal{D}_{k\text{-prod unc}} &\subseteq \mathcal{D}_{\lceil n/k\rceil\text{-part unc}},\\
\mathcal{D}_{k\text{-prod unc}} &\subseteq \mathcal{D}_{(k-\lceil n/k\rceil)\text{-str unc}},\\
\mathcal{D}_{k\text{-str unc}}  &\subseteq \mathcal{D}_{\frac12(\lceil\sqrt{k^2+4n}\rceil-k)\text{-part unc}},\\
\mathcal{D}_{k\text{-str unc}}  &\subseteq \mathcal{D}_{\frac12(n+1+k)\text{-prod unc}},\\
\mathcal{D}_{k\text{-part sep}} &\subseteq \mathcal{D}_{(n+1-k)\text{-prod sep}},\\
\mathcal{D}_{k\text{-part sep}} &\subseteq \mathcal{D}_{(n+1-2k)\text{-str sep}},\\
\mathcal{D}_{k\text{-prod sep}} &\subseteq \mathcal{D}_{\lceil n/k\rceil\text{-part sep}},\\
\mathcal{D}_{k\text{-prod sep}} &\subseteq \mathcal{D}_{(k-\lceil n/k\rceil)\text{-str sep}},\\
\mathcal{D}_{k\text{-str sep}}  &\subseteq \mathcal{D}_{\frac12(\lceil\sqrt{k^2+4n}\rceil-k)\text{-part sep}},\\
\mathcal{D}_{k\text{-str sep}}  &\subseteq \mathcal{D}_{\frac12(n+1+k)\text{-prod sep}},
\end{align}
\end{subequations}
by the order isomorphisms \eqref{eq:oisomDIIp};
and
for the bounds of the measures \eqref{eq:CEkpartprod}, we have
\begin{subequations}
\label{eq:CEppsbound}
\begin{align}
\label{eq:Cpboundr}
C_{k\text{-part}} &\geq C_{(n+1-k)\text{-prod}},\\
\label{eq:Cpbounds}
C_{k\text{-part}} &\geq C_{(n+1-2k)\text{-str}},\\
\label{eq:Crboundp}
C_{k\text{-prod}} &\geq C_{\lceil n/k\rceil\text{-part}},\\
\label{eq:Crbounds}
C_{k\text{-prod}} &\geq C_{(k-\lceil n/k\rceil)\text{-str}},\\
\label{eq:Csboundp}
C_{k\text{-str}}  &\geq C_{\frac12(\lceil\sqrt{k^2+4n}\rceil-k)\text{-part}},\\
\label{eq:Csboundr}
C_{k\text{-str}}  &\geq C_{\frac12(n+1+k)\text{-prod}},\\
\label{eq:Epboundr}
E_{k\text{-part}} &\geq E_{(n+1-k)\text{-prod}},\\
\label{eq:Epbounds}
E_{k\text{-part}} &\geq E_{(n+1-2k)\text{-str}},\\
\label{eq:Erboundp}
E_{k\text{-prod}} &\geq E_{\lceil n/k\rceil\text{-part}},\\
\label{eq:Erbounds}
E_{k\text{-prod}} &\geq E_{(k-\lceil n/k\rceil)\text{-str}},\\
\label{eq:Esboundp}
E_{k\text{-str}}  &\geq E_{\frac12(\lceil\sqrt{k^2+4n}\rceil-k)\text{-part}},\\
\label{eq:Esboundr}
E_{k\text{-str}}  &\geq E_{\frac12(n+1+k)\text{-prod}},
\end{align}
\end{subequations}
by the multipartite monotonicity \eqref{eq:mmIIp}.

\section{Summary, remarks and open questions}
\label{sec:summ}

In this work we investigated
the \emph{partial correlation and entanglement properties} which are \emph{invariant under the permutations of the subsystems.}
The set partition based three-level structure,
 describing the classification of partial correlation and entanglement (Section~\ref{sec:general})
was mapped to a parallel, integer partition based three-level structure,
 describing the permutation invariant case (Section~\ref{sec:perminv}).
This mapping is easy to understand on Level~I of the construction,
however, to see that it is working well through the whole construction,
a formal proof was given (Appendix~\ref{app:structPI}).
The construction can be made more compact, although less transparent (Section~\ref{sec:altperminv}),
which can be used as a starting point of more advanced investigations.

We also investigated $k$-partitionability and $k$-producibility,
fitting naturally into the structure of permutation invariant properties.
A kind of combination of these two gives $k$-stretchability,
which is sensitive in a balanced way to
both the maximal size of correlated (or entangled) subsystems
 and the minimal number of subsystems uncorrelated with (or separable from) one another.
We studied their relations, and a duality, connecting the former two (Section~\ref{sec:kpps}).

In the following, we list some remarks and open questions.

The first point to note is that 
we have followed a treatment of entanglement,
which is somewhat different than the standard LOCC paradigm \cite{Werner-1989}.
The LOCC paradigm grabs the essence of entanglement 
as a correlation which cannot be created or increased by classical communication (classical interaction). 
This point of view leads to a classification too detailed and practically unaccomplishable for the multipartite scenario.
Our treatment is rooted more in statistics,
by noticing that
(i) \emph{pure states of classical systems are always uncorrelated,
so in pure states, correlations are of quantum origin,
and this is what we call entanglement~\cite{Schrodinger-1935a};} and
(ii) \emph{mixed states of classical systems can always be formed by forgetting about the identity of pure (hence uncorrelated) states,
so in mixed states, correlations which cannot be described in this way are of quantum origin,
and this is what we call entanglement.}
These principles are working painlessly in the multipartite scenario,
entangled states and correlation based definitions of entanglement measures
were defined based on these in the first two levels of the structure of multipartite entanglement.
In particular, 
from (i) it follows that entanglement in pure states should be measured by correlation 
(see \eqref{eq:EpI} and \eqref{eq:EpII}, and the corresponding quantities for the permutation invariant case,
and the notes in the end of Section~\ref{sec:general.LI}).
We emphasize that this statistical point of view is fully compatible with LOCC,
and has led to a much coarser, finite classification.

Being uncertain (or forgetting) about the identity of the state of the system
is a guiding principle in the definition of the different aspects of multipartite entanglement
in Section~\ref{sec:general}
(see also in point (ii) in Section VII.A. in \cite{Szalay-2015b}).\\
\textit{States in $\mathcal{D}_L$:}
We are uncertain about the (pure) state, by which the system is described
(Section~\ref{sec:general.L0}).\\
\textit{States in $\mathcal{D}_{\xi\text{-unc}}$ ($\mathcal{D}_{\xi\text{-sep}}$):}
We are uncertain about the pure state, by which the system is described,
but we are certain about the partition with respect to which the state is uncorrelated (separable)
(Section~\ref{sec:general.LI}).\\
\textit{States in $\mathcal{D}_{\vs{\xi}\text{-unc}}$ ($\mathcal{D}_{\vs{\xi}\text{-sep}}$):}
We are uncertain about the pure state, by which the system is described,
and we are also uncertain about the partition with respect to which the state is uncorrelated (separable), 
but we are certain about the possible partitions with respect to which the state is uncorrelated (separable)
(Section~\ref{sec:general.LII}).\\
\textit{States in $\mathcal{C}_{\vvs{\xi}\text{-unc}}$ ($\mathcal{C}_{\vvs{\xi}\text{-sep}}$):}
We are uncertain about the pure state, by which the system is described,
and we are also uncertain about the partition with respect to which the state is uncorrelated (separable),
but we are certain about the possible partitions with respect to which the state is uncorrelated (separable),
and we are also certain about the possible partitions with respect to which the state is correlated (entangled)
(Section~\ref{sec:general.LIII}).

We also mention here that
for the state sets of given multipartite correlation (entanglement) properties
$\mathcal{D}_{\vs{\xi}\text{-unc}}$ ($\mathcal{D}_{\vs{\xi}\text{-sep}}$),
except from the fully uncorrelated (or fully separable) case $\vs{\xi}=\set{\bot}$,
we cannot formulate semigroups of quantum channels 
which could play the role of ``free operations'' for these ``free state'' sets
in a usual resource-theoretical scenario \cite{Chitambar-2019}.
Instead of this, being uncertain (or forgetting) about the identity of the \emph{maps} is the guiding principle here
(see the characterizations in Sections~\ref{sec:general.LI} and \ref{sec:general.LII},
as well as in the permutation invariant case in Sections~\ref{sec:perminv.LIp} and \ref{sec:perminv.LIIp}).
Note that this is physical: only the fully uncorrelated (fully separable) states are for free,
the others are resource states, possibly of different ``value.''

Note that, after the general case (Section~\ref{sec:general}),
we considered the
permutation invariant (correlation or entanglement) \emph{properties} (from Section~\ref{sec:perminv}).
The use of these is not restricted to permutation invariant \emph{states}:
these are simply the permutation invariant properties of also non-permutation-invariant states.
If the states considered are permutation invariant
(for example, in the first quantization of bosonic or fermionic systems),
then these are the only relevant properties.

The permutation invariant properties were described by the use of \emph{integer partitions,}
for which a refinement-like order was given,
which is just the coarsening of the refinement order of \emph{set partitions.}
In the literature, 
there are several partial orders constructed for integer partitions,
such as
the \emph{dominance order} \cite{Brylawski-1973}
(based on majorization, leading to a lattice), 
the \emph{Young order} \cite{Andrews-1984,Stanley-2012}
(based on diagram containment,
given for partitions of all integers, leading to \emph{Young's lattice},
here partitions of the same integer cannot be ordered,
and the order is invariant to the conjugation),
or the \emph{reverse lexicographic order} 
(leading to a chain).
The \emph{refinement order} for integer partitions,
introduced in \eqref{eq:poIp} through the \emph{refinement order} for set partitions \eqref{eq:poI},
is different from the above orders,
and, to our knowledge \cite{Andrews-1984,Stanley-2012}, was not considered upon its merits in the literature before.
Although Birkhoff mentioned this partial order in his book \cite{Birkhoff-1973} as an example, 
it was not used for any reasonable purpose.
The reason for this might be that the resulting poset shows properties not so nice or powerful as those shown by the others.
However, this order is what needed is in the classification problem of permutation invariant properties,
that is reflected also by the monotonicities \eqref{eq:hwmon}.

$k$-partitionability is about the natural gradation 
of the lattice of set partitions $P_\text{I}$ or 
of the poset of integer partitions $\pinv{P}_\text{I}$,
but it was not clear, how $k$-producibility can be understood.
The conjugation of integer partitions has explained the role of the latter,
by establishing a kind of duality between the two properties.
Note, however, that this duality is only partial:
although the conjugation of integer partitions \eqref{eq:conj} interchanges the height and the width \eqref{eq:hwrconj},
by which the $k$-partitionability and $k$-producibility properties are given,
but the latter properties are given in an opposite way by height and width, see \eqref{eq:xiparth} and \eqref{eq:xiprodw}.
So the conjugation does not interchange $k$-partitionability and $k$-producibility,
it establishes a connection on the deeper level of integer partitions only.
This is also reflected in the relations among the different $k$-partitionability and $k$-producibility properties,
given in \eqref{eq:partboundprod} and \eqref{eq:prodboundpart}.

For $n\leq6$, the height and width determine $\pinv{\xi}$ uniquely.
For $n=7$, we have $\mset{3,2,2}$ and $\mset{3,3,1}$, being of height $3$ and width $3$.
Note that two different integer partitions of the same height and width can never be ordered.
(Ordered pairs are of different height, 
because of the strict monotonicity of the height \eqref{eq:hwmon}.)

The property of $k$-stretchability \eqref{eq:xistrr} was introduced by the rank of the partition.
We note that the \emph{rank} of an integer partition \eqref{eq:r} was defined and used originally
in a very different context in number theory and combinatorics.
It was introduced by Freeman Dyson \cite{Dyson-1944} 
in his investigations of Ramanujan's congruences in the partition function \cite{oeisA000041}. 

For $n$ subsystems,
the number of nontrivial $k$-partitionability and $k$-producibility properties is $n-1$ in both cases 
($1$-partitionability and $n$-producibility are trivial),
while 
the number of nontrivial $k$-stretchability properties is $2(n-1)$
($(n-1)$-stretchability is trivial).
All of these three properties form chains, see \eqref{eq:xipartpch}, \eqref{eq:xiprodpch} and \eqref{eq:xistrkch},
and the relations among them are given in \eqref{eq:xipartprodstrbound}.
Note that $k$-stretchability combines the advantages of $k$-partitionability and $k$-producibility,
in the sense that 
it rewards large correlated (or entangled) subsystems, 
while it punishes the larger number of subsystems uncorrelated (or separable) with one another.
Also, 
while the relations between $k$-partitionability and $k$-producibility 
are highly nontrivial \eqref{eq:partboundprod}, \eqref{eq:prodboundpart},
the $k$-stretchability properties form a chain \eqref{eq:xistrkch}.
The price to pay for this is that 
sometimes there are rather different properties not distinguished by stretchability,
for example, $\mset{3,3}$ and $\mset{4,1,1}$ are both $1$-stretchable.
Taking stretchability seriously
leads to a highly nontrivial, balanced comparison of permutation invariant multipartite correlation or entanglement properties.

\begin{acknowledgments}
Discussions with \emph{G\'eza T\'oth} and \emph{Mih\'aly M\'at\'e} are gratefully acknowledged.
This research was financially supported by
the {National Research, Development and Innovation Fund of Hungary}
within the \textit{Researcher-initiated Research Program} (project Nr:~NKFIH-K120569)
and within the \textit{Quantum Technology National Excellence Program} (project Nr:~2017-1.2.1-NKP-2017-00001),
the Ministry for Innovation and Technology
within the \textit{{\'U}NKP-19-4 New National Excellence Program},
and the {Hungarian Academy of Sciences}
within the \textit{J\'anos Bolyai Research Scholarship}
and the \textit{``Lend\"ulet'' Program}.
\end{acknowledgments}

\appendix
\section{On the structure of the classification of permutation invariant correlations}
\label{app:structPI}

\subsection{Coarsening a poset}
\label{app:structPI.coars}

Let us have a \emph{finite set} $P$ endowed with a \emph{binary relation} $\preceq$,
and a \emph{function} $f$, mapping $P$ to $P':=f(P)$.
(In the appendix we consider finite sets only, even if it is not mentioned explicitly.)
The same function transforms also the relation naturally as follows.
A binary relation can be considered as a set of ordered pairs in $P$,
written as $\preceq\;\equiv \sset{(b,a)\in P\times P}{b\preceq a}$,
then $\preceq' := (f\times f)(\preceq) = 
\sset{(f(b),f(a))}{ \forall (b,a)\in\;\preceq}$ by the elementwise action of $f\times f$.
We use the shorthand notation $\preceq'=f(\preceq)$ for this.
The meaning of this is that
\begin{equation}
\begin{split}
\label{eq:relp}
&\forall a',b'\in P', \quad b'\preceq'a' \dspiff \\
&\qquad\exists a\in f^{-1}(a'), \exists b\in f^{-1}(b') \dispt{s.t.} b\preceq a.
\end{split}
\end{equation}
(Here $f^{-1}(a')=\sset{a\in P}{f(a)=a'}\subseteq P$ denotes the inverse image of the singleton $\set{a'}$.)

If the binary relation $\preceq$ is a \emph{partial order} \cite{Davey-2002,Stanley-2012},
the transformed one $\preceq'$ is not necessarily that.
To make it a partial order, we need some constraints imposed on $f$.
Let us introduce the following three conditions:
\begin{subequations}
\label{eq:constr}
\begin{align}
\label{eq:densd}
\begin{split}
&\forall a',b'\in P', \dispt{if} b'\preceq'a' \dispt{then} \\
&\qquad\forall a \in f^{-1}(a'), \; \exists b \in f^{-1}(b'), \dispt{s.t.}  b\preceq a;
\end{split} \\
\label{eq:densu}
\begin{split}
&\forall a',b'\in P', \dispt{if} b'\preceq'a' \dispt{then} \\
&\qquad\forall b \in f^{-1}(b'), \; \exists a \in f^{-1}(a'), \dispt{s.t.}  b\preceq a;
\end{split} \\
\label{eq:sol}
\begin{split}
&\forall a,b,c\in P, \dispt{if} c\preceq b\preceq a \\
&\qquad\dispt{and} f(c) = f(a) \dispt{then} f(b)=f(a).
\end{split}
\end{align}
\end{subequations}
These conditions, although being rather strong, hold in the construction in which we need them
(see Appendix~\ref{app:structPI.part}, and Lemma~\ref{lem:ref.constr}). 
They turn out to be sufficient for $\preceq'$ to be a partial order,
as the following lemma states.

\begin{lem}
\label{lem:fpo}
Let $(P,\preceq)$ be a poset, then $(P',\preceq') = (f(P),f(\preceq))$ is a poset
if \eqref{eq:densd} and \eqref{eq:sol} hold, or
if \eqref{eq:densu} and \eqref{eq:sol} hold.
\end{lem}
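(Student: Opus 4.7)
The plan is to verify the three defining properties of a partial order for $\preceq'$, namely reflexivity, transitivity, and antisymmetry, by unfolding the definition \eqref{eq:relp} and using the stated conditions. The two alternative hypotheses \eqref{eq:densd} and \eqref{eq:densu} will play essentially symmetric roles, so I would prove the case $\eqref{eq:densd} + \eqref{eq:sol}$ in detail and note that the other case goes through analogously by reversing the direction in which the representatives are chosen.

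\textbf{Reflexivity.} This requires no hypothesis beyond surjectivity of $f$ onto $P'$: given $a'\in P'$, pick any $a\in f^{-1}(a')$; then $a\preceq a$ holds in $P$, and \eqref{eq:relp} yields $a'\preceq' a'$.

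\textbf{Transitivity.} Assume $a'\preceq' b'$ and $b'\preceq' c'$. By \eqref{eq:relp} there exist $a_1\in f^{-1}(a'), b_1\in f^{-1}(b')$ with $a_1\preceq b_1$, and $b_2\in f^{-1}(b'), c_2\in f^{-1}(c')$ with $b_2\preceq c_2$. The difficulty is that $b_1$ and $b_2$ need not coincide; this is exactly the role of the density conditions. Under \eqref{eq:densd}, since $a'\preceq' b'$, for the particular representative $b_2\in f^{-1}(b')$ there exists $a_2\in f^{-1}(a')$ with $a_2\preceq b_2$. Combining with $b_2\preceq c_2$ and transitivity of $\preceq$ gives $a_2\preceq c_2$, whence $a'\preceq' c'$ by \eqref{eq:relp}. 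In the alternative case, \eqref{eq:densu} applied to $b'\preceq' c'$ with the specific representative $b_1$ supplies a matching $c_1\in f^{-1}(c')$ above it, and one chains $a_1\preceq b_1\preceq c_1$ instead.

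\textbf{Antisymmetry.} This is the step I expect to be the main obstacle, since neither density condition alone suffices: it is here that \eqref{eq:sol} does real work by forcing any element squeezed between two elements of the same fibre to itself lie in that fibre. Assume $a'\preceq' b'$ and $b'\preceq' a'$. Pick any $a_0\in f^{-1}(a')$. Applying \eqref{eq:densd} to $b'\preceq' a'$ with the representative $a_0$, we get $b_1\in f^{-1}(b')$ with $b_1\preceq a_0$. Now applying \eqref{eq:densd} to $a'\preceq' b'$ with the representative $b_1\in f^{-1}(b')$, we obtain $a_1\in f^{-1}(a')$ with $a_1\preceq b_1$. Thus $a_1\preceq b_1\preceq a_0$ with $f(a_1)=f(a_0)=a'$, and \eqref{eq:sol} forces $f(b_1)=a'$; but $f(b_1)=b'$ by construction, so $a'=b'$. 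In the alternative case, one starts from an arbitrary $b_0\in f^{-1}(b')$ and uses \eqref{eq:densu} twice to build the same type of sandwich $b_0\preceq a_1\preceq b_1$ with $f(b_0)=f(b_1)=b'$, then applies \eqref{eq:sol}.

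Assembling the three parts yields the claim. The only subtlety worth flagging is that \eqref{eq:sol} must be applied to a chain whose endpoints lie in a common fibre and whose middle element lies in the other fibre, and the density axiom is precisely what lets one construct such a chain starting from the two directional inequalities in $P'$; without it the fibres of $a'$ and $b'$ could interleave in $P$ without ever producing an incident triple.
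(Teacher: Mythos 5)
Your proof is correct and follows essentially the same route as the paper's: reflexivity from surjectivity of $f$, transitivity by using one density condition to align the middle representatives, and antisymmetry by building a chain $a_1\preceq b_1\preceq a_0$ with endpoints in the fibre of $a'$ and invoking \eqref{eq:sol}. The only cosmetic difference is that you apply \eqref{eq:densd} twice in the antisymmetry step, where the paper extracts the first witnessing pair directly from the definition \eqref{eq:relp}; this changes nothing of substance.
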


\begin{proof}
We need to prove that $\preceq'$ is a partial order in these cases.\\
(i) \emph{Reflexivity} ($\forall a'\in P'$, $a'\preceq'a'$):
for all $a'\in P'$ we have $f^{-1}(a')\neq\emptyset$, since $f$ is surjective,
and we need $b\in f^{-1}(a')$ and $a\in f^{-1}(a')$ for which $b\preceq a$ by \eqref{eq:relp};
this holds for the choice $b=a$, since the partial order $\preceq$ is reflexive, $a\preceq a$.
(This proof does not use any of the constraints \eqref{eq:constr}.)\\
(ii) \emph{Antisymmetry} ($\forall a',b'\in P'$, if $a'\preceq'b'$ and $b'\preceq'a'$, then $a'=b'$):
let $b'\preceq'a'$, which means that there exist $a\in f^{-1}(a')$ and $b\in f^{-1}(b')$ such that $b\preceq a$, by \eqref{eq:relp}.
Fix such a pair $b$ and $a$.
Also, let $a'\preceq'b'$, applying \eqref{eq:densd} for $b$, we have that there exists $c\in f^{-1}(a')$, such that $c\preceq b$.
Then, applying \eqref{eq:sol} to $c\preceq b\preceq a$, we have that $b'=f(b)=f(a)=a'$.
(This proof uses \eqref{eq:densd} and \eqref{eq:sol}.
A similar proof can be given by using \eqref{eq:densu} and \eqref{eq:sol}.)\\
(iii) \emph{Transitivity} ($\forall a',b',c'\in P'$, if $c'\preceq'b'$ and $b'\preceq'a'$, then $c'\preceq'a'$):
let $b'\preceq'a'$, which means that there exist $a\in f^{-1}(a')$ and $b\in f^{-1}(b')$ such that $b\preceq a$, by \eqref{eq:relp}.
Fix such a pair $b$ and $a$.
Also, let $c'\preceq'b'$, applying \eqref{eq:densd} for $b$, we have that there exists $c\in f^{-1}(c')$, such that $c\preceq b$.
Then, since the partial order $\preceq$ is transitive, we have that $c\preceq a$ for an $a\in f^{-1}(a')$ and $c\in f^{-1}(c')$,
which means that $c'\preceq'a'$ by \eqref{eq:relp}.
(This proof uses \eqref{eq:densd}.
A similar proof can be given by using \eqref{eq:densu}.)
\end{proof}

The following construction would be much simpler if
the conditions \eqref{eq:constr} would also be necessary.
Note that this is not the case:
if $\preceq'$ is a partial order, then \eqref{eq:sol} holds,
but \eqref{eq:densd} and \eqref{eq:densu} do not hold.
For example, for the poset $P=\set{a_1,a_2,b}$ with the only arrow $b\prec a_1$,
and the function given as $f(a_1)=f(a_2)=a'\neq f(b)=b'$, we have $b'\prec' a'$, but \eqref{eq:densd} does not hold.

If \eqref{eq:densd} or \eqref{eq:densu}, and \eqref{eq:sol} hold, then
the new poset $(P',\preceq')$ can be considered as a coarsening of $(P,\preceq)$ by $f$.
($f:P\to P'$ is surjective by definition.
If it is also injective, then $(P,\preceq)$ and $(P',\preceq')$ are isomorphic,
and the coarsening is trivial.)
Note that, because of the construction, $f:P\to P'$ is automatically monotone,
\begin{equation}
\label{eq:mon}
b\preceq a \dspthen f(b) \preceq' f(a).
\end{equation}
(Indeed, $a\in f^{-1}(f(a))$ and $b\in f^{-1}(f(b))$,
so \eqref{eq:relp} holds.)
Note also that, 
if the poset $(P,\preceq)$ is a lattice,
the transformed one $(P',\preceq')$ is not necessarily that.
To make it a lattice, further conditions have to be imposed; this problem is not addressed here.

\subsection{Down-sets and up-sets}
\label{app:structPI.du}

In a poset $P$,
a \emph{down-set} (\emph{order ideal}) is a subset $\ve{a}\subseteq P$, which is closed downwards \cite{Davey-2002,Stanley-2012},
\begin{subequations}
\begin{equation}
\label{eq:dset}
\ve{a}\in \mathcal{O}_\downarrow(P) \dspdef
\forall a\in\ve{a}, \; \forall b\preceq a: \; b\in\ve{a};
\end{equation}
while an \emph{up-set} (\emph{order filter}) is a subset $\ve{a}\subseteq P$, which is closed upwards \cite{Davey-2002,Stanley-2012},
\begin{equation}
\label{eq:uset}
\ve{a}\in \mathcal{O}_\uparrow(P) \dspdef
\forall a\in\ve{a}, \; \forall b\succeq a: \; b\in\ve{a}.
\end{equation}
\end{subequations}
These form lattices with respect to the inclusion $\subseteq$,
with the join $\vee$ (least upper bound), being the union $\cup$,
and the meet $\wedge$ (greatest lower bound), being the intersection $\cap$.

As in the previous section,
let us have the poset $(P,\preceq)$,
the function $f$,
by which $(P',\preceq') := (f(P),f(\preceq))$,
and
for which \eqref{eq:densd} or \eqref{eq:densu}, and \eqref{eq:sol} hold.
Additionally, let $P$ have \emph{bottom} and \emph{top elements}.
(Then also $P'$ has \emph{bottom} and \emph{top elements}.
Indeed, this is because the bottom and top elements in $P$ are mapped 
to the bottom and top elements in $P'$, because of the \eqref{eq:mon} monotonicity of $f$.)
Now let us form from the posets $(P,\preceq)$ and $(P',\preceq')$
the lattices of nonempty down-sets
\begin{subequations}
\label{eq:buildup}
\begin{align}
(Q ,\sqsubseteq ) &:= (\mathcal{O}_\downarrow(P )\setminus\set{\emptyset},\subseteq),\\
(Q',\sqsubseteq') &:= (\mathcal{O}_\downarrow(P')\setminus\set{\emptyset},\subseteq).
\end{align}
\end{subequations}
(The existence of the bottom elements in $P$ and $P'$
ensures that not only the down-sets but also the nonempty down-sets form lattices in both cases \cite{Szalay-2015b}.)
Then, by denoting the elementwise action of $f$ with $g:2^P\to 2^{P'}$,
that is, $g(\ve{a})=\sset{f(a)}{a\in\ve{a}}$,
in the following lemmas we will show
that $(Q',\sqsubseteq') = (g(Q),g(\sqsubseteq))$,
that is, the diagram 
\begin{equation}
\label{eq:cd}
\xymatrix@M+=8bp{
(Q,\sqsubseteq)   \ar@{|->}[r]^g &
(Q',\sqsubseteq')  \\ 
(P,\preceq)       \ar@{|->}[r]^f \ar@{|->}[u]_{\mathcal{O}_\downarrow\setminus\set{\emptyset}} &
(P',\preceq')     \ar@{|->}[u]_{\mathcal{O}_\downarrow\setminus\set{\emptyset}} 
}
\end{equation}
commutes.
Here $g(\sqsubseteq)$ is the partial order transformed by $g$ in the same way as $\preceq$ was transformed by $f$ in \eqref{eq:relp},
and we also have automatically that $g$ is monotone for $\sqsubseteq$ and $g(\sqsubseteq)$, in the same way as in \eqref{eq:mon}.

The following two technical lemmas concerning the sets $\bigcup_{a'\in\ve{a}'} f^{-1}(a')\subseteq P$
will be used several times later.

\begin{lem}
\label{lem:technical1}
In the above setting
(and assuming \eqref{eq:constr}),
for all $\ve{a}'\in Q'$, we have $g\bigl(\bigcup_{a'\in\ve{a}'} f^{-1}(a')\bigr)=\ve{a}'$.
\end{lem}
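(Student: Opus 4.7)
The plan is to prove the two inclusions separately; neither direction requires the structural hypotheses \eqref{eq:constr} or the fact that $\ve{a}'$ is a down-set. What we genuinely need is only that $f:P\to P'=f(P)$ is surjective (which holds by the very definition of $P'$), so that $f^{-1}(a')\neq\emptyset$ for every $a'\in P'$, together with the definition of $g$ as the elementwise (direct image) action of $f$ on subsets, $g(\ve{a})=\sset{f(a)}{a\in\ve{a}}$.

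For the inclusion $g\bigl(\bigcup_{a'\in\ve{a}'} f^{-1}(a')\bigr)\subseteq \ve{a}'$, I would take an arbitrary element of the left-hand side: by definition of $g$, it is of the form $f(a)$ for some $a\in \bigcup_{a'\in\ve{a}'} f^{-1}(a')$. Unpacking the union, there is a $c'\in\ve{a}'$ with $a\in f^{-1}(c')$, i.e.\ $f(a)=c'$. Hence the chosen element equals $c'\in \ve{a}'$, as required.

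For the reverse inclusion $\ve{a}'\subseteq g\bigl(\bigcup_{a'\in\ve{a}'} f^{-1}(a')\bigr)$, I would pick any $b'\in \ve{a}'$. Since $f$ is surjective onto $P'$, the preimage $f^{-1}(b')$ is nonempty; choose some $a\in f^{-1}(b')$. Then $a\in\bigcup_{a'\in\ve{a}'} f^{-1}(a')$, so $b'=f(a)\in g\bigl(\bigcup_{a'\in\ve{a}'} f^{-1}(a')\bigr)$, completing the argument.

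There is no real obstacle here; this is essentially the identity ``direct image of the full preimage of a set equals that set'' in the presence of surjectivity. I would expect the author to present it in one or two lines. The lemma is stated as \emph{technical} precisely because it is the base step for the subsequent, more substantive lemma that controls how $g$ interacts with $\mathcal{O}_\downarrow$ and the induced order $\sqsubseteq'$, where the conditions \eqref{eq:densd}, \eqref{eq:densu}, \eqref{eq:sol} will actually be used.
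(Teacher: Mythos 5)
Your proof is correct and is in substance the same as the paper's: the paper compresses the argument into a one-line equational chain,
$g\bigl(\bigcup_{a'\in\ve{a}'} f^{-1}(a')\bigr)
= \bigcup_{a'\in\ve{a}'} g\bigl(\sset{a\in P}{f(a)=a'}\bigr)
= \bigcup_{a'\in\ve{a}'}\set{a'}
= \ve{a}'$,
where the middle equality uses exactly the elementwise action of $g$ and the nonemptiness of fibers (surjectivity of $f$ onto $P'=f(P)$) that your two-inclusion argument isolates. Your side observation is also accurate: the paper's proof likewise makes no use of \eqref{eq:constr} or of $\ve{a}'$ being a down-set.
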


\begin{proof}
This is because $g$ is the elementwise action of $f$, so
$g\bigl(\bigcup_{a'\in\ve{a}'} f^{-1}(a')\bigr)
= g\bigl(\bigcup_{a'\in\ve{a}'} \sset{a\in P}{f(a)=a'}\bigr)
= \bigcup_{a'\in\ve{a}'} g\bigl(\sset{a\in P}{f(a)=a'}\bigr)
=\bigcup_{a'\in\ve{a}'} \set{a'} 
= \ve{a}'$.
\end{proof}

\begin{lem}
\label{lem:technical2}
In the above setting
(and assuming \eqref{eq:constr}),
for all $\ve{a}'\in Q'$, we have $\bigcup_{a'\in\ve{a}'} f^{-1}(a')\in Q$,
that is, it is a nonempty down-set.
\end{lem}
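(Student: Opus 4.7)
The plan is to verify directly the two conditions defining membership in $Q=\mathcal{O}_\downarrow(P)\setminus\set{\emptyset}$: that the set $U := \bigcup_{a'\in\ve{a}'} f^{-1}(a')$ is nonempty, and that it is closed downwards with respect to $\preceq$.

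For nonemptiness, I would use the fact that $\ve{a}'\in Q'$ is nonempty by construction, so I can pick some $a'\in\ve{a}'$. Since $f:P\to P'=f(P)$ is surjective by definition, the fiber $f^{-1}(a')$ is nonempty, and hence $U\neq\emptyset$.

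For the down-set property, I would take an arbitrary $a\in U$ and any $b\in P$ with $b\preceq a$, and show $b\in U$. By definition of $U$, there exists $a'\in\ve{a}'$ such that $f(a)=a'$. Applying the automatic monotonicity of $f$, namely \eqref{eq:mon}, to $b\preceq a$ yields $f(b)\preceq' f(a)=a'$. Because $\ve{a}'\in Q'=\mathcal{O}_\downarrow(P')\setminus\set{\emptyset}$ is a down-set in $(P',\preceq')$ and $a'\in\ve{a}'$, this forces $f(b)\in\ve{a}'$. Therefore $b\in f^{-1}(f(b))\subseteq\bigcup_{a''\in\ve{a}'} f^{-1}(a'')=U$, as required.

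There is essentially no obstacle here: the result is simply the general fact that the preimage of a down-set under a monotone surjection is a down-set, combined with nonemptiness of fibers. In particular, none of the structural constraints \eqref{eq:densd}, \eqref{eq:densu}, \eqref{eq:sol} from Lemma~\ref{lem:fpo} need to be invoked for this lemma — only the automatic monotonicity \eqref{eq:mon} (which holds by construction of $\preceq'$ via \eqref{eq:relp}) and the surjectivity of $f$ are used. The lemma therefore serves mainly as a bookkeeping step, ensuring that the assignment $\ve{a}'\mapsto\bigcup_{a'\in\ve{a}'}f^{-1}(a')$ provides a well-defined element of $Q$, which, together with Lemma~\ref{lem:technical1}, will make the commutativity of the diagram \eqref{eq:cd} transparent.
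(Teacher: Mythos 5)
Your proof is correct and follows essentially the same route as the paper's: nonemptiness from surjectivity of $f$ and $\ve{a}'\neq\emptyset$, and downward closure via the automatic monotonicity \eqref{eq:mon} together with the down-set property of $\ve{a}'$ and the inclusion $b\in f^{-1}(f(b))\subseteq\bigcup_{a'\in\ve{a}'}f^{-1}(a')$. Your side remark that \eqref{eq:densd}, \eqref{eq:densu}, \eqref{eq:sol} are not invoked here is also accurate and matches the paper's own observation (made in the proof of Lemma~\ref{lem:constrInherit}) that these conditions are assumed only to ensure $\preceq'$ is a partial order so that $Q'$ is well-defined.
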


\begin{proof}
First, denote $\ve{a}:=\bigcup_{a'\in\ve{a}'} f^{-1}(a')$.
On the one hand, $\ve{a}\neq\emptyset$, since $\ve{a}'\neq\emptyset$ and $f$ is surjective.
On the other hand,
if $b\preceq a$ for an $a\in\ve{a}$,
then $f(b)\preceq'f(a)$ because of the monotonicity \eqref{eq:mon}.
Since $f(a)\in \ve{a}'$ by the construction of $\ve{a}$,
and $\ve{a}'$ is a down-set \eqref{eq:dset},
we have that $f(b)\in\ve{a}'$.
Then $b\in f^{-1}(f(b))\subseteq\ve{a}$ by the construction of 
$\ve{a}= f^{-1}(f(b))\cup \bigcup_{a'\in\ve{a}', a'\neq f(b)} f^{-1}(a')$,
and then $b\in\ve{a}$, so $\ve{a}$ is a down-set.
Altogether we have that $\ve{a}\in Q$.
\end{proof}

The following two lemmas show \eqref{eq:cd}.

\begin{lem}
\label{lem:com.set}
In the above setting
(and assuming \eqref{eq:constr}),
we have $g(Q)=Q'$.
\end{lem}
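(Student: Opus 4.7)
The plan is to prove the two inclusions $g(Q) \subseteq Q'$ and $Q' \subseteq g(Q)$ separately, the latter being essentially immediate from Lemmas~\ref{lem:technical1} and \ref{lem:technical2}.

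For the inclusion $Q' \subseteq g(Q)$, I would fix an arbitrary $\ve{a}' \in Q'$ and exhibit a preimage explicitly. The natural candidate is $\ve{a} := \bigcup_{a' \in \ve{a}'} f^{-1}(a') \subseteq P$. Lemma~\ref{lem:technical2} guarantees that $\ve{a} \in Q$, and Lemma~\ref{lem:technical1} gives $g(\ve{a}) = \ve{a}'$, so $\ve{a}' \in g(Q)$. This direction requires no fresh argument beyond assembling the two technical lemmas.

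For the inclusion $g(Q) \subseteq Q'$, I would take an arbitrary $\ve{a} \in Q$ and verify that $g(\ve{a})$ is a nonempty down-set of $P'$. Nonemptyness is immediate since $g$ acts elementwise and $\ve{a} \neq \emptyset$. For the down-set property, suppose $a' \in g(\ve{a})$ and $b' \preceq' a'$; the task is to find $b \in \ve{a}$ with $f(b) = b'$. By definition of $g(\ve{a})$ there exists some specific $a \in \ve{a}$ with $f(a) = a'$, and applying condition \eqref{eq:densd} to this particular $a$ yields a $b \in f^{-1}(b')$ with $b \preceq a$. Since $\ve{a}$ is downward closed in $P$, we conclude $b \in \ve{a}$, hence $b' = f(b) \in g(\ve{a})$.

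The only subtle point is that the argument for $g(Q) \subseteq Q'$ relies on \eqref{eq:densd} in the ``for all $a \in f^{-1}(a')$'' form: we need a preimage of $b'$ lying below the specific representative $a \in \ve{a}$, not merely below some arbitrary representative of $a'$. Condition \eqref{eq:sol} is not invoked, and \eqref{eq:densu} alone would not suffice for this direction by itself, although a symmetric treatment using \eqref{eq:densu} would handle the up-set version of the construction, which is exactly what is needed for the Level~III step of \eqref{eq:cd2}. I do not expect any real obstacle here; the content is essentially bookkeeping against the definitions plus a single invocation of \eqref{eq:densd}.
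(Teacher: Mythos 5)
Your proposal is correct and matches the paper's own proof essentially step for step: the inclusion $Q'\subseteq g(Q)$ via the explicit preimage $\bigcup_{a'\in\ve{a}'}f^{-1}(a')$ together with Lemmas~\ref{lem:technical1} and \ref{lem:technical2}, and the inclusion $g(Q)\subseteq Q'$ by applying \eqref{eq:densd} to the specific representative $a\in f^{-1}(a')\cap\ve{a}$ and using downward closedness of $\ve{a}$. Your side remark is also accurate — the paper likewise invokes only \eqref{eq:densd} here (not \eqref{eq:sol}), and handles the up-set variant symmetrically in Lemma~\ref{lem:downup}.
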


\begin{proof}
We need to prove both inclusions.\\
(i) We need that $g(Q)\subseteq Q'$.
Let $\ve{a}\in Q$ and $\ve{a}' := g(\ve{a})$.
Then for all $a'\in\ve{a}'$ let us have $a \in f^{-1}(a')\cap \ve{a}$.
(Note that $f^{-1}(a')\cap \ve{a}\neq\emptyset$ by construction.)
Then let $b'\preceq'a'$,
and applying \eqref{eq:densd} for $a$, we have that there exists $b\in f^{-1}(b')$ such that $b\preceq a$.
Since $a\in\ve{a}$, and $\ve{a}$ is a down-set \eqref{eq:dset}, we have that $b\in\ve{a}$,
and then $b'\in\ve{a}'$, so $\ve{a}'$ is a down-set.\\
(ii) We also need that $g(Q)\supseteq Q'$.
Let $\ve{a}' \in Q'$, then we construct an $\ve{a}\in Q$, for which $\ve{a}' = g(\ve{a})$.
The element $\ve{a} := \bigcup_{a'\in\ve{a}'} f^{-1}(a')$ fulfills these criteria,
we have $\ve{a}' = g(\ve{a})$ by Lemma~\ref{lem:technical1},
and $\ve{a}\in Q$ by Lemma~\ref{lem:technical2}.
\end{proof}

\begin{lem}
\label{lem:com.rel}
In the above setting
(and assuming \eqref{eq:constr}),
we have $g(\sqsubseteq)=\;\sqsubseteq'$.
\end{lem}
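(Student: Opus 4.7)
The plan is to prove the set equality $g(\sqsubseteq) = \;\sqsubseteq'$ of binary relations on $Q'$ by establishing the two inclusions separately, leaning heavily on the two technical lemmas already in hand.

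For the forward inclusion $g(\sqsubseteq) \subseteq \;\sqsubseteq'$: suppose $\ve{b}' \; g(\sqsubseteq) \; \ve{a}'$, which by the definition of the transformed relation (as in \eqref{eq:relp}) means there exist $\ve{a} \in g^{-1}(\ve{a}')$ and $\ve{b} \in g^{-1}(\ve{b}')$ with $\ve{b} \sqsubseteq \ve{a}$, i.e., $\ve{b} \subseteq \ve{a}$. Since $g$ acts elementwise and $g(\ve{b}) = \ve{b}'$, $g(\ve{a}) = \ve{a}'$, applying $g$ to both sides gives $\ve{b}' = g(\ve{b}) \subseteq g(\ve{a}) = \ve{a}'$, so $\ve{b}' \sqsubseteq' \ve{a}'$. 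This direction uses only that $g$ is the elementwise image, and is essentially the same argument as the monotonicity observation in \eqref{eq:mon}.

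For the reverse inclusion $\sqsubseteq' \subseteq g(\sqsubseteq)$: suppose $\ve{b}' \sqsubseteq' \ve{a}'$, i.e., $\ve{b}' \subseteq \ve{a}'$. The idea is to exhibit canonical preimages. Set
\[
\ve{a} := \bigcup_{a' \in \ve{a}'} f^{-1}(a'), \qquad \ve{b} := \bigcup_{b' \in \ve{b}'} f^{-1}(b').
\]
By Lemma~\ref{lem:technical2}, both $\ve{a}$ and $\ve{b}$ belong to $Q$; by Lemma~\ref{lem:technical1}, $g(\ve{a}) = \ve{a}'$ and $g(\ve{b}) = \ve{b}'$. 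From $\ve{b}' \subseteq \ve{a}'$ we immediately get $\ve{b} \subseteq \ve{a}$, that is, $\ve{b} \sqsubseteq \ve{a}$. Hence by the definition of $g(\sqsubseteq)$ we obtain $\ve{b}' \; g(\sqsubseteq) \; \ve{a}'$.

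There is no real obstacle here: the main point is simply that the ``maximal preimage'' $\bigcup_{a' \in \ve{a}'} f^{-1}(a')$ is always a nonempty down-set (already established in Lemma~\ref{lem:technical2} via the monotonicity \eqref{eq:mon} and surjectivity of $f$), so one can always lift an inclusion between nonempty down-sets of $P'$ to an inclusion between nonempty down-sets of $P$. Combined with the previous paragraph's direct descent, this yields $g(\sqsubseteq) = \;\sqsubseteq'$, completing the commutativity of diagram \eqref{eq:cd}.
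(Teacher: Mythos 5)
Your proof is correct and follows essentially the same route as the paper's: the forward inclusion via the elementwise (monotone) action of $g$, and the reverse inclusion by lifting $\ve{b}'\sqsubseteq'\ve{a}'$ to the maximal preimages $\bigcup_{b'\in\ve{b}'} f^{-1}(b')\subseteq\bigcup_{a'\in\ve{a}'} f^{-1}(a')$, justified by Lemmas~\ref{lem:technical1} and \ref{lem:technical2}. No gaps; this matches the paper's argument step for step.
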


\begin{proof}
We need to prove both directions.\\
(i) We need that for all $\ve{a}',\ve{b}'\in Q'$,
if $\ve{b}'g(\sqsubseteq)\ve{a}'$ then  $\ve{b}'\sqsubseteq'\ve{a}'$.
$\ve{b}'g(\sqsubseteq)\ve{a}'$ means that 
$\exists\ve{b}\in g^{-1}(\ve{b}')$ and
$\exists\ve{a}\in g^{-1}(\ve{a}')$ such that $\ve{b}\sqsubseteq\ve{a}$.
(Again, $g^{-1}(\ve{a}')=\sset{\ve{a}\in Q}{g(\ve{a})=\ve{a}'}$ is the inverse image of the singleton $\set{\ve{a}'}$.)
In this case,
$\ve{b}' = g(\ve{b}) = \sset{f(b)}{b\in \ve{b}} \subseteq  \sset{f(a)}{a\in \ve{a}} = g(\ve{a}) = \ve{a}'$,
so $\ve{b}'\sqsubseteq'\ve{a}'$.\\
(ii) We also need that for all $\ve{a}',\ve{b}'\in Q'$,
if $\ve{b}'\sqsubseteq'\ve{a}'$ then $\ve{b}'g(\sqsubseteq)\ve{a}'$. 
From $\ve{b}'$ and $\ve{a}'$,
let us form $\ve{b} := \bigcup_{b'\in\ve{b}'} f^{-1}(b')$
and $\ve{a} := \bigcup_{a'\in\ve{a}'} f^{-1}(a')$.
We have that 
$\ve{b}\in Q$ and $\ve{a}\in Q$
(they are nonempty down-sets) by Lemma~\ref{lem:technical2},
also $g(\ve{b})=\ve{b}'$ and $g(\ve{a})=\ve{a}'$
by Lemma~\ref{lem:technical1},
and if $\ve{b}'\sqsubseteq'\ve{a}'$ then $\ve{b}\sqsubseteq\ve{a}$ by construction.
That is, we have constructed 
$\ve{b}\in g^{-1}(\ve{b}')$ and
$\ve{a}\in g^{-1}(\ve{a}')$, for which $\ve{b}\sqsubseteq\ve{a}$,
so we have $\ve{b}'g(\sqsubseteq)\ve{a}'$.
\end{proof}

Summing up, 
Lemma~\ref{lem:com.set} and Lemma~\ref{lem:com.rel} together state that \eqref{eq:cd} commutes.
Later we also need that the properties \eqref{eq:densd}, \eqref{eq:densu} and \eqref{eq:sol} are inherited.

\begin{lem}
\label{lem:constrInherit}
In the above setting
(and assuming \eqref{eq:constr}),
we have 
\begin{subequations}
\label{eq:constr2}
\begin{align}
\label{eq:densd2}
\begin{split}
&\forall \ve{a}',\ve{b}'\in Q', \dispt{if} \ve{b}'\sqsubseteq'\ve{a}' \dispt{then} \\
&\qquad\forall \ve{a} \in g^{-1}(\ve{a}') \; \exists \ve{b} \in g^{-1}(\ve{b}') \dispt{s.t.}  \ve{b}\sqsubseteq \ve{a};
\end{split} \\
\label{eq:densu2}
\begin{split}
&\forall \ve{a}',\ve{b}'\in Q', \dispt{if} \ve{b}'\sqsubseteq'\ve{a}' \dispt{then} \\
&\qquad\forall \ve{b} \in g^{-1}(\ve{b}') \; \exists \ve{a} \in g^{-1}(\ve{a}') \dispt{s.t.}  \ve{b}\sqsubseteq \ve{a};
\end{split} \\
\label{eq:sol2}
\begin{split}
&\forall \ve{a},\ve{b},\ve{c}\in Q, \dispt{if} \ve{c}\sqsubseteq\ve{b}\sqsubseteq\ve{a} \\
&\qquad\dispt{and} g(\ve{c}) = g(\ve{a}), \dispt{then} g(\ve{b})=g(\ve{a}).
\end{split}
\end{align}
\end{subequations}
\end{lem}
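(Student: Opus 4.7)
The plan is to establish each of the three inherited properties \eqref{eq:densd2}, \eqref{eq:densu2}, \eqref{eq:sol2} by an explicit construction of the required down-set, relying only on the monotonicity \eqref{eq:mon} of $f$ and on the technical facts from Lemmas~\ref{lem:technical1} and~\ref{lem:technical2}. The key observation is that the partial order on $Q$ is simply inclusion, so \emph{finding $\ve{b}\sqsubseteq\ve{a}$ sitting above a given $\ve{b}'$} amounts to \emph{carving $\ve{a}$ down to the $f$-fibre of $\ve{b}'$}, while \emph{finding $\ve{a}\sqsupseteq\ve{b}$ sitting above a given $\ve{a}'$} amounts to \emph{filling in all $f$-fibres of the elements of $\ve{a}'$}.

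For \eqref{eq:densd2}, with $\ve{b}'\subseteq\ve{a}'$ and $\ve{a}\in g^{-1}(\ve{a}')$ given, I would take $\ve{b}:=\bigsset{a\in\ve{a}}{f(a)\in\ve{b}'}$. The inclusion $\ve{b}\subseteq\ve{a}$ and the direction $g(\ve{b})\subseteq\ve{b}'$ are immediate. Nonemptiness and the reverse $\ve{b}'\subseteq g(\ve{b})$ both follow from $g(\ve{a})=\ve{a}'\supseteq\ve{b}'$: every $b'\in\ve{b}'$ admits an $f$-preimage in $\ve{a}$, which by definition lies in $\ve{b}$. The down-set property of $\ve{b}$ comes from combining the down-closure of $\ve{a}$ in $P$, the monotonicity \eqref{eq:mon}, and the down-closure of $\ve{b}'$ in $P'$.

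For \eqref{eq:densu2}, with $\ve{b}'\subseteq\ve{a}'$ and $\ve{b}\in g^{-1}(\ve{b}')$ given, the cleanest choice is $\ve{a}:=\bigcup_{a'\in\ve{a}'}f^{-1}(a')$, which is a nonempty down-set with $g(\ve{a})=\ve{a}'$ by Lemmas~\ref{lem:technical2} and~\ref{lem:technical1}. Every $b\in\ve{b}$ satisfies $f(b)\in g(\ve{b})=\ve{b}'\subseteq\ve{a}'$, whence $b\in f^{-1}(\ve{a}')=\ve{a}$, giving $\ve{b}\subseteq\ve{a}$. Finally, \eqref{eq:sol2} is immediate, since the elementwise action $g$ is inclusion-monotone on $2^P$, so $g(\ve{c})\subseteq g(\ve{b})\subseteq g(\ve{a})$, and $g(\ve{c})=g(\ve{a})$ forces equality throughout.

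The main point worth emphasizing, rather than a genuine obstacle, is that the original constraints \eqref{eq:constr} never enter the arguments themselves; they are needed only to guarantee that $(P',\preceq')$, and hence $(Q',\sqsubseteq')$, are posets, which is the context that makes the inherited statements meaningful in the first place. The only mildly delicate verification is the down-set property of $\ve{b}$ in \eqref{eq:densd2}, which chains three facts (down-closure in $\ve{a}$, monotonicity of $f$, down-closure in $\ve{b}'$) that should be laid out explicitly in the final writeup.
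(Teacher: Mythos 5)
Your proposal is correct and takes essentially the same route as the paper: your set $\ve{b}=\bigsset{a\in\ve{a}}{f(a)\in\ve{b}'}$ for \eqref{eq:densd2} is exactly the paper's $\bigl(\bigcup_{b'\in\ve{b}'}f^{-1}(b')\bigr)\cap\ve{a}$ (your direct element-chase for the down-set property replaces the paper's appeal to the lattice meet in $Q$ and Lemma~\ref{lem:technical2}, a trivial variation), your choice $\ve{a}=\bigcup_{a'\in\ve{a}'}f^{-1}(a')$ for \eqref{eq:densu2} is the natural construction behind the paper's ``analogously,'' and your inclusion-squeeze for \eqref{eq:sol2} matches the paper's monotonicity-plus-antisymmetry argument. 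Your closing observation that \eqref{eq:constr} never enters the arguments and serves only to make $(P',\preceq')$ a poset is also precisely the paper's own concluding remark.
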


\begin{proof}
\eqref{eq:densd2} can be proven by the explicit construction of $\ve{b}$.
Let $\ve{b}'\sqsubseteq'\ve{a}'$, and $\ve{a} \in g^{-1}(\ve{a}')$,
then $\ve{b}:=\bigl(\bigcup_{b'\in\ve{b}'} f^{-1}(b') \bigr) \cap \ve{a}$.
By construction, $\ve{b}\sqsubseteq\ve{a}$.
Also, $\ve{b}\in Q$, that is, it is a nonempty down-set, since
it is the intersection (meet) of two nonempty down-sets: 
$\ve{a}\in Q$, and
$\bigcup_{b'\in\ve{b}'} f^{-1}(b')\in Q$ by Lemma~\ref{lem:technical2},
and $Q$ is a lattice.
In addition, we need that $\ve{b}\in g^{-1}(\ve{b}')$, that is, $g(\ve{b}) = \ve{b}'$.
Since $g(\ve{a})=\ve{a}'$, we have that $\forall a'\in\ve{a}'$, $\exists a\in\ve{a}$ such that $f(a)=a'$.
Then also $\forall b'\in\ve{b}'\sqsubseteq'\ve{a}'$, $\exists a\in\ve{a}$ such that $f(a)=b'$,
that is, $a\in f^{-1}(b')$, and also $a\in\ve{a}$, so $a\in \ve{b}$ by construction of $\ve{b}$.\\
\eqref{eq:densu2} can be proven analogously.\\
\eqref{eq:sol2} is a simple consequence of the properties of $\sqsubseteq$.
Let $\ve{c}\sqsubseteq\ve{b}\sqsubseteq\ve{a}$,
then $g(\ve{c})\sqsubseteq'g(\ve{b})\sqsubseteq'g(\ve{a})$ by the monotonicity of $g$, 
so by the transitivity and antisymmetry of the partial order $\sqsubseteq'$, 
if $g(\ve{c}) = g(\ve{a})$ then $g(\ve{b})=g(\ve{a})$.\\
Note that the conditions \eqref{eq:constr}
were not used explicitly in these proofs,
they are assumed only for establishing the partial order $\preceq'$ for the set $P'$,
see Lemma~\ref{lem:fpo}.
\end{proof}

\begin{lem}
\label{lem:downup}
Lemmas~\ref{lem:technical1}, \ref{lem:technical2}, \ref{lem:com.set}, \ref{lem:com.rel} and \ref{lem:constrInherit},
hold also if up-sets are used instead of down-sets in the construction \eqref{eq:cd}.
\end{lem}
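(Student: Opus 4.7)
The plan is to exploit the order-theoretic duality between down-sets and up-sets under reversal of the partial order, rather than repeating each of the five lemmas with down-sets replaced by up-sets. Introduce the opposite posets $(P^\partial,\succeq)$ and $(P'^\partial,\succeq')$, obtained from $(P,\preceq)$ and $(P',\preceq')$ by reversing the relation. By definition, $\mathcal{O}_\uparrow(P,\preceq) = \mathcal{O}_\downarrow(P^\partial,\succeq)$, and likewise for $P'$, so the up-set version of the construction \eqref{eq:cd} applied to $(P,\preceq)\overset{f}{\to}(P',\preceq')$ coincides with the down-set version applied to $(P^\partial,\succeq)\overset{f}{\to}(P'^\partial,\succeq')$. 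Crucially, the elementwise action $g$ of $f$ on subsets, and the fibers $f^{-1}(a')$, depend only on the underlying function $f$, not on the order, so these ingredients transfer unchanged.

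Next I would verify that the hypotheses of the down-set lemmas hold for the reversed data. Conditions \eqref{eq:densd} and \eqref{eq:densu} simply interchange under $\preceq\leftrightarrow\succeq$, whereas \eqref{eq:sol} is self-dual, since the chain $c\preceq b\preceq a$ with $f(c)=f(a)$ is symmetric in the endpoints $a,c$ (reversing it to $a\succeq b\succeq c$ gives the same statement up to relabeling). Hence the hypothesis of Lemma~\ref{lem:fpo}, namely \emph{``\eqref{eq:densd} or \eqref{eq:densu}, together with \eqref{eq:sol}''}, is invariant under reversal, and $(P'^\partial,\succeq')$ is a poset. Likewise, the requirement that $P$ possess both bottom and top elements (which was used to ensure that the nonempty down-sets form a lattice) is self-dual, since reversal merely swaps the two.

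With these observations in place, Lemmas~\ref{lem:technical1}, \ref{lem:technical2}, \ref{lem:com.set}, \ref{lem:com.rel} and \ref{lem:constrInherit} applied verbatim to $(P^\partial,\succeq)\overset{f}{\to}(P'^\partial,\succeq')$ yield statements about nonempty down-sets of the opposites, which by the identification $\mathcal{O}_\downarrow(P^\partial)=\mathcal{O}_\uparrow(P)$ are exactly the up-set analogues for the original posets. The main (and only minor) obstacle is the bookkeeping required to confirm the self-duality of \eqref{eq:sol} and the interchange of \eqref{eq:densd} and \eqref{eq:densu}; once these are spelled out, the up-set versions of the five lemmas are obtained at no further cost, and no new proof of any of their internal steps is needed.
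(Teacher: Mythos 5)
Your proposal is correct, and it packages the paper's argument more rigorously than the paper itself does. The paper's own proof is a one-line proof-by-analogy: it asserts that all five proofs ``can be repeated with slight modifications, as interchanging the roles of \eqref{eq:densd} and \eqref{eq:densu}, down-closures and up-closures, and reversing orders,'' without spelling out any of the modifications. You instead invoke order duality once and for all: since $\mathcal{O}_\uparrow(P,\preceq)=\mathcal{O}_\downarrow(P^\partial,\succeq)$, and since $g$ and the fibers $f^{-1}(a')$ depend only on the underlying function $f$ and not on the order, the up-set construction for $(P,\preceq)\to(P',\preceq')$ \emph{is} the down-set construction for the opposite posets, so Lemmas~\ref{lem:technical1}, \ref{lem:technical2}, \ref{lem:com.set}, \ref{lem:com.rel} and \ref{lem:constrInherit} apply as black boxes once their hypotheses are checked for the reversed data. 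Your hypothesis checks are exactly the right ones and all go through: \eqref{eq:densd} and \eqref{eq:densu} interchange under reversal (relabel $a\leftrightarrow b$, $a'\leftrightarrow b'$), \eqref{eq:sol} is self-dual (reverse the chain and swap its endpoints, noting $f(c)=f(a)$ makes the conclusions agree), and the assumption that $P$ has both a bottom and a top element is self-dual, which is what keeps the \emph{nonempty} up-sets a lattice. The one point you leave tacit is that the relation transfer \eqref{eq:relp} commutes with order reversal, i.e., that $f(\succeq)$ is the opposite relation of $f(\preceq)$; this is immediate, since \eqref{eq:relp} only demands the existence of one pair of preimages $b\preceq a$, a condition symmetric under simultaneously reversing the order and swapping the roles of $a'$ and $b'$, so it is a remark rather than a gap. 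What your route buys is that no step internal to the five proofs ever needs to be re-examined; what it costs is the modest bookkeeping you identified, which you have essentially carried out.
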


\begin{proof}
All the proofs can be repeated with slight modifications,
as interchanging the roles of \eqref{eq:densd} and \eqref{eq:densu},
down-closures and up-closures, and reversing orders.
\end{proof}

Thanks to 
Lemma~\ref{lem:com.set}, Lemma~\ref{lem:com.rel}, Lemma~\ref{lem:constrInherit} and Lemma~\ref{lem:downup},
the construction \eqref{eq:cd} can be repeated arbitrary times, with either up- or down-sets,
if the conditions \eqref{eq:constr} hold for the first level.
Note that $\sqsubseteq$ and $\sqsubseteq'$ are partial orders by construction anyway.
The inheritance of \eqref{eq:constr}, shown in Lemma~\ref{lem:constrInherit} is needed not for that,
but also for, e.g., Lemma~\ref{lem:com.set}, when applied for the next level.

\subsection{Embedding}
\label{app:structPI.emb}

Now, with the definition \eqref{eq:buildup},
let the function $g':Q'\to Q$ be given as
\begin{equation}
\label{eq:gp}
g'(\ve{a}') := \vee g^{-1}(\ve{a}'),
\end{equation}
and its image $Q'':=g'(Q')\subseteq Q$.
We will show that this is a subposet, and
$(Q'',\sqsubseteq) \isom (Q',\sqsubseteq')$, and $g'$ is an embedding of $Q'$ into $Q$.
(We use the notation $\vee A := \bigvee_{a\in A} a$ for any subset $A$ of a lattice.)

The following two technical lemmas concerning the images $g'(\ve{a}')$
will be used several times later.

\begin{lem}
\label{lem:Qppelements}
In the above setting
(and assuming \eqref{eq:constr}),
for all $\ve{a}'\in Q'$, we have 
\begin{subequations}
\begin{align}
\label{eq:gp1}
g'(\ve{a}') &=\sset{a\in P}{f(a)\in\ve{a}'},\\
\label{eq:gp2}
g'(\ve{a}') &=\bigcup_{a'\in\ve{a}'} f^{-1}(a').
\end{align}
\end{subequations}
\end{lem}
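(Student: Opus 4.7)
The plan is to reduce everything to the two technical lemmas already proved. The equivalence of the two claimed forms \eqref{eq:gp1} and \eqref{eq:gp2} is purely set-theoretic: $\sset{a\in P}{f(a)\in\ve{a}'}$ decomposes over the fibers of $f$ as $\bigcup_{a'\in\ve{a}'}\sset{a\in P}{f(a)=a'}=\bigcup_{a'\in\ve{a}'}f^{-1}(a')$. So the real work is showing
\[
\vee g^{-1}(\ve{a}') \equals \bigcup_{a'\in\ve{a}'} f^{-1}(a').
\]

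Since $Q$ is the lattice of nonempty down-sets of $P$ ordered by inclusion, the join is simply the union: $\vee g^{-1}(\ve{a}') = \bigcup_{\ve{a}\in g^{-1}(\ve{a}')} \ve{a}$ (union of down-sets is a down-set, so no extra closure is needed). I would now prove the two inclusions.

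For $\supseteq$, I would show that the set $\ve{a}_{\max}:=\bigcup_{a'\in\ve{a}'}f^{-1}(a')$ is itself a member of $g^{-1}(\ve{a}')$, so it appears in the union being joined. Membership $\ve{a}_{\max}\in Q$ is exactly Lemma~\ref{lem:technical2}, and the equation $g(\ve{a}_{\max})=\ve{a}'$ is exactly Lemma~\ref{lem:technical1}. Hence $\ve{a}_{\max}\subseteq\vee g^{-1}(\ve{a}')$.

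For $\subseteq$, I would take an arbitrary $\ve{a}\in g^{-1}(\ve{a}')$, so $g(\ve{a})=\ve{a}'$, and note that for every $a\in\ve{a}$ we have $f(a)\in g(\ve{a})=\ve{a}'$, hence $a\in f^{-1}(f(a))\subseteq \ve{a}_{\max}$; thus $\ve{a}\subseteq\ve{a}_{\max}$. Taking the union over all such $\ve{a}$ gives $\vee g^{-1}(\ve{a}')\subseteq\ve{a}_{\max}$. There is no real obstacle here; the only subtlety is recognizing that $\ve{a}_{\max}$ is in fact the maximal element of $g^{-1}(\ve{a}')$, which is what makes the join collapse to this concrete description and which is precisely the content packaged into Lemmas~\ref{lem:technical1} and \ref{lem:technical2}.
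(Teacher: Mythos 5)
Your proof is correct and follows essentially the same route as the paper's: both directions of the inclusion are established exactly as in the paper, with Lemmas~\ref{lem:technical1} and \ref{lem:technical2} doing the same work of exhibiting $\bigcup_{a'\in\ve{a}'}f^{-1}(a')$ as an element of $g^{-1}(\ve{a}')$, and the passage between \eqref{eq:gp1} and \eqref{eq:gp2} handled by the same fiber decomposition. Your only cosmetic difference is proving \eqref{eq:gp2} first and framing the union as the maximal element of $g^{-1}(\ve{a}')$, which is a nice way of packaging the same argument.
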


\begin{proof}
\eqref{eq:gp1} is by definition
$g'(\ve{a}')\equiv\vee \sset{\ve{a}\in Q}{g(\ve{a})=\ve{a}'}
=\sset{a\in P}{f(a)\in\ve{a}'}$, and we need to prove both inclusions.
First, $\vee \sset{\ve{a}\in Q}{g(\ve{a})=\ve{a}'} \subseteq\sset{a\in P}{f(a)\in\ve{a}'}$
holds, since $g$ is the elementwise action of $f$.
Second, for $\vee \sset{\ve{a}\in Q}{g(\ve{a})=\ve{a}'} \supseteq\sset{a\in P}{f(a)\in\ve{a}'}$,
we need to see that for all $a\in P$ such that $f(a)\in\ve{a}'$
there exists an $\ve{a}\in Q$ for which $a\in\ve{a}$, while $g(\ve{a})=\ve{a}'$.
The element $\ve{a} := \bigcup_{a'\in\ve{a}'} f^{-1}(a')$ fulfills these criteria,
we have $\ve{a}' = g(\ve{a})$ by Lemma~\ref{lem:technical1},
and $\ve{a}\in Q$ by Lemma~\ref{lem:technical2};
while $a\in\ve{a}$, since $f(a)\in\ve{a}'$, 
so it appears in the union $\bigcup_{a'\in\ve{a}'} f^{-1}(a')=f^{-1}(f(a))\cup \bigcup_{a'\in\ve{a}', a'\neq f(a)} f^{-1}(a')$.\\
\eqref{eq:gp2} can be proven by noting that
for the left-hand side, we have by \eqref{eq:gp1} that
$g'(\ve{a}')=\bigsset{a\in P}{f(a)\in\ve{a}'} \equiv \sset{a\in P}{\exists a'\in \ve{a}' \dispt{s.t.} f(a)=a'}$,
which is the same as the right-hand side
$\bigcup_{a'\in\ve{a}'} \sset{a\in P}{f(a)=a'}$.
\end{proof}

\begin{lem}
\label{lem:ggp}
In the above setting
(and assuming \eqref{eq:constr}),
for all $\ve{a}'\in Q'$, we have
$g(g'(\ve{a}'))=\ve{a}'$.
\end{lem}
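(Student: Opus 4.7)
The plan is to reduce the claim to an immediate chain of two previously established equalities, since almost all the heavy lifting has already been done in Lemma~\ref{lem:Qppelements} and Lemma~\ref{lem:technical1}.

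First, I would rewrite $g'(\ve{a}')$ using equation \eqref{eq:gp2} of Lemma~\ref{lem:Qppelements}, giving the concrete presentation
\[
g'(\ve{a}') = \bigcup_{a'\in\ve{a}'} f^{-1}(a').
\]
This step is the only place where the nontrivial content of the construction enters: it is here that we use the fact that $g'(\ve{a}')$ is defined as a join of down-sets in $Q$, and that this join coincides with the union of fibres (which in turn relies on Lemma~\ref{lem:technical2} to guarantee that the union is itself a nonempty down-set belonging to $Q$, so that it is indeed an element of the preimage $g^{-1}(\ve{a}')$, in fact the maximal one).

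Second, I would apply $g$ to both sides and invoke Lemma~\ref{lem:technical1} directly:
\[
g\bigl(g'(\ve{a}')\bigr)
= g\Bigl(\bigcup_{a'\in\ve{a}'} f^{-1}(a')\Bigr)
= \ve{a}',
\]
which is exactly the content of Lemma~\ref{lem:technical1}. This closes the argument.

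There is essentially no obstacle here; the lemma is a bookkeeping statement asserting that $g'$ is a section of $g$, and it follows at once from the explicit description of $g'(\ve{a}')$ combined with the surjectivity-type computation of Lemma~\ref{lem:technical1}. The only subtlety worth flagging in the write-up is to make clear that the join $\vee g^{-1}(\ve{a}')$ defining $g'(\ve{a}')$ is well defined precisely because $g^{-1}(\ve{a}')$ is nonempty (shown in the proof of \eqref{eq:gp1} via Lemma~\ref{lem:technical2}), so the identity is not vacuous.
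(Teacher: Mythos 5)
Your proof is correct and follows exactly the paper's own argument: rewrite $g'(\ve{a}')$ as $\bigcup_{a'\in\ve{a}'} f^{-1}(a')$ via \eqref{eq:gp2}, then apply $g$ and conclude by Lemma~\ref{lem:technical1}. The extra remarks on well-definedness of the join are sound but not needed beyond what \eqref{eq:gp2} already provides.
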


\begin{proof}
Using \eqref{eq:gp2}, we have that $g(g'(\ve{a}'))=g\bigl(\bigcup_{a'\in\ve{a}'} f^{-1}(a')\bigr)$,
then Lemma~\ref{lem:technical1} leads to the claim.
\end{proof}

The following lemma shows that $g':Q'\to Q''\subseteq Q$ is an embedding of $Q'$ into $Q$.

\begin{lem}
\label{lem:embed}
In the above setting
(and assuming \eqref{eq:constr}),
$(Q'',\sqsubseteq) \isom (Q',\sqsubseteq')$.
\end{lem}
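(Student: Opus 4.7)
The plan is to show that $g'$ is an order-isomorphism between $(Q',\sqsubseteq')$ and $(Q'',\sqsubseteq)$, by verifying bijectivity and order-preservation in both directions. The key observation is that the preceding technical lemmas (Lemma~\ref{lem:Qppelements} and Lemma~\ref{lem:ggp}) already identify $g'(\ve{a}')$ explicitly as $\bigcup_{a'\in\ve{a}'}f^{-1}(a')$ and show that $g\circ g' = \mathrm{id}_{Q'}$, which together do most of the work.

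First, I would establish that $g'$ is a bijection between $Q'$ and $Q''=g'(Q')$. Surjectivity is automatic by the definition of $Q''$. For injectivity, suppose $g'(\ve{a}')=g'(\ve{b}')$; applying $g$ to both sides and using Lemma~\ref{lem:ggp}, we get $\ve{a}' = g(g'(\ve{a}')) = g(g'(\ve{b}')) = \ve{b}'$.

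Next, I would verify that $g'$ is monotone: if $\ve{b}'\sqsubseteq'\ve{a}'$ (i.e., $\ve{b}'\subseteq\ve{a}'$ in $Q'$), then the explicit form~\eqref{eq:gp2} from Lemma~\ref{lem:Qppelements} gives $g'(\ve{b}')=\bigcup_{b'\in\ve{b}'}f^{-1}(b')\subseteq\bigcup_{a'\in\ve{a}'}f^{-1}(a')=g'(\ve{a}')$, so $g'(\ve{b}')\sqsubseteq g'(\ve{a}')$. Conversely, if $g'(\ve{b}')\sqsubseteq g'(\ve{a}')$, then applying the (automatically monotone, by~\eqref{eq:mon}) map $g$ and using Lemma~\ref{lem:ggp} yields $\ve{b}' = g(g'(\ve{b}')) \sqsubseteq' g(g'(\ve{a}')) = \ve{a}'$. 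Thus $g'$ is an order-isomorphism onto its image $Q''$, and $Q''$ inherits the partial order $\sqsubseteq$ as a subposet of $Q$.

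There is no real obstacle here beyond keeping the bookkeeping straight; the hard work was done in Lemmas~\ref{lem:technical1}, \ref{lem:technical2}, \ref{lem:Qppelements} and \ref{lem:ggp}, which turn the set-theoretic manipulations into one-line consequences of the elementwise action of $f$ and the down-set structure. The one subtle point worth being careful about is to confirm $g'(\ve{a}')\in Q$ in the first place (needed for $Q''\subseteq Q$ to make sense), which however also follows from the explicit form~\eqref{eq:gp2} combined with Lemma~\ref{lem:technical2}.
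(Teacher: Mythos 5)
Your proof is correct and follows essentially the same route as the paper's: both directions of the order-equivalence are obtained exactly as in the paper (the explicit form of $g'$ from Lemma~\ref{lem:Qppelements} for $\ve{b}'\sqsubseteq'\ve{a}' \Rightarrow g'(\ve{b}')\sqsubseteq g'(\ve{a}')$, and monotonicity of $g$ plus $g\circ g'=\mathrm{id}_{Q'}$ from Lemma~\ref{lem:ggp} for the converse). The only cosmetic difference is that you extract injectivity directly from the left-inverse property $g\circ g'=\mathrm{id}_{Q'}$, whereas the paper deduces bijectivity from the established order-equivalence via antisymmetry of the partial order; both are valid one-line arguments.
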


\begin{proof}
We need to prove that for all $\ve{a}',\ve{b}'\in Q'$,
$\ve{b}'\sqsubseteq'\ve{a}' \inliff g'(\ve{b}')\sqsubseteq g'(\ve{a}')$,
then, from the antisymmetry of the partial order, we have that $g'$ is bijective.\\
To see the \emph{``if'' direction}, we have
$g'(\ve{b}')\sqsubseteq g'(\ve{a}')$ then 
$g(g'(\ve{b}'))\sqsubseteq' g(g'(\ve{a}'))$ by the monotonicity of $g$, similarly to \eqref{eq:mon}, then
$\ve{b}'\sqsubseteq'\ve{a}'$ by Lemma~\ref{lem:ggp}.\\
To see the \emph{``only if'' direction}, we have
$\ve{b}'\sqsubseteq'\ve{a}'$, then
$ \sset{b\in P}{f(b)\in\ve{b}'} \sqsubseteq \sset{a\in P}{f(a)\in\ve{a}'}$, then 
$g'(\ve{b}')\sqsubseteq g'(\ve{a}')$ by \eqref{eq:gp1}.
\end{proof}

In the following four lemmas, we provide some simple tools,
concerning principal ideals.

\begin{lem}
\label{lem:principal}
In the above setting
(and assuming \eqref{eq:constr}),
for all $a\in P$, we have
$g(\downset\set{a})=\downset\set{f(a)}$.
\end{lem}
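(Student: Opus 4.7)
My plan is to prove the two inclusions $g(\downset\set{a})\subseteq\downset\set{f(a)}$ and $g(\downset\set{a})\supseteq\downset\set{f(a)}$ separately, unpacking the definitions of principal down-set and of the elementwise action $g$ of $f$.

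For the forward inclusion, I take an arbitrary element $a'\in g(\downset\set{a})$, which by definition of $g$ means $a'=f(b)$ for some $b\in\downset\set{a}$, i.e.\ for some $b\preceq a$. Applying the monotonicity of $f$ recorded in \eqref{eq:mon}, I get $a'=f(b)\preceq' f(a)$, so $a'\in\downset\set{f(a)}$. This direction only uses monotonicity and is essentially automatic.

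The reverse inclusion is where the hypothesis \eqref{eq:densd} enters, and this is the step I expect to be the slightly delicate one. Given $a'\in\downset\set{f(a)}$, i.e.\ $a'\preceq' f(a)$, I need to exhibit some $b\preceq a$ with $f(b)=a'$. Here the issue is that $a'\preceq' f(a)$ only guarantees, via \eqref{eq:relp}, the existence of \emph{some} pair of preimages witnessing the relation, not a preimage below the particular $a$ I am handed. This is precisely what condition \eqref{eq:densd} supplies: for the fixed $a\in f^{-1}(f(a))$, it produces $b\in f^{-1}(a')$ with $b\preceq a$. Such a $b$ lies in $\downset\set{a}$ and satisfies $f(b)=a'$, so $a'=f(b)\in g(\downset\set{a})$, finishing the proof.

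Note that one could alternatively use \eqref{eq:densu} in the reverse direction, but applied at $a'$ rather than at $a$, which would again produce a suitable preimage; either of the two density conditions of \eqref{eq:constr} suffices. The condition \eqref{eq:sol} is not needed here, as it only enters when establishing that $\preceq'$ is antisymmetric. Overall this lemma is a short direct verification; the only content is recognizing that \eqref{eq:densd} is exactly the tool required to lift the relation $a'\preceq' f(a)$ to a relation $b\preceq a$ in $P$ with prescribed image.
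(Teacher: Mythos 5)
Your main argument is correct and is essentially the paper's own proof. The paper likewise reduces the claim to the biconditional
$(\exists b\in f^{-1}(b') \text{ s.t. } b\preceq a) \Leftrightarrow b'\preceq'f(a)$,
proving one direction by the monotonicity \eqref{eq:mon} of $f$ and the other by applying \eqref{eq:densd} to the particular preimage $a\in f^{-1}(f(a))$ --- exactly your two inclusions in different notation. Your observation that \eqref{eq:sol} plays no role here also matches the paper, where \eqref{eq:sol} is used only in Lemma~\ref{lem:fpo} to establish antisymmetry of $\preceq'$.

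However, your closing aside --- that \eqref{eq:densu} ``applied at $a'$'' would work equally well, so that either density condition suffices --- is false, for the very reason you correctly identified a paragraph earlier. Condition \eqref{eq:densu}, applied to $a'\preceq'f(a)$, produces for a given $b\in f^{-1}(a')$ \emph{some} upper element $c\in f^{-1}(f(a))$ with $b\preceq c$; nothing forces $c$ to be the particular $a$ you were handed, so you only obtain $b\in\downset\set{c}$, not $b\in\downset\set{a}$. The paper's own example following Lemma~\ref{lem:fpo} makes this concrete: take $P=\set{a_1,a_2,b}$ with the single nontrivial relation $b\prec a_1$, and $f(a_1)=f(a_2)=a'\neq f(b)=b'$. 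Here $\preceq'$ is a partial order with $b'\prec'a'$, and \eqref{eq:densu} and \eqref{eq:sol} both hold, yet
$g(\downset\set{a_2})=\set{a'}\neq\set{a',b'}=\downset\set{f(a_2)}$,
so the lemma fails without \eqref{eq:densd}. This asymmetry is also visible in the paper's dual statement, Lemma~\ref{lem:principalback}, which \emph{is} proved from \eqref{eq:densu} but accordingly yields the down-closure of the whole fiber, $g'(\downset\set{a'})=\downset f^{-1}(a')$, rather than a single principal ideal. Since the lemma assumes all of \eqref{eq:constr}, your proof as given is complete; only the remark about the interchangeability of the two density conditions should be deleted.
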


\begin{proof}
For the left-hand side we have
$g(\downset\set{a})
= g(\sset{b\in P}{b\preceq a})
= \sset{b'\in P'}{\exists b\in P  \dispt{s.t.} f(b)=b', b\preceq a}
= \sset{b'\in P'}{\exists b\in f^{-1}(b')  \dispt{s.t.} b\preceq a}$,
while for the right-hand side we have
$\downset\set{f(a)} = \sset{b'\in P'}{b'\preceq'f(a)}$, by definitions.
So we need that for all $a\in P$ and $b'\in P'$, 
$(\exists b\in f^{-1}(b')  \dispt{s.t.} b\preceq a) \inliff b'\preceq'f(a)$.\\
To see the \emph{``if'' direction}, we have
that if $b'\preceq'f(a)$ then for $a\in f^{-1}(f(a))$ there exists $b\in f^{-1}(b')$ such that $b\preceq a$ by \eqref{eq:densd}.\\
To see the \emph{``only if'' direction}, we have
that if $\exists b\in f^{-1}(b')$ such that $b\preceq a$, then $f(b)\preceq'f(a)$ by the monotonicity \eqref{eq:mon} of $f$,
then $b'\preceq'f(a)$, since $f(b)=b'$ by the assumption $b\in f^{-1}(b')$.
\end{proof}

\begin{lem}
\label{lem:principalback}
In the above setting
(and assuming \eqref{eq:constr}),
for all $a'\in P'$, we have
$g'(\downset\set{a'})=\downset f^{-1}(a')$.
\end{lem}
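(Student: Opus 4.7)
The plan is to unfold both sides of the claimed identity into explicit descriptions of subsets of $P$ and then verify the two inclusions separately. First I would apply equation~\eqref{eq:gp2} of Lemma~\ref{lem:Qppelements} with $\ve{a}':=\downset\set{a'}$ (which is a nonempty down-set of $P'$, hence belongs to $Q'$), obtaining
\[
g'(\downset\set{a'})=\bigcup_{b'\preceq' a'}f^{-1}(b')=\bigsset{c\in P}{f(c)\preceq' a'},
\]
while on the right-hand side, by the definition of the down-closure of a subset of $P$,
\[
\downset f^{-1}(a')=\bigsset{c\in P}{\exists\, a\in f^{-1}(a')\dispt{s.t.} c\preceq a}.
\]
The task is then to show that these two descriptions pick out the same elements of $P$.

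For the inclusion $g'(\downset\set{a'})\subseteq\downset f^{-1}(a')$, I would take an arbitrary $c\in P$ with $f(c)\preceq' a'$ and invoke the density condition~\eqref{eq:densu} at $b':=f(c)\preceq' a'$ with the representative $b:=c\in f^{-1}(f(c))$: this yields some $a\in f^{-1}(a')$ with $c\preceq a$, placing $c$ in $\downset f^{-1}(a')$. For the reverse inclusion, I would take $c\preceq a$ with $a\in f^{-1}(a')$; the monotonicity~\eqref{eq:mon} of $f$ gives $f(c)\preceq' f(a)=a'$, so $c\in f^{-1}(f(c))$ with $f(c)\preceq' a'$, hence $c$ sits in the union above.

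The only subtle point worth flagging is which density condition is invoked in the forward direction: the parallel step in Lemma~\ref{lem:principal} uses~\eqref{eq:densd}, but here the inequality $f(c)\preceq' a'$ on $P'$ must be lifted back to an inequality $c\preceq a$ in $P$ with $c$ prescribed and $a$ to be produced, so it is the dual condition~\eqref{eq:densu} that is required. Apart from choosing the correct density condition, there is no real obstacle; the rest is routine bookkeeping with the definitions of $g'$ and the down-closure.
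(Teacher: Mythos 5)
Your proof is correct and follows essentially the same route as the paper's: unfold both sides into explicit subsets of $P$ (the paper uses \eqref{eq:gp1} where you use \eqref{eq:gp2}, which are interchangeable formulations of Lemma~\ref{lem:Qppelements}), then prove the forward inclusion via \eqref{eq:densu} and the reverse via the monotonicity \eqref{eq:mon}. Your remark that \eqref{eq:densu} rather than \eqref{eq:densd} is the condition needed here, in contrast to Lemma~\ref{lem:principal}, is exactly right and matches the paper.
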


\begin{proof}
For the left-hand side we have
$g'(\downset\set{a'})
=\sset{b\in P}{f(b)\in\downset\set{a'}}
=\sset{b\in P}{f(b)\preceq a'}$ by applying \eqref{eq:gp1},
while for the right-hand side we have
$\downset f^{-1}(a') 
= \downset\sset{a\in P}{f(a)=a'}
= \sset{b\in P}{\exists a\in P \dispt{s.t.} b\preceq a, f(a)=a'}
= \sset{b\in P}{\exists a\in  f^{-1}(a') \dispt{s.t.} b\preceq a}$, by definitions.
So we need that for all $a'\in P'$ and $b\in P$
$(\exists a\in f^{-1}(a')  \dispt{s.t.} b\preceq a) \inliff f(b)\preceq'a'$.\\
To see the \emph{``if'' direction}, we have
that if $f(b)\preceq'a'$ then for $b\in f^{-1}(f(b))$ there exists $a\in f^{-1}(a')$ such that $b\preceq a$ by \eqref{eq:densu}.\\
To see the \emph{``only if'' direction}, we have
that if $\exists a\in f^{-1}(a')$ such that $b\preceq a$, then $f(b)\preceq'f(a)$ by the monotonicity \eqref{eq:mon} of $f$,
then $f(b)\preceq'a'$, since $f(a)=a'$ by the assumption $a\in f^{-1}(a')$.
\end{proof}

\begin{lem}
\label{lem:mittomen}
In the above setting
(and assuming \eqref{eq:constr}),
for all $a',b'\in P'$, we have
$b'\preceq'a' \inliff g'( \downset \set{b'})\sqsubseteq g'(\downset \set{a'})$.
\end{lem}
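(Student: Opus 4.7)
The plan is to exploit the isomorphism established in Lemma~\ref{lem:embed} together with the standard principal-ideal embedding of any poset into its down-set lattice. Concretely, for principal ideals $\downset\set{a'},\downset\set{b'}\in Q'$, the partial order $\sqsubseteq'$ on $Q'$ is the plain inclusion $\subseteq$, and inclusion of principal ideals in a poset is equivalent to the underlying order, so $\downset\set{b'}\sqsubseteq'\downset\set{a'}\dspiff b'\preceq'a'$. Applying Lemma~\ref{lem:embed} with $\vs{a}'=\downset\set{a'}$ and $\vs{b}'=\downset\set{b'}$ then gives $\downset\set{b'}\sqsubseteq'\downset\set{a'}\dspiff g'(\downset\set{b'})\sqsubseteq g'(\downset\set{a'})$, and chaining these two equivalences yields the claim.

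If a more self-contained argument is preferred, one can use Lemma~\ref{lem:principalback} to rewrite $g'(\downset\set{a'})=\downset f^{-1}(a')$ and $g'(\downset\set{b'})=\downset f^{-1}(b')$, and then prove the two directions directly. For the ``only if'' direction, assuming $b'\preceq'a'$, pick an arbitrary $b\in f^{-1}(b')$; by \eqref{eq:densu} there exists $a\in f^{-1}(a')$ with $b\preceq a$, whence $b\in\downset f^{-1}(a')$, so $f^{-1}(b')\subseteq\downset f^{-1}(a')$, and taking down-closure on the left gives $\downset f^{-1}(b')\subseteq\downset f^{-1}(a')$. For the ``if'' direction, surjectivity of $f$ yields some $b\in f^{-1}(b')$; then $b\in\downset f^{-1}(b')\subseteq\downset f^{-1}(a')$, so there is $a\in f^{-1}(a')$ with $b\preceq a$, and monotonicity \eqref{eq:mon} of $f$ delivers $b'=f(b)\preceq' f(a)=a'$.

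The step that carries the real content is the use of \eqref{eq:densu}: without it, the ``only if'' direction could fail because a given preimage $b$ of $b'$ need not sit below any preimage of $a'$. Everything else is bookkeeping: reflexivity of $\preceq$ supplies the element in the ``if'' direction, and monotonicity of $f$ (which, by \eqref{eq:mon}, is automatic from the construction of $\preceq'$) closes the loop. Since both directions were already essentially encapsulated in the proof of Lemma~\ref{lem:principalback}, no new obstacle is expected beyond invoking the correct density condition.
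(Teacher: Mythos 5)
Your first paragraph is exactly the paper's proof: the paper likewise chains the obvious equivalence $b'\preceq'a' \inliff \downset\set{b'}\sqsubseteq'\downset\set{a'}$ (the principal-ideal embedding, with $\sqsubseteq'$ being plain inclusion on $Q'$) with Lemma~\ref{lem:embed} applied to the principal ideals $\downset\set{a'},\downset\set{b'}\in Q'$. Your second, self-contained argument is a correct and genuinely different route: rewriting via Lemma~\ref{lem:principalback} reduces the claim to $b'\preceq'a' \inliff \downset f^{-1}(b')\subseteq\downset f^{-1}(a')$, and both of your directions go through --- \eqref{eq:densu} plus idempotence of the down-closure for the forward implication, and surjectivity of $f$ plus reflexivity for the converse. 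One small simplification in the converse: having exhibited $b\in f^{-1}(b')$ and $a\in f^{-1}(a')$ with $b\preceq a$, you can conclude $b'\preceq'a'$ directly from the definition \eqref{eq:relp} of $\preceq'$, without even invoking monotonicity \eqref{eq:mon}. Comparing the two: the paper's route buys brevity by reusing the already-proven order embedding (Lemma~\ref{lem:embed}, which itself rests on Lemma~\ref{lem:ggp} and the density conditions), whereas your direct route buys transparency --- it isolates \eqref{eq:densu} as the only nontrivial ingredient --- at the cost of essentially re-running the computation the paper had already packaged inside Lemma~\ref{lem:principalback}. Your diagnosis of where the real content lies is accurate.
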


\begin{proof}
First, we have 
$b'\preceq'a' \inliff \downset\set{b'}\sqsubseteq'\downset\set{a'}$ obviously.
Second, we have
$\downset\set{b'}\sqsubseteq'\downset\set{a'} \inliff g'(\downset\set{b'})\sqsubseteq g'(\downset\set{a'})$,
which is a special case of Lemma~\ref{lem:embed}, for principal ideals in $Q'$.
\end{proof}

\begin{lem}
\label{lem:Qppelements2}
In the above setting
(and assuming \eqref{eq:constr}),
for all $\ve{a}'\in Q'$, we have 
\begin{equation}
\label{eq:gp3}
g'(\ve{a}') =\bigvee_{a'\in\ve{a}'} \downset f^{-1}(a').
\end{equation}
\end{lem}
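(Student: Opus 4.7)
The plan is to combine Lemma~\ref{lem:principalback} (which identifies $\downset f^{-1}(a') = g'(\downset\set{a'})$) with Lemma~\ref{lem:Qppelements} (specifically \eqref{eq:gp1}, which gives $g'(\ve{a}') = \sset{b\in P}{f(b)\in\ve{a}'}$), exploiting the fact that in the lattice $Q$ of nonempty down-sets of $P$, the join $\vee$ coincides with the union $\cup$.

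First I would rewrite the right-hand side of the claimed identity. By Lemma~\ref{lem:principalback}, each summand $\downset f^{-1}(a') $ equals $g'(\downset\set{a'})$, so
\begin{equation*}
\bigvee_{a'\in\ve{a}'} \downset f^{-1}(a') = \bigvee_{a'\in\ve{a}'} g'(\downset\set{a'}) = \bigcup_{a'\in\ve{a}'} g'(\downset\set{a'}),
\end{equation*}
where the second equality uses that $\vee = \cup$ in $Q$ (this is legitimate because each $g'(\downset\set{a'})$ is a nonempty down-set, being in $Q$ by construction). Expanding the down-closure, this set becomes $\sset{b\in P}{\exists a'\in\ve{a}',\, \exists a\in f^{-1}(a') \dispt{s.t.} b\preceq a}$.

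Next I would compare with the left-hand side, which by \eqref{eq:gp1} is $g'(\ve{a}')=\sset{b\in P}{f(b)\in\ve{a}'}$. So the proof reduces to the equivalence, for every $b\in P$, of the two conditions $f(b)\in\ve{a}'$ and $\exists a\in P$ with $b\preceq a$ and $f(a)\in\ve{a}'$. The direction $\Rightarrow$ is immediate by taking $a:=b$. For the direction $\Leftarrow$, from $b\preceq a$ the monotonicity \eqref{eq:mon} of $f$ yields $f(b)\preceq' f(a)$, and since $\ve{a}'$ is a down-set in $P'$ containing $f(a)$, we conclude $f(b)\in\ve{a}'$.

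The argument is essentially bookkeeping, so I do not anticipate a genuine obstacle; the only point that requires care is ensuring that the passage $\vee\to\cup$ in $Q$ is licit, i.e.\ that the elements being joined are nonempty down-sets. This is automatic: $f$ is surjective, hence $f^{-1}(a')\neq\emptyset$ for every $a'\in P'$, so $\downset f^{-1}(a')\neq\emptyset$, and these sets are down-closed by construction. An alternative, equivalent route — which I would mention only if needed — is to invoke \eqref{eq:gp2} to rewrite $g'(\ve{a}') = \bigcup_{a'\in\ve{a}'} f^{-1}(a')$ directly, and then observe that taking the down-closure $\downset$ of each $f^{-1}(a')$ only adds elements $b$ whose image $f(b)\preceq' a'$, which lie in $\ve{a}'$ already by the down-set property, so the union is unchanged.
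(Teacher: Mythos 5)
Your proof is correct, and it diverges from the paper's after the shared opening move. Both arguments begin by invoking Lemma~\ref{lem:principalback} to rewrite each term $\downset f^{-1}(a')$ as $g'(\downset\set{a'})$, but the paper then stays at the level of fibers indexed by $P'$: it applies \eqref{eq:gp2} to each principal ideal to get $\bigcup_{b'\in\downset\set{a'}} f^{-1}(b')$, merges the resulting double union into $\bigcup_{a'\in\ve{a}',\,b'\preceq a'} f^{-1}(b')$, collapses the index set to $\ve{a}'$ using that $\ve{a}'$ is down-closed in $P'$, and closes with \eqref{eq:gp2} again. You instead finish pointwise in $P$: after converting the join to a union (correctly licensed, since each joinand lies in $Q$ and join in the down-set lattice is union), you expand the down-closure and match against the description \eqref{eq:gp1} of $g'(\ve{a}')$, with the nontrivial inclusion supplied by the monotonicity \eqref{eq:mon} of $f$ together with the down-set property of $\ve{a}'$, and the trivial one by reflexivity (taking $a:=b$). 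So the two proofs use the down-closedness of $\ve{a}'$ in different places --- the paper to reindex a union over $P'$, you to close an elementwise implication in $P$ --- and your route swaps the key identity \eqref{eq:gp2} for \eqref{eq:gp1}. What your version buys is an explicit pointwise explanation of why the down-closure adds nothing (any $b\preceq a$ satisfies $f(b)\preceq' f(a)\in\ve{a}'$); what the paper's buys is a slightly shorter uninterrupted chain of equalities. Your appended alternative route, via \eqref{eq:gp2} and the observation that $\downset f^{-1}(a')$ only adds elements already contained in $\bigcup_{c'\in\ve{a}'} f^{-1}(c')$, is essentially the paper's proof up to reindexing, and is also sound.
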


\begin{proof}
We have
$\bigvee_{a'\in\ve{a}'} \downset f^{-1}(a')
= \bigvee_{a'\in\ve{a}'} g'(\downset\set{a'}) 
= \bigvee_{a'\in\ve{a}'} \bigcup_{b'\in\downset\set{a'}} f^{-1}(b')
= \bigcup_{a'\in\ve{a}', b'\preceq a'} f^{-1}(b') 
= \bigcup_{a'\in\ve{a}'} f^{-1}(b')
= g'(\ve{a}')$,
by using Lemma~\ref{lem:principalback}, 
\eqref{eq:gp2}, 
exploiting that $\ve{a}'$ is a down-set \eqref{eq:dset},
then \eqref{eq:gp2} again, respectively.
\end{proof}

Although it will not be used,
the following lemma shows that
the whole construction in this section works also if up-sets are used instead of down-sets.

\begin{lem}
\label{lem:downup2}
Lemmas~\ref{lem:Qppelements}, \ref{lem:ggp}, \ref{lem:embed}, \ref{lem:principal}, \ref{lem:principalback}, \ref{lem:mittomen} and \ref{lem:Qppelements2} 
hold also if up-sets are used instead of down-sets in the construction \eqref{eq:cd}.
\end{lem}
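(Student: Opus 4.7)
The plan is to repeat the arguments of Lemmas~\ref{lem:Qppelements}--\ref{lem:Qppelements2} with the same systematic substitutions that Lemma~\ref{lem:downup} applies to Lemmas~\ref{lem:technical1}--\ref{lem:constrInherit}: replace $\mathcal{O}_\downarrow(P)\setminus\set{\emptyset}$ by $\mathcal{O}_\uparrow(P)\setminus\set{\emptyset}$ in the definition \eqref{eq:buildup} of $(Q,\sqsubseteq)$ (and similarly for $Q'$), replace every $\downset$ by $\upset$, and swap the roles of the density conditions \eqref{eq:densd} and \eqref{eq:densu} wherever one of them is invoked. Since \eqref{eq:constr} is assumed in full, both density conditions remain simultaneously available, so every such swap is harmless.

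First I would invoke Lemma~\ref{lem:downup}, which already supplies the up-set analogues of Lemmas~\ref{lem:technical1}--\ref{lem:constrInherit}, so all the structural background used implicitly in Section~\ref{app:structPI.emb} transfers automatically. From there, Lemmas~\ref{lem:Qppelements} and \ref{lem:ggp} go through verbatim, because the identities \eqref{eq:gp1}, \eqref{eq:gp2} and $g\circ g'=\mathrm{id}$ on $Q'$ are purely set-theoretic consequences of $g$ being the elementwise action of $f$ together with the up-set versions of Lemmas~\ref{lem:technical1} and \ref{lem:technical2}, neither of which distinguishes between down- and up-closures. Lemma~\ref{lem:embed} then also follows immediately, since both directions use only the monotonicity of $g$ and the already transferred Lemma~\ref{lem:Qppelements}.

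The only points where a genuine swap occurs are Lemmas~\ref{lem:principal} and \ref{lem:principalback}. For the up-set version of Lemma~\ref{lem:principal}, proving $g(\upset\set{a})=\upset\set{f(a)}$ reduces to the equivalence $(\exists b\in f^{-1}(b')\dispt{s.t.} a\preceq b) \inliff f(a)\preceq' b'$; the ``only if'' direction is now supplied by \eqref{eq:densu} applied to $a\in f^{-1}(f(a))$ (whereas the downward proof called on \eqref{eq:densd}), and the ``if'' direction is just monotonicity. For the up-set version of Lemma~\ref{lem:principalback}, the identity $g'(\upset\set{a'})=\upset f^{-1}(a')$ analogously reduces to $(\exists a\in f^{-1}(a')\dispt{s.t.} a\preceq b) \inliff a'\preceq' f(b)$, whose nontrivial direction is now supplied by \eqref{eq:densd} applied to $b\in f^{-1}(f(b))$ (dual to the downward proof, which used \eqref{eq:densu}). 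Lemmas~\ref{lem:mittomen} and \ref{lem:Qppelements2} are then formal consequences of the preceding ones and carry over without additional work.

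The hard part is not mathematical but notational: tracking carefully, at each step, which of \eqref{eq:densd} and \eqref{eq:densu} is in play once the embedding of $P$ into $Q$ via principal closures has been reversed from order-preserving (for $\downset$) to order-reversing (for $\upset$). Because the assumption \eqref{eq:constr} supplies both density conditions simultaneously, no obstruction arises, and the whole machinery of Section~\ref{app:structPI.emb}, including the embedding $g':Q'\to Q''\subseteq Q$ via principal up-filters, transfers intact to the up-set setting.
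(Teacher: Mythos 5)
Your proposal is correct and takes essentially the same route as the paper, whose entire proof is the one-line remark that all the proofs can be repeated with slight modifications; you simply make those modifications explicit, correctly locating the only substantive changes in Lemmas~\ref{lem:principal} and \ref{lem:principalback}, where \eqref{eq:densd} and \eqref{eq:densu} exchange roles, with everything else transferring via Lemma~\ref{lem:downup}. One cosmetic slip: in your up-set version of Lemma~\ref{lem:principal} the two labels are interchanged --- the direction supplied by \eqref{eq:densu} (from $f(a)\preceq'b'$ to the existence of $b\in f^{-1}(b')$ with $a\preceq b$) is the ``if'' direction, and monotonicity gives the ``only if'', under the convention used in the paper's own proofs.
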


\begin{proof}
All the proofs can be repeated with slight modifications.
\end{proof}

\subsection{Application to the set and integer partitions}
\label{app:structPI.part}

Now we turn to the case of the main text,
for which the machinery developed in the previous sections is applied.
That is, for the roles of $(P,\finereq)$ and $(P',\finereq')$ we have
the poset $(P_\text{I},\preceq)$ of set partitions \eqref{eq:PI} with the refinement relation \eqref{eq:poI},
and the poset $(\pinv{P}_\text{I},\preceq)$ of integer partitions \eqref{eq:PIp} with the refinement relation \eqref{eq:poIp}.

First, we have the $\sigma\in\DscGrp{S}(L)$ permutations of the elementary subsystems $L$,
acting naturally on the partitions $\xi=\set{X_1,X_2,\dots,X_{\abs{\xi}}}\in P_\text{I}$ 
as $\sigma(\xi)=\bigsset{\sset{\sigma(i)}{i\in X}}{X\in\xi}\in P_\text{I}$.
We will make use of the obvious observations that
all the permutations of a partition $\xi\in P_\text{I}$
are of the same type, $s(\sigma(\xi))=s(\xi)\in\pinv{P}_\text{I}$;
and all partitions of a given type $\pinv{\xi}\in\pinv{P}_\text{I}$
can be transformed into one another by suitable permutations,
that is, for all $\xi_1,\xi_2\in s^{-1}(\pinv{\xi})$,
there exists at least one $\sigma\in\DscGrp{S}(L)$ by which $\xi_2=\sigma(\xi_1)$.
Also, for all $\sigma\in\DscGrp{S}(L)$ permutations,
$\upsilon\preceq\xi$ if and only if $\sigma(\upsilon)\preceq\sigma(\xi)$.
Using these, we can show that \eqref{eq:constr}-like conditions hold for this case.

\begin{lem}
\label{lem:ref.constr}
In the setting of the main text,
the following conditions hold
\begin{subequations}
\label{eq:ref.constr}
\begin{align}
\label{eq:ref.densd}
\begin{split}
&\forall \pinv{\xi},\pinv{\upsilon}\in\pinv{P}_\text{I}, \dispt{if} \pinv{\upsilon}\preceq\pinv{\xi} \dispt{then} \\
&\qquad\forall \xi \in s^{-1}(\pinv{\xi}), \; \exists \upsilon \in s^{-1}(\pinv{\upsilon}), \dispt{s.t.}  \upsilon\preceq\xi;
\end{split} \\
\label{eq:ref.densu}
\begin{split}
&\forall \pinv{\xi},\pinv{\upsilon}\in\pinv{P}_\text{I}, \dispt{if} \pinv{\upsilon}\preceq\pinv{\xi} \dispt{then} \\
&\qquad\forall \upsilon \in s^{-1}(\pinv{\upsilon}), \; \exists \xi \in s^{-1}(\pinv{\xi}), \dispt{s.t.}  \upsilon\preceq\xi;
\end{split} \\
\label{eq:ref.sol}
\begin{split}
&\forall \xi,\upsilon,\zeta\in P_\text{I}, \dispt{if} \zeta\preceq\upsilon\preceq\xi \\
&\qquad\dispt{and} s(\zeta) = s(\xi), \dispt{then} s(\upsilon)=s(\xi).
\end{split}
\end{align}
\end{subequations}
\end{lem}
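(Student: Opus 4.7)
The three conditions fall naturally into two groups: \eqref{eq:ref.densd}--\eqref{eq:ref.densu} are symmetry statements, while \eqref{eq:ref.sol} is a counting statement. The guiding observation is that the map $s$ factors through the action of the symmetric group $\DscGrp{S}(L)$ on $P_\text{I}$: two set partitions have the same type if and only if they lie in the same $\DscGrp{S}(L)$-orbit, and refinement is preserved by permutations, i.e.\ $\upsilon\preceq\xi$ iff $\sigma(\upsilon)\preceq\sigma(\xi)$ for every $\sigma\in\DscGrp{S}(L)$. I would state these observations once at the start of the proof and then invoke them below.

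For \eqref{eq:ref.densd}, I would unfold the definition \eqref{eq:poIp}: the hypothesis $\pinv{\upsilon}\preceq\pinv{\xi}$ means that there exist witnesses $\upsilon_0\in s^{-1}(\pinv{\upsilon})$ and $\xi_0\in s^{-1}(\pinv{\xi})$ with $\upsilon_0\preceq\xi_0$. Given an arbitrary $\xi\in s^{-1}(\pinv{\xi})$, the two set partitions $\xi_0$ and $\xi$ lie in the same $\DscGrp{S}(L)$-orbit, so pick $\sigma\in\DscGrp{S}(L)$ with $\sigma(\xi_0)=\xi$ and set $\upsilon:=\sigma(\upsilon_0)$. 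Then $\upsilon\preceq\xi$ by permutation-invariance of refinement, and $s(\upsilon)=s(\upsilon_0)=\pinv{\upsilon}$, since types are constant on $\DscGrp{S}(L)$-orbits. The proof of \eqref{eq:ref.densu} is identical with the roles of $\xi$ and $\upsilon$ interchanged: starting from the witness pair, transport $\upsilon_0$ to any prescribed $\upsilon\in s^{-1}(\pinv{\upsilon})$ by a suitable $\sigma$, and take $\xi:=\sigma(\xi_0)$.

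For \eqref{eq:ref.sol}, the key is the elementary observation that a refinement of a set partition which has the same length must be the partition itself. More precisely, if $\zeta\preceq\xi$, then each part $Z\in\zeta$ is contained in a unique $X\in\xi$, giving a surjection $\zeta\twoheadrightarrow\xi$, $Z\mapsto X$; the fibres partition each $X\in\xi$ into parts of $\zeta$. The hypothesis $s(\zeta)=s(\xi)$ in particular forces $\abs{\zeta}=h(s(\zeta))=h(s(\xi))=\abs{\xi}$, so this surjection is a bijection, every fibre is a singleton, and therefore $\zeta=\xi$. But then $\zeta\preceq\upsilon\preceq\xi=\zeta$ collapses to $\upsilon=\xi$, whence $s(\upsilon)=s(\xi)$.

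The only mildly delicate step is checking that transitivity/antisymmetry arguments in the first two parts do not implicitly require something beyond what is granted: here they do not, because the permutation action literally gives a witness with the prescribed type on one side. If anything, the possible obstacle would be confusion between the existential ``there exists a witness pair'' formulation of $\preceq$ on $\pinv{P}_\text{I}$ and the stronger ``for all choices on one side'' conclusion of \eqref{eq:ref.densd}--\eqref{eq:ref.densu}; this is precisely the gap that the $\DscGrp{S}(L)$-symmetry fills.
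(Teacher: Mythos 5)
Your proposal is correct and takes essentially the same approach as the paper: for \eqref{eq:ref.densd} and \eqref{eq:ref.densu} you use the identical $\DscGrp{S}(L)$-transport argument (transitivity of the permutation action on each fibre $s^{-1}(\pinv{\xi})$ together with equivariance of $\preceq$ under permutations), and for \eqref{eq:ref.sol} you rely on the same rigidity fact that a refinement with the same number of parts must equal the refined partition. The only cosmetic difference is in \eqref{eq:ref.sol}, where the paper sandwiches $\abs{\upsilon}$ between $\abs{\zeta}$ and $\abs{\xi}$ and then proves $\upsilon=\xi$ by matching parts, while you first deduce $\zeta=\xi$ from your surjection $\zeta\twoheadrightarrow\xi$ being a bijection and then collapse $\zeta\preceq\upsilon\preceq\xi$ by antisymmetry of $\preceq$ on $P_\text{I}$ --- the same counting argument in a slightly different order.
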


\begin{proof}
Recall that $\pinv{\upsilon}\preceq\pinv{\xi}$ by definition if 
there exist $\upsilon_1\in s^{-1}(\pinv{\upsilon})$ and $\xi_1\in s^{-1}(\pinv{\xi})$, such that $\upsilon_1\preceq\xi_1$,
see \eqref{eq:poIp}, in accordance with \eqref{eq:relp}.
Then, 
for all $\xi_2\in s^{-1}(\pinv{\xi})$ we can construct $\upsilon_2$ which is $\upsilon_2\preceq\xi_2$,
simply as $\upsilon_2:=\sigma(\upsilon_1)$ by any of the permutations implementing $\xi_2=\sigma(\xi_1)$,
leading to \eqref{eq:ref.densd}.\\ 
\eqref{eq:ref.densu} can be proven analogously.\\
\eqref{eq:ref.sol} can be proven by noting that
for all $\xi,\upsilon\in P$ 
we have that if $\upsilon\preceq\xi$ then $\abs{\upsilon}\leq\abs{\xi}$, by \eqref{eq:poI};
and $\abs{\xi}=\abs{s(\xi)}$, by \eqref{eq:sI}.
Then $\zeta\preceq\upsilon\preceq\xi$ and $s(\zeta) = s(\xi)$
lead to $\abs{\zeta}\leq\abs{\upsilon}\leq\abs{\xi}$ with $\abs{\zeta}=\abs{\xi}$, 
form which we have $\abs{\upsilon}=\abs{\xi}$.
Now we need that $\abs{\upsilon}=\abs{\xi}$ and $\upsilon\preceq\xi$ lead to $\upsilon=\xi$ (then $s(\upsilon)=s(\xi)$),
which holds, because \eqref{eq:poI} must be fulfilled with the same number of parts:
let us number them accordingly as
$Y_i\subseteq X_i$ for $\xi=\set{X_1,X_2,\dots,X_k}$ and $\upsilon=\set{Y_1,Y_2,\dots,Y_k}$,
with the constraints of being disjoint and $\bigcup_{i=1}^k X_i = \bigcup_{i=1}^k Y_i = L$,
which leads to $Y_i=X_i$.
\end{proof}

Recall that Lemma~\ref{lem:fpo} states that 
if a poset $(P,\preceq)$ is transformed by a function $f$ as given in \eqref{eq:relp},
then $(P',\preceq')=(f(P),f(\preceq))$ is a poset if the conditions \eqref{eq:constr} hold.
Now Lemma~\ref{lem:ref.constr} shows that
for the poset $(P_\text{I},\preceq)$, if transformed by $s$ as given in \eqref{eq:PIp} and \eqref{eq:poIp} in the main text,
the corresponding conditions \eqref{eq:ref.constr} hold,
so $(\pinv{P}_\text{I},\preceq) = (s(P),s(\preceq))$ is a poset. 

Recall also that
if we form the down-set lattices \eqref{eq:buildup}, then
Lemma~\ref{lem:com.set} and Lemma~\ref{lem:com.rel} together state that \eqref{eq:cd} commutes,
Lemma~\ref{lem:constrInherit} states that the conditions in \eqref{eq:constr} are inherited.
Then with Lemma~\ref{lem:downup}, we have that
in the main text,
the three-level construction is well-defined as follows.

\begin{cor}
\label{cor:commut}
The diagram \eqref{eq:cd2} in the main text commutes.
\end{cor}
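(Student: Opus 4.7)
The plan is to verify, level by level, that the hypotheses of the coarsening lemmas already established in Appendices~\ref{app:structPI.coars} and~\ref{app:structPI.du} are met by the map $s$, and then to assemble the conclusions. First I would instantiate the abstract setup at the base level, taking $(P,\preceq) = (P_\text{I},\preceq)$, $(P',\preceq') = (\pinv{P}_\text{I},\preceq)$, and $f = s$. Lemma~\ref{lem:ref.constr} shows that all three conditions \eqref{eq:densd}, \eqref{eq:densu}, \eqref{eq:sol} hold for $s$, and hence Lemma~\ref{lem:fpo} gives that the coarsened relation on $\pinv{P}_\text{I}$ is a genuine partial order. Both $P_\text{I}$ and $\pinv{P}_\text{I}$ carry bottom and top elements, the finest and coarsest (integer) partitions, as needed to run the down-set and up-set constructions on nonempty order ideals and filters.

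For the bottom square of \eqref{eq:cd2}, I would form the nonempty-down-set lattices $Q := \mathcal{O}_\downarrow(P_\text{I}) \setminus \set{\emptyset} = P_\text{II}$ and $Q' := \mathcal{O}_\downarrow(\pinv{P}_\text{I}) \setminus \set{\emptyset}$, and let $g$ be the elementwise lift of $s$ to subsets of $P_\text{I}$, which coincides with the Level~II map \eqref{eq:sII}. Lemma~\ref{lem:com.set} then yields $g(Q) = Q'$, identifying $\pinv{P}_\text{II} = s(P_\text{II})$ with $\mathcal{O}_\downarrow(\pinv{P}_\text{I})\setminus\set{\emptyset}$, while Lemma~\ref{lem:com.rel} gives $g(\sqsubseteq) = \sqsubseteq'$, with both partial orders realized as set inclusion. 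Together these express the commutativity of the bottom square of \eqref{eq:cd2}.

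For the upper square, I would first invoke Lemma~\ref{lem:constrInherit} to pass the three conditions in \eqref{eq:constr} up to \eqref{eq:constr2} at this next level. Lemma~\ref{lem:downup} guarantees that the same commutation machinery also works with $\mathcal{O}_\uparrow\setminus\set{\emptyset}$ in place of $\mathcal{O}_\downarrow\setminus\set{\emptyset}$. Setting the input of the construction to be $(P_\text{II},\preceq)$ and $(\pinv{P}_\text{II},\preceq)$ with $f = g$, and reapplying Lemmas~\ref{lem:com.set} and~\ref{lem:com.rel} in their up-set form, one obtains $\pinv{P}_\text{III} = s(P_\text{III}) = \mathcal{O}_\uparrow(\pinv{P}_\text{II})\setminus\set{\emptyset}$ together with the commutativity of the upper square.

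The real obstacle sits entirely at the base level, namely establishing that $s$ fulfils the three conditions \eqref{eq:constr}; this is precisely what Lemma~\ref{lem:ref.constr} handles, via the transitive action of the symmetric group on set partitions of a given type together with the strict monotonicity of $\abs{\cdot}$ under refinement. Everything above that level is essentially bookkeeping: the abstract commutation and inheritance lemmas iterate cleanly at each level, and Lemma~\ref{lem:downup} supplies the needed switch between down-sets and up-sets for the two squares of \eqref{eq:cd2}.
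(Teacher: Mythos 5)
Your proposal is correct and follows essentially the same route as the paper: Lemma~\ref{lem:ref.constr} verifies \eqref{eq:constr} for $s$ at the base level, Lemma~\ref{lem:fpo} makes $(\pinv{P}_\text{I},\preceq)$ a poset, Lemmas~\ref{lem:com.set} and~\ref{lem:com.rel} give the bottom square, and Lemma~\ref{lem:constrInherit} together with Lemma~\ref{lem:downup} lets the same machinery be reapplied in up-set form for the top square. Your added remarks on bottom/top elements and on both Level-II/III orders being realized as inclusion are consistent with the paper's treatment.
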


With Lemma~\ref{lem:principalback} and \eqref{eq:gp3}, we also have that
for the first two levels of the construction in the main text,
the following identities concerning the principal ideals hold.

\begin{cor}
In the setting of the main text,
for all 
$\pinv{\xi}\in\pinv{P}_\text{I}$ and
$\pinv{\vs{\xi}}\in\pinv{P}_\text{II}$, we have
\begin{subequations}
\begin{align}
\label{eq:labelsPpIpr}
\vee s^{-1} (\downset\set{\pinv{\xi}}) &= \downset s^{-1}(\pinv{\xi}),\\
\label{eq:labelsPpII}
\vee s^{-1}(\pinv{\vs{\xi}}) &= \bigvee_{\pinv{\xi}\in\pinv{\vs{\xi}}}\downset s^{-1}(\pinv{\xi}). 
\end{align}
\end{subequations}
\end{cor}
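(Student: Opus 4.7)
The plan is to apply the abstract machinery of Appendix~\ref{app:structPI.emb} directly, by matching the generic setup there to the partition setting of the main text. I identify $(P,\preceq)$ with $(P_\text{I},\preceq)$, $(P',\preceq')$ with $(\pinv{P}_\text{I},\preceq)$, and the coarsening map $f$ with the type map $s$ of \eqref{eq:sI}. Under this dictionary, $g$ is the elementwise action of $s$ on the nonempty down-sets $P_\text{II}\to\pinv{P}_\text{II}$, and the map $g'$ defined in \eqref{eq:gp} becomes $\pinv{\vs{\xi}}\mapsto \vee s^{-1}(\pinv{\vs{\xi}})$, where $s^{-1}$ denotes the preimage under the Level~II action of $s$.

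First I note that the three hypotheses \eqref{eq:constr} hold in this setting; this is precisely the content of Lemma~\ref{lem:ref.constr}. This licenses the use of every statement in Appendix~\ref{app:structPI.emb} — in particular Lemma~\ref{lem:principalback} and identity \eqref{eq:gp3} — without any further verification.

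For \eqref{eq:labelsPpIpr}, I apply Lemma~\ref{lem:principalback} to the element $\pinv{\xi}\in\pinv{P}_\text{I}$, obtaining $g'(\downset\set{\pinv{\xi}})=\downset s^{-1}(\pinv{\xi})$; unfolding the definition of $g'$ on the left-hand side as $\vee s^{-1}(\downset\set{\pinv{\xi}})$ yields the claimed identity. For \eqref{eq:labelsPpII}, I apply identity \eqref{eq:gp3} to the ideal $\pinv{\vs{\xi}}\in\pinv{P}_\text{II}$, which gives $g'(\pinv{\vs{\xi}})=\bigvee_{\pinv{\xi}\in\pinv{\vs{\xi}}}\downset s^{-1}(\pinv{\xi})$, and again unfolding $g'(\pinv{\vs{\xi}})=\vee s^{-1}(\pinv{\vs{\xi}})$ delivers the claim.

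There is no real obstacle here: the substantive work has already been carried out abstractly in the appendix, and the corollary is just the translation of Lemma~\ref{lem:principalback} and \eqref{eq:gp3} into the partition language once Lemma~\ref{lem:ref.constr} has switched on the appendix's hypotheses. The only bookkeeping care required is to keep straight which occurrence of $s^{-1}$ refers to preimages in $P_\text{I}$ and which to preimages of subsets (i.e., the Level~II action), so that the two identities line up with the two levels of the correspondence $(P,P')\leftrightarrow(P_\text{I},\pinv{P}_\text{I})$ and $(Q,Q')\leftrightarrow(P_\text{II},\pinv{P}_\text{II})$ of diagram \eqref{eq:cd2}.
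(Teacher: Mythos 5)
Your proposal is correct and takes essentially the same route as the paper: the corollary is derived there precisely by specializing Lemma~\ref{lem:principalback} (for \eqref{eq:labelsPpIpr}) and identity \eqref{eq:gp3} (for \eqref{eq:labelsPpII}) under the dictionary $(P,P',f)=(P_\text{I},\pinv{P}_\text{I},s)$, with Lemma~\ref{lem:ref.constr} supplying the conditions \eqref{eq:constr} that license the appendix machinery. Your explicit bookkeeping of the two meanings of $s^{-1}$ (Level~I preimages in $P_\text{I}$ versus preimages under the elementwise Level~II action $g$) is exactly the point the paper's notation leaves implicit, and you have it right.
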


With Lemma~\ref{lem:mittomen} and Lemma~\ref{lem:embed}, we also have that
for the first two levels of the construction in the main text,
the following order isomorphisms hold.

\begin{cor}
In the setting of the main text,
for all 
$\pinv{\xi},\pinv{\upsilon}\in\pinv{P}_\text{I}$ and
$\pinv{\vs{\xi}}, \pinv{\vs{\upsilon}} \in\pinv{P}_\text{II}$, we have
\begin{subequations}
\begin{align}
\label{eq:embedPpIPII}
\pinv{\upsilon}\finereq\pinv{\xi}
  &\dspiff \vee s^{-1}(\downset\set{\pinv{\upsilon}}) \preceq \vee s^{-1}(\downset\set{\pinv{\xi}}),\\
\label{eq:embedPpIIPII}
\pinv{\vs{\upsilon}}\finereq\pinv{\vs{\xi}}
  &\dspiff \vee s^{-1}(\pinv{\vs{\upsilon}}) \preceq \vee s^{-1}(\pinv{\vs{\xi}}). 
\end{align}
\end{subequations}
\end{cor}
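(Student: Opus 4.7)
The plan is to observe that this corollary is essentially a direct transcription of Lemma~\ref{lem:mittomen} and Lemma~\ref{lem:embed} into the concrete setting of set and integer partitions. All the technical work has already been done in Appendices~\ref{app:structPI.coars} and \ref{app:structPI.emb}; what remains is to verify the prerequisites of those lemmas and match the notation.

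First, I would instantiate the abstract framework by taking $(P,\preceq) \mapsto (P_\text{I},\preceq)$, $f \mapsto s$, and $(P',\preceq') \mapsto (\pinv{P}_\text{I},\preceq)$, so that $(Q,\sqsubseteq) = (P_\text{II},\preceq)$ and $(Q',\sqsubseteq') = (\pinv{P}_\text{II},\preceq)$. The hypotheses \eqref{eq:constr} needed by Lemma~\ref{lem:mittomen} and Lemma~\ref{lem:embed} become precisely \eqref{eq:ref.constr}, which has already been established in Lemma~\ref{lem:ref.constr}. Under this identification, the map $g'$ defined abstractly in \eqref{eq:gp} becomes $g'(\pinv{\vs{\xi}}) = \vee s^{-1}(\pinv{\vs{\xi}})$, using the main-text convention that $s^{-1}$ denotes inverse images on either level (so no further unfolding is needed).

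For \eqref{eq:embedPpIIPII}, I would invoke Lemma~\ref{lem:embed} directly: it asserts $g'$ is an order isomorphism from $(Q',\sqsubseteq')$ onto the subposet $(Q'',\sqsubseteq)$ of $Q$, i.e., for all $\pinv{\vs{\upsilon}},\pinv{\vs{\xi}}\in \pinv{P}_\text{II}$,
\begin{equation*}
\pinv{\vs{\upsilon}}\preceq\pinv{\vs{\xi}} \dspiff g'(\pinv{\vs{\upsilon}})\preceq g'(\pinv{\vs{\xi}}),
\end{equation*}
which is exactly \eqref{eq:embedPpIIPII} after substituting the explicit form of $g'$. For \eqref{eq:embedPpIPII}, I would apply Lemma~\ref{lem:mittomen} with $a' = \pinv{\xi}$ and $b' = \pinv{\upsilon}$; alternatively, \eqref{eq:embedPpIPII} follows from \eqref{eq:embedPpIIPII} by taking $\pinv{\vs{\xi}} = \downset\set{\pinv{\xi}}$ and $\pinv{\vs{\upsilon}} = \downset\set{\pinv{\upsilon}}$ and using the trivial equivalence $\pinv{\upsilon}\preceq\pinv{\xi} \iff \downset\set{\pinv{\upsilon}}\subseteq\downset\set{\pinv{\xi}}$.

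Since all substantive work is packaged in the appendix lemmas, there is no real obstacle: the only thing to be careful about is the notational overloading of $s^{-1}$ (inverse image of a partition type on Level~I versus inverse image of an ideal on Level~II). If preferred, one can make this transparent by first citing \eqref{eq:gp2} or \eqref{eq:gp3} to rewrite $g'(\pinv{\vs{\xi}}) = \bigcup_{\pinv{\xi}\in\pinv{\vs{\xi}}} s^{-1}(\pinv{\xi}) = \bigvee_{\pinv{\xi}\in\pinv{\vs{\xi}}}\downset s^{-1}(\pinv{\xi})$, which is the same expression appearing in \eqref{eq:labelsPpII}, and then invoke the two lemmas verbatim.
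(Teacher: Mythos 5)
Your proposal is correct and matches the paper's own derivation exactly: the paper states this corollary as an immediate consequence of Lemma~\ref{lem:mittomen} (for \eqref{eq:embedPpIPII}) and Lemma~\ref{lem:embed} (for \eqref{eq:embedPpIIPII}), with the hypotheses \eqref{eq:constr} instantiated as \eqref{eq:ref.constr} via Lemma~\ref{lem:ref.constr}, precisely as you do. Your added care about the overloaded $s^{-1}$ notation and the alternative reduction of \eqref{eq:embedPpIPII} to \eqref{eq:embedPpIIPII} via principal ideals is consistent with how Lemma~\ref{lem:mittomen} is itself proved in the paper.
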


\section{Miscellaneous proofs}
\label{app:misc}

\subsection{Monotonicity of correlation measures}
\label{app:misc.xiLOmon}

Here we show
the $\xi$-LO monotonicity of the $\xi$-correlation \eqref{eq:CI}, from which
the LO monotonicity of the $\vs{\xi}$-correlation \eqref{eq:CII} follows,
because the latter one is a minimum of some of the former ones.
A quantum channel (completely positive trace preserving map \cite{Petz-2008,Wilde-2013}) $\Phi$ is $\xi$-LO,
if it can be written as $\Phi=\bigotimes_{X\in\xi}\Phi_X$ with quantum channels
$\Phi_X:\mathcal{D}_X\to\mathcal{D}_X$ for all $X\in\xi$.
With this,
\begin{equation*}
\begin{split}
C_\xi\bigl(\Phi(\varrho)\bigr) 
&\equalsref{eq:CI}     \min_{\sigma \in \mathcal{D}_{\xi\text{-unc}}}       D\bigl(\Phi(\varrho)\big\Vert\sigma\bigr)\\
&\lequalsref{eq:free}  \min_{\sigma'\in \Phi(\mathcal{D}_{\xi\text{-unc}})} D\bigl(\Phi(\varrho)\big\Vert\sigma'\bigr)\\
&\equals               \min_{\sigma \in \mathcal{D}_{\xi\text{-unc}}}       D\bigl(\Phi(\varrho)\big\Vert\Phi(\sigma)\bigr)\\
&\lequalsref{eq:contr} \min_{\sigma \in \mathcal{D}_{\xi\text{-unc}}}       D\bigl(\varrho\big\Vert\sigma\bigr)\\
&\equalsref{eq:CI}     C_\xi(\varrho),
\end{split}
\end{equation*}
where the first inequality comes from 
\begin{equation}
\label{eq:free}
\Phi(\mathcal{D}_{\xi\text{-unc}})\subseteq\mathcal{D}_{\xi\text{-unc}},
\end{equation}
which holds for $\xi$-LOs;
and the second inequality comes from the monotonicity of the relative entropy \eqref{eq:D},
\begin{equation}
\label{eq:contr}
D\bigl(\Phi(\varrho)\big\Vert\Phi(\sigma)\bigr)\leq D(\varrho\Vert\sigma),
\end{equation}
which holds for all quantum channels \cite{Petz-2008,Wilde-2013}.

Note that the monotonicity shown here is a particular case of a much more general property 
of monotone distance based \emph{geometric measures} \cite{Bengtsson-2006} in resource theories \cite{Chitambar-2019}:
the monotonicity with respect to free maps, by which the set of free states is mapped onto itself.

\bibliography{dualPP}{}

\begin{thebibliography}{78}
\providecommand{\natexlab}[1]{#1}
\providecommand{\url}[1]{\texttt{#1}}
\expandafter\ifx\csname urlstyle\endcsname\relax
  \providecommand{\doi}[1]{doi: #1}\else
  \providecommand{\doi}{doi: \begingroup \urlstyle{rm}\Url}\fi

\bibitem[Schr{\"o}dinger(1935{\natexlab{a}})]{Schrodinger-1935a}
Ervin Schr{\"o}dinger.
\newblock {D}ie gegenwärtige {S}ituation in der {Q}uantenmechanik.
\newblock \emph{Naturwissenschaften}, 23:\penalty0 807, 1935{\natexlab{a}}.
\newblock \doi{10.1007/BF01491987}.

\bibitem[Schr{\"o}dinger(1935{\natexlab{b}})]{Schrodinger-1935b}
Ervin Schr{\"o}dinger.
\newblock Discussion of probability relations between separated systems.
\newblock \emph{Math. Proc. Camb. Phil. Soc.}, 31:\penalty0 555,
  1935{\natexlab{b}}.
\newblock \doi{10.1017/S0305004100013554}.

\bibitem[Horodecki et~al.(2009)Horodecki, Horodecki, Horodecki, and
  Horodecki]{Horodecki-2009}
Ryszard Horodecki, Pawe\l{} Horodecki, Micha\l{} Horodecki, and Karol
  Horodecki.
\newblock Quantum entanglement.
\newblock \emph{Rev. Mod. Phys.}, 81\penalty0 (2):\penalty0 865--942, Jun 2009.
\newblock \doi{10.1103/RevModPhys.81.865}.

\bibitem[Modi et~al.(2010)Modi, Paterek, Son, Vedral, and
  Williamson]{Modi-2010}
Kavan Modi, Tomasz Paterek, Wonmin Son, Vlatko Vedral, and Mark Williamson.
\newblock Unified view of quantum and classical correlations.
\newblock \emph{Phys. Rev. Lett.}, 104:\penalty0 080501, Feb 2010.
\newblock \doi{10.1103/PhysRevLett.104.080501}.

\bibitem[Nielsen and Chuang(2000)]{Nielsen-2000}
Michael~A. Nielsen and Isaac~L. Chuang.
\newblock \emph{Quantum Computation and Quantum Information}.
\newblock Cambridge University Press, 1 edition, October 2000.
\newblock ISBN 0521635039.
\newblock \doi{10.1017/CBO9780511976667}.

\bibitem[Petz(2008)]{Petz-2008}
D{\'e}nes Petz.
\newblock \emph{Quantum Information Theory and Quantum Statistics}.
\newblock Springer, 2008.
\newblock \doi{10.1007/978-3-540-74636-2}.

\bibitem[Wilde(2013)]{Wilde-2013}
Mark~M. Wilde.
\newblock \emph{Quantum Information Theory}.
\newblock Cambridge University Press, 2013.
\newblock \doi{10.1017/CBO9781139525343}.

\bibitem[Coffman et~al.(2000)Coffman, Kundu, and Wootters]{Coffman-2000}
Valerie Coffman, Joydip Kundu, and William~K. Wootters.
\newblock Distributed entanglement.
\newblock \emph{Phys. Rev. A}, 61:\penalty0 052306, Apr 2000.
\newblock \doi{10.1103/PhysRevA.61.052306}.

\bibitem[Koashi and Winter(2004)]{Koashi-2004}
Masato Koashi and Andreas Winter.
\newblock Monogamy of quantum entanglement and other correlations.
\newblock \emph{Phys. Rev. A}, 69:\penalty0 022309, Feb 2004.
\newblock \doi{10.1103/PhysRevA.69.022309}.

\bibitem[Eisert et~al.(2010)Eisert, Cramer, and Plenio]{Eisert-2010}
Jens Eisert, Marcus Cramer, and Martin~B. Plenio.
\newblock Colloquium: Area laws for the entanglement entropy.
\newblock \emph{Rev. Mod. Phys.}, 82:\penalty0 277--306, Feb 2010.
\newblock \doi{10.1103/RevModPhys.82.277}.

\bibitem[Amico et~al.(2008)Amico, Fazio, Osterloh, and Vedral]{Amico-2008}
Luigi Amico, Rosario Fazio, Andreas Osterloh, and Vlatko Vedral.
\newblock Entanglement in many-body systems.
\newblock \emph{Rev. Mod. Phys.}, 80:\penalty0 517--576, May 2008.
\newblock \doi{10.1103/RevModPhys.80.517}.

\bibitem[Legeza and S{\'o}lyom(2004)]{Legeza-2004}
{\"O}rs Legeza and Jen{\H o} S{\'o}lyom.
\newblock Quantum data compression, quantum information generation, and the
  density-matrix renormalization-group method.
\newblock \emph{Phys. Rev. B}, 70:\penalty0 205118, Nov 2004.
\newblock \doi{10.1103/PhysRevB.70.205118}.

\bibitem[{\relax Sz}alay et~al.(2015){\relax Sz}alay, Pfeffer, Murg, Barcza,
  Verstraete, Schneider, and Legeza]{Szalay-2015a}
{\relax Sz}il{\'a}rd {\relax Sz}alay, Max Pfeffer, Valentin Murg, Gergely
  Barcza, Frank Verstraete, Reinhold Schneider, and {\"O}rs Legeza.
\newblock Tensor product methods and entanglement optimization for ab initio
  quantum chemistry.
\newblock \emph{Int. J. Quantum Chem.}, 115\penalty0 (19):\penalty0 1342--1391,
  2015.
\newblock ISSN 1097-461X.
\newblock \doi{10.1002/qua.24898}.

\bibitem[{\relax Sz}alay et~al.(2017){\relax Sz}alay, Barcza, {\relax
  Sz}ilv{\'a}si, Veis, and Legeza]{Szalay-2017}
{\relax Sz}il{\'a}rd {\relax Sz}alay, Gergely Barcza, Tibor {\relax
  Sz}ilv{\'a}si, Libor Veis, and {\"O}rs Legeza.
\newblock The correlation theory of the chemical bond.
\newblock \emph{Scientific Reports}, 7:\penalty0 2237, May 2017.
\newblock \doi{10.1038/s41598-017-02447-z}.

\bibitem[Werner(1989)]{Werner-1989}
Reinhard~F. Werner.
\newblock Quantum states with {E}instein-{P}odolsky-{R}osen correlations
  admitting a hidden-variable model.
\newblock \emph{Phys. Rev. A}, 40\penalty0 (8):\penalty0 4277--4281, Oct 1989.
\newblock \doi{10.1103/PhysRevA.40.4277}.

\bibitem[Bennett et~al.(1996{\natexlab{a}})Bennett, Bernstein, Popescu, and
  Schumacher]{Bennett-1996a}
Charles~H. Bennett, Herbert~J. Bernstein, Sandu Popescu, and Benjamin
  Schumacher.
\newblock Concentrating partial entanglement by local operations.
\newblock \emph{Phys. Rev. A}, 53:\penalty0 2046--2052, Apr 1996{\natexlab{a}}.
\newblock \doi{10.1103/PhysRevA.53.2046}.

\bibitem[Bennett et~al.(1996{\natexlab{b}})Bennett, DiVincenzo, Smolin, and
  Wootters]{Bennett-1996b}
Charles~H. Bennett, David~P. DiVincenzo, John~A. Smolin, and William~K.
  Wootters.
\newblock Mixed-state entanglement and quantum error correction.
\newblock \emph{Phys. Rev. A}, 54:\penalty0 3824--3851, Nov 1996{\natexlab{b}}.
\newblock \doi{10.1103/PhysRevA.54.3824}.

\bibitem[Chitambar et~al.(2014)Chitambar, Leung, Mancinska, Ozols, and
  Winter]{Chitambar-2014}
Eric Chitambar, Debbie Leung, Laura Mancinska, Maris Ozols, and Andreas Winter.
\newblock Everything you always wanted to know about {LOCC} (but were afraid to
  ask).
\newblock \emph{Commun. Math. Phys.}, 328\penalty0 (1):\penalty0 303--326,
  2014.
\newblock ISSN 0010-3616.
\newblock \doi{10.1007/s00220-014-1953-9}.

\bibitem[D\"ur et~al.(2000)D\"ur, Vidal, and Cirac]{Dur-2000b}
Wolfgang D\"ur, Guifre Vidal, and J.~Ignacio Cirac.
\newblock Three qubits can be entangled in two inequivalent ways.
\newblock \emph{Phys. Rev. A}, 62:\penalty0 062314, Nov 2000.
\newblock \doi{10.1103/PhysRevA.62.062314}.

\bibitem[Nielsen(1999)]{Nielsen-1999}
Michael~A. Nielsen.
\newblock Conditions for a class of entanglement transformations.
\newblock \emph{Phys. Rev. Lett.}, 83:\penalty0 436--439, Jul 1999.
\newblock \doi{10.1103/PhysRevLett.83.436}.

\bibitem[Vidal(1999)]{Vidal-1999}
Guifr\'e Vidal.
\newblock Entanglement of pure states for a single copy.
\newblock \emph{Phys. Rev. Lett.}, 83:\penalty0 1046--1049, Aug 1999.
\newblock \doi{10.1103/PhysRevLett.83.1046}.

\bibitem[Vidal(2000)]{Vidal-2000}
Guifre Vidal.
\newblock Entanglement monotones.
\newblock \emph{J. Mod. Opt}, 47\penalty0 (2):\penalty0 355--376, Feb 2000.
\newblock \doi{10.1080/09500340008244048}.

\bibitem[Horodecki(2001)]{Horodecki-2001}
Micha\l{} Horodecki.
\newblock Entanglement measures.
\newblock \emph{Quant. Inf. Comp.}, 1:\penalty0 3, May 2001.
\newblock \doi{10.26421/QIC1.1}.

\bibitem[Plenio and Virmani(2007)]{Plenio-2007}
Martin~B. Plenio and Shashank Virmani.
\newblock An introduction to entanglement measures.
\newblock \emph{Quant. Inf. Comp.}, 7:\penalty0 1, Jan 2007.
\newblock \doi{10.26421/QIC7.1-2}.

\bibitem[Verstraete et~al.(2002)Verstraete, Dehaene, De~Moor, and
  Verschelde]{Verstraete-2002}
F.~Verstraete, J.~Dehaene, B.~De~Moor, and H.~Verschelde.
\newblock Four qubits can be entangled in nine different ways.
\newblock \emph{Phys. Rev. A}, 65:\penalty0 052112, Apr 2002.
\newblock \doi{10.1103/PhysRevA.65.052112}.

\bibitem[Luque and Thibon(2003)]{Luque-2003}
Jean-Gabriel Luque and Jean-Yves Thibon.
\newblock Polynomial invariants of four qubits.
\newblock \emph{Phys. Rev. A}, 67:\penalty0 042303, Apr 2003.
\newblock \doi{10.1103/PhysRevA.67.042303}.

\bibitem[Eltschka and Siewert(2014)]{Eltschka-2014}
Christopher Eltschka and Jens Siewert.
\newblock Quantifying entanglement resources.
\newblock \emph{Journal of Physics A: Mathematical and Theoretical},
  47\penalty0 (42):\penalty0 424005, 2014.
\newblock \doi{10.1088/1751-8113/47/42/424005}.

\bibitem[Yamasaki et~al.(2018)Yamasaki, Pirker, Murao, D\"ur, and
  Kraus]{Yamasaki-2018}
Hayata Yamasaki, Alexander Pirker, Mio Murao, Wolfgang D\"ur, and Barbara
  Kraus.
\newblock Multipartite entanglement outperforming bipartite entanglement under
  limited quantum system sizes.
\newblock \emph{Phys. Rev. A}, 98:\penalty0 052313, Nov 2018.
\newblock \doi{10.1103/PhysRevA.98.052313}.

\bibitem[Hebenstreit et~al.(2016)Hebenstreit, Spee, and
  Kraus]{Hebenstreit-2016}
Martin. Hebenstreit, Cornelia. Spee, and Barbara Kraus.
\newblock Maximally entangled set of tripartite qutrit states and pure state
  separable transformations which are not possible via local operations and
  classical communication.
\newblock \emph{Phys. Rev. A}, 93:\penalty0 012339, Jan 2016.
\newblock \doi{10.1103/PhysRevA.93.012339}.

\bibitem[Spee et~al.(2016)Spee, de~Vicente, and Kraus]{Spee-2016}
Cornelia Spee, Julio~I. de~Vicente, and Barbara Kraus.
\newblock The maximally entangled set of 4-qubit states.
\newblock \emph{Journal of Mathematical Physics}, 57\penalty0 (5):\penalty0
  052201, 2016.
\newblock \doi{10.1063/1.4946895}.

\bibitem[Gour et~al.(2017)Gour, Kraus, and Wallach]{Gour-2017}
Gilad Gour, Barbara Kraus, and Nolan~R. Wallach.
\newblock Almost all multipartite qubit quantum states have trivial stabilizer.
\newblock \emph{Journal of Mathematical Physics}, 58\penalty0 (9):\penalty0
  092204, 2017.
\newblock \doi{10.1063/1.5003015}.

\bibitem[Sauerwein et~al.(2018)Sauerwein, Wallach, Gour, and
  Kraus]{Sauerwein-2018}
David Sauerwein, Nolan~R. Wallach, Gilad Gour, and Barbara Kraus.
\newblock Transformations among pure multipartite entangled states via local
  operations are almost never possible.
\newblock \emph{Phys. Rev. X}, 8:\penalty0 031020, Jul 2018.
\newblock \doi{10.1103/PhysRevX.8.031020}.

\bibitem[D{\"u}r et~al.(1999)D{\"u}r, Cirac, and Tarrach]{Dur-1999}
Wolfgang D{\"u}r, J.~Ignacio Cirac, and Rolf Tarrach.
\newblock Separability and distillability of multiparticle quantum systems.
\newblock \emph{Phys. Rev. Lett.}, 83:\penalty0 3562--3565, Oct 1999.
\newblock \doi{10.1103/PhysRevLett.83.3562}.

\bibitem[D{\"u}r and Cirac(2000)]{Dur-2000a}
Wolfgang D{\"u}r and J.~Ignacio Cirac.
\newblock Classification of multiqubit mixed states: Separability and
  distillability properties.
\newblock \emph{Phys. Rev. A}, 61:\penalty0 042314, Mar 2000.
\newblock \doi{10.1103/PhysRevA.61.042314}.

\bibitem[Ac\'{i}n et~al.(2001)Ac\'{i}n, Bru\ss{}, Lewenstein, and
  Sanpera]{Acin-2001}
Antonio Ac\'{i}n, Dagmar Bru\ss{}, Maciej Lewenstein, and Anna Sanpera.
\newblock Classification of mixed three-qubit states.
\newblock \emph{Phys. Rev. Lett.}, 87:\penalty0 040401, Jul 2001.
\newblock \doi{10.1103/PhysRevLett.87.040401}.

\bibitem[Nagata et~al.(2002)Nagata, Koashi, and Imoto]{Nagata-2002}
Koji Nagata, Masato Koashi, and Nobuyuki Imoto.
\newblock Configuration of separability and tests for multipartite entanglement
  in {B}ell-type experiments.
\newblock \emph{Phys. Rev. Lett.}, 89:\penalty0 260401, Dec 2002.
\newblock \doi{10.1103/PhysRevLett.89.260401}.

\bibitem[Seevinck and Uffink(2008)]{Seevinck-2008}
Michael Seevinck and Jos Uffink.
\newblock Partial separability and entanglement criteria for multiqubit quantum
  states.
\newblock \emph{Phys. Rev. A}, 78\penalty0 (3):\penalty0 032101, Sep 2008.
\newblock \doi{10.1103/PhysRevA.78.032101}.

\bibitem[{\relax Sz}alay(2011)]{Szalay-2011}
{\relax Sz}il{\'a}rd {\relax Sz}alay.
\newblock Separability criteria for mixed three-qubit states.
\newblock \emph{Phys. Rev. A}, 83:\penalty0 062337, Jun 2011.
\newblock \doi{10.1103/PhysRevA.83.062337}.

\bibitem[{\relax Sz}alay and K{\"o}k{\'e}nyesi(2012)]{Szalay-2012}
{\relax Sz}il{\'a}rd {\relax Sz}alay and Zolt{\'a}n K{\"o}k{\'e}nyesi.
\newblock Partial separability revisited: Necessary and sufficient criteria.
\newblock \emph{Phys. Rev. A}, 86:\penalty0 032341, Sep 2012.
\newblock \doi{10.1103/PhysRevA.86.032341}.

\bibitem[{\relax Sz}alay(2015)]{Szalay-2015b}
{\relax Sz}il{\'a}rd {\relax Sz}alay.
\newblock Multipartite entanglement measures.
\newblock \emph{Phys. Rev. A}, 92:\penalty0 042329, Oct 2015.
\newblock \doi{10.1103/PhysRevA.92.042329}.

\bibitem[{\relax Sz}alay(2018)]{Szalay-2018}
{\relax Sz}il{\'a}rd {\relax Sz}alay.
\newblock The classification of multipartite quantum correlation.
\newblock \emph{Journal of Physics A: Mathematical and Theoretical},
  51\penalty0 (48):\penalty0 485302, 2018.
\newblock \doi{dx.doi.org/10.1088/1751-8121/aae971}.

\bibitem[Brandejs et~al.(2019)Brandejs, Veis, {\relax Sz}alay, Pittner, and
  Legeza]{Brandejs-2018}
Jan Brandejs, Libor Veis, {\relax Sz}il{\'a}rd {\relax Sz}alay, Ji{\u r}{\'i}
  Pittner, and {\"O}rs Legeza.
\newblock Quantum information-based analysis of electron-deficient bonds.
\newblock \emph{The Journal of Chemical Physics}, 150\penalty0 (20):\penalty0
  204117, 2019.
\newblock \doi{10.1063/1.5093497}.

\bibitem[Davey and Priestley(2002)]{Davey-2002}
Brian~A. Davey and Hilary~A. Priestley.
\newblock \emph{Introduction to Lattices and Order}.
\newblock Cambridge University Press, second edition, 2002.
\newblock ISBN 9780521784511.
\newblock \doi{10.1017/CBO9780511809088}.

\bibitem[Roman(2008)]{Roman-2008}
Steven Roman.
\newblock \emph{Lattices and Ordered Sets}.
\newblock Springer, first edition, 2008.
\newblock ISBN 978-0-387-78900-2.
\newblock \doi{10.1007/978-0-387-78901-9}.

\bibitem[Stanley(2012)]{Stanley-2012}
Richard~P. Stanley.
\newblock \emph{Enumerative Combinatorics, Volume 1}.
\newblock Cambridge University Press, second edition, 2012.
\newblock ISBN 9781107602625.
\newblock \doi{10.1017/CBO9781139058520}.

\bibitem[Seevinck and Uffink(2001)]{Seevinck-2001}
Michael Seevinck and Jos Uffink.
\newblock Sufficient conditions for three-particle entanglement and their tests
  in recent experiments.
\newblock \emph{Phys. Rev. A}, 65:\penalty0 012107, Dec 2001.
\newblock \doi{10.1103/PhysRevA.65.012107}.

\bibitem[G{\"u}hne et~al.(2005)G{\"u}hne, T{\'o}th, and Briegel]{Guhne-2005}
Otfried G{\"u}hne, G{\'e}za T{\'o}th, and Hans~J. Briegel.
\newblock Multipartite entanglement in spin chains.
\newblock \emph{New J. Phys.}, 7\penalty0 (1):\penalty0 229, 2005.
\newblock \doi{10.1088/1367-2630/7/1/229}.

\bibitem[G{\"u}hne and T{\'o}th(2006)]{Guhne-2006}
Otfried G{\"u}hne and G{\'e}za T{\'o}th.
\newblock Energy and multipartite entanglement in multidimensional and
  frustrated spin models.
\newblock \emph{Phys. Rev. A}, 73:\penalty0 052319, May 2006.
\newblock \doi{10.1103/PhysRevA.73.052319}.

\bibitem[T{\'o}th and G{\"u}hne(2010)]{Toth-2010}
G{\'e}za T{\'o}th and Otfried G{\"u}hne.
\newblock Separability criteria and entanglement witnesses for symmetric
  quantum states.
\newblock \emph{Appl. Phys. B}, 98\penalty0 (4):\penalty0 617--622, 2010.
\newblock ISSN 0946-2171.
\newblock \doi{10.1007/s00340-009-3839-7}.

\bibitem[S\o{}rensen and M\o{}lmer(2001)]{Sorensen-2001}
Anders~S. S\o{}rensen and Klaus M\o{}lmer.
\newblock Entanglement and extreme spin squeezing.
\newblock \emph{Phys. Rev. Lett.}, 86:\penalty0 4431--4434, May 2001.
\newblock \doi{10.1103/PhysRevLett.86.4431}.

\bibitem[L{\"u}cke et~al.(2014)L{\"u}cke, Peise, Vitagliano, Arlt, Santos,
  T\'oth, and Klempt]{Lucke-2014}
Bernd L{\"u}cke, Jan Peise, Giuseppe Vitagliano, Jan Arlt, Luis Santos, G\'eza
  T\'oth, and Carsten Klempt.
\newblock Detecting multiparticle entanglement of dicke states.
\newblock \emph{Phys. Rev. Lett.}, 112:\penalty0 155304, Apr 2014.
\newblock \doi{10.1103/PhysRevLett.112.155304}.

\bibitem[Chen et~al.(2016)Chen, Ji, Yu, and Zeng]{Chen-2016}
Ji-Yao Chen, Zhengfeng Ji, Nengkun Yu, and Bei Zeng.
\newblock Entanglement depth for symmetric states.
\newblock \emph{Phys. Rev. A}, 94:\penalty0 042333, Oct 2016.
\newblock \doi{10.1103/PhysRevA.94.042333}.

\bibitem[Curchod et~al.(2015)Curchod, Gisin, and Liang]{Curchod-2015}
Florian~John Curchod, Nicolas Gisin, and Yeong-Cherng Liang.
\newblock Quantifying multipartite nonlocality via the size of the resource.
\newblock \emph{Phys. Rev. A}, 91:\penalty0 012121, Jan 2015.
\newblock \doi{10.1103/PhysRevA.91.012121}.

\bibitem[Liang et~al.(2015)Liang, Rosset, Bancal, P\"utz, Barnea, and
  Gisin]{Liang-2015}
Yeong-Cherng Liang, Denis Rosset, Jean-Daniel Bancal, Gilles P\"utz, Tomer~Jack
  Barnea, and Nicolas Gisin.
\newblock Family of bell-like inequalities as device-independent witnesses for
  entanglement depth.
\newblock \emph{Phys. Rev. Lett.}, 114:\penalty0 190401, May 2015.
\newblock \doi{10.1103/PhysRevLett.114.190401}.

\bibitem[Lin et~al.(2019)Lin, Hung, Chen, and Liang]{Lin-2019}
Pei-Sheng Lin, Jui-Chen Hung, Ching-Hsu Chen, and Yeong-Cherng Liang.
\newblock Exploring bell inequalities for the device-independent certification
  of multipartite entanglement depth.
\newblock \emph{Phys. Rev. A}, 99:\penalty0 062338, Jun 2019.
\newblock \doi{10.1103/PhysRevA.99.062338}.

\bibitem[Lu et~al.(2018)Lu, Zhao, Li, Yin, Yuan, Hung, Chen, Li, Liu, Peng,
  Liang, Ma, Chen, and Pan]{Lu-2018}
He~Lu, Qi~Zhao, Zheng-Da Li, Xu-Fei Yin, Xiao Yuan, Jui-Chen Hung, Luo-Kan
  Chen, Li~Li, Nai-Le Liu, Cheng-Zhi Peng, Yeong-Cherng Liang, Xiongfeng Ma,
  Yu-Ao Chen, and Jian-Wei Pan.
\newblock Entanglement structure: Entanglement partitioning in multipartite
  systems and its experimental detection using optimizable witnesses.
\newblock \emph{Phys. Rev. X}, 8:\penalty0 021072, Jun 2018.
\newblock \doi{10.1103/PhysRevX.8.021072}.

\bibitem[T{\'{o}}th and Apellaniz(2014)]{Toth-2014}
G{\'{e}}za T{\'{o}}th and Iagoba Apellaniz.
\newblock Quantum metrology from a quantum information science perspective.
\newblock \emph{Journal of Physics A: Mathematical and Theoretical},
  47\penalty0 (42):\penalty0 424006, oct 2014.
\newblock \doi{10.1088/1751-8113/47/42/424006}.

\bibitem[Hyllus et~al.(2012)Hyllus, Laskowski, Krischek, Schwemmer, Wieczorek,
  Weinfurter, Pezz\'e, and Smerzi]{Hyllus-2012}
Philipp Hyllus, Wies\l{}aw Laskowski, Roland Krischek, Christian Schwemmer,
  Witlef Wieczorek, Harald Weinfurter, Luca Pezz\'e, and Augusto Smerzi.
\newblock Fisher information and multiparticle entanglement.
\newblock \emph{Phys. Rev. A}, 85:\penalty0 022321, Feb 2012.
\newblock \doi{10.1103/PhysRevA.85.022321}.

\bibitem[Gessner et~al.(2018)Gessner, Pezz\`e, and Smerzi]{Gessner-2018}
Manuel Gessner, Luca Pezz\`e, and Augusto Smerzi.
\newblock Sensitivity bounds for multiparameter quantum metrology.
\newblock \emph{Phys. Rev. Lett.}, 121:\penalty0 130503, Sep 2018.
\newblock \doi{10.1103/PhysRevLett.121.130503}.

\bibitem[Qin et~al.(2019)Qin, Gessner, Ren, Deng, Han, Li, Su, Smerzi, and
  Peng]{Qin-2019}
Zhongzhong Qin, Manuel Gessner, Zhihong Ren, Xiaowei Deng, Dongmei Han, Weidong
  Li, Xiaolong Su, Augusto Smerzi, and Kunchi Peng.
\newblock Characterizing the multipartite continuous-variable entanglement
  structure from squeezing coefficients and the fisher information.
\newblock \emph{npj Quantum Information}, 5\penalty0 (3), 2019.
\newblock \doi{10.1038/s41534-018-0119-6}.

\bibitem[T\'oth(2012)]{Toth-2012}
G\'eza T\'oth.
\newblock Multipartite entanglement and high-precision metrology.
\newblock \emph{Phys. Rev. A}, 85:\penalty0 022322, Feb 2012.
\newblock \doi{10.1103/PhysRevA.85.022322}.

\bibitem[Andrews(1984)]{Andrews-1984}
George~E. Andrews.
\newblock \emph{The Theory of Partitions}.
\newblock Cambridge University Press, 1984.
\newblock \doi{10.1017/CBO9780511608650}.

\bibitem[Neumann(1927)]{Neumann-1927}
John~von Neumann.
\newblock Thermodynamik quantenmechanischer {G}esamtheiten.
\newblock \emph{Nachrichten von der Gesellschaft der Wissenschaften zu
  G{\"o}ttingen, Mathematisch-Physikalische Klasse}, 1927:\penalty0 273--291,
  1927.
\newblock URL \url{http://eudml.org/doc/59231}.

\bibitem[Ohya and Petz(1993)]{Ohya-1993}
Masanori Ohya and D{\'e}nes Petz.
\newblock \emph{Quantum Entropy and Its Use}.
\newblock Springer Verlag, 1 edition, October 1993.
\newblock ISBN 978-3-540-20806-8.

\bibitem[Bengtsson and {\r Z}yczkowski(2006)]{Bengtsson-2006}
Ingemar Bengtsson and Karol {\r Z}yczkowski.
\newblock \emph{Geometry of Quantum States: An Introduction to Quantum
  Entanglement}.
\newblock Cambridge University Press, 2006.
\newblock ISBN 0521814510.
\newblock \doi{10.1017/CBO9780511535048}.

\bibitem[Umegaki(1962)]{Umegaki-1962}
Hisaharu Umegaki.
\newblock Conditional expectation in an operator algebra. iv. entropy and
  information.
\newblock \emph{Kodai Math. Sem. Rep.}, 14\penalty0 (2):\penalty0 59--85, 1962.
\newblock \doi{10.2996/kmj/1138844604}.

\bibitem[Lindblad(1973)]{Lindblad-1973}
G{\"o}ran Lindblad.
\newblock Entropy, information and quantum measurements.
\newblock \emph{Communications in Mathematical Physics}, 33\penalty0
  (4):\penalty0 305--322, Dec 1973.
\newblock ISSN 1432-0916.
\newblock \doi{10.1007/BF01646743}.

\bibitem[Hiai and Petz(1991)]{Hiai-1991}
Fumio Hiai and D\'enes Petz.
\newblock The proper formula for relative entropy and its asymptotics in
  quantum probability.
\newblock \emph{Communications in Mathematical Physics}, 143\penalty0
  (1):\penalty0 99--114, 1991.
\newblock ISSN 0010-3616.
\newblock \doi{10.1007/BF02100287}.

\bibitem[{The {O}n-{L}ine {E}ncyclopedia of {I}nteger
  {S}equences}({\natexlab{a}})]{oeisA000110}
{The {O}n-{L}ine {E}ncyclopedia of {I}nteger {S}equences}.
\newblock {A}000110: Bell or exponential numbers: ways of placing n labeled
  balls into n indistinguishable boxes, {\natexlab{a}}.
\newblock URL \url{http://oeis.org/A000110}.

\bibitem[Herbut(2004)]{Herbut-2004}
Fedor Herbut.
\newblock On mutual information in multipartite quantum states and equality in
  strong subadditivity of entropy.
\newblock \emph{Journal of Physics A: Mathematical and General}, 37\penalty0
  (10):\penalty0 3535, 2004.
\newblock \doi{10.1088/0305-4470/37/10/016}.

\bibitem[Uhlmann(1998)]{Uhlmann-1998}
Armin Uhlmann.
\newblock Entropy and optimal decompositions of states relative to a maximal
  commutative subalgebra.
\newblock \emph{Open Sys. Inf. Dyn.}, 5:\penalty0 209--228, 1998.
\newblock ISSN 1230-1612.
\newblock \doi{10.1023/A:1009664331611}.

\bibitem[Uhlmann(2010)]{Uhlmann-2010}
Armin Uhlmann.
\newblock Roofs and convexity.
\newblock \emph{Entropy}, 12:\penalty0 1799, July 2010.
\newblock \doi{10.3390/e12071799}.

\bibitem[Han and Kye(2019)]{Han-2018}
Kyung~Hoon Han and Seung-Hyeok Kye.
\newblock Construction of three-qubit biseparable states distinguishing kinds
  of entanglement in a partial separability classification.
\newblock \emph{Phys. Rev. A}, 99:\penalty0 032304, Mar 2019.
\newblock \doi{10.1103/PhysRevA.99.032304}.

\bibitem[{The {O}n-{L}ine {E}ncyclopedia of {I}nteger
  {S}equences}({\natexlab{b}})]{oeisA000041}
{The {O}n-{L}ine {E}ncyclopedia of {I}nteger {S}equences}.
\newblock {A}000041: Number of partitions of n: ways of placing n unlabelled
  balls into n indistinguishable boxes, {\natexlab{b}}.
\newblock URL \url{http://oeis.org/A000041}.

\bibitem[Chitambar and Gour(2019)]{Chitambar-2019}
Eric Chitambar and Gilad Gour.
\newblock Quantum resource theories.
\newblock \emph{Rev. Mod. Phys.}, 91:\penalty0 025001, Apr 2019.
\newblock \doi{10.1103/RevModPhys.91.025001}.

\bibitem[Brylawski(1973)]{Brylawski-1973}
Thomas Brylawski.
\newblock The lattice of integer partitions.
\newblock \emph{Discrete Mathematics}, 6:\penalty0 201, 1973.
\newblock \doi{10.1016/0012-365X(73)90094-0}.

\bibitem[Birkhoff(1973)]{Birkhoff-1973}
Garrett Birkhoff.
\newblock \emph{Lattice Theory}.
\newblock American Mathematical Society, New York, 3rd edition, 1973.

\bibitem[Dyson(1944)]{Dyson-1944}
Freeman~J. Dyson.
\newblock Some guesses in the theory of partitions.
\newblock \emph{Eureka}, 8:\penalty0 10--15, 1944.

\end{thebibliography}

\end{document}